\newcommand{\argemp}[2]
  {\if&#1&\else#2\fi}
\newcommand{\argdef}[2]
  {\if&#1&#2\else#1\fi}
\newcommand{\argint}[3]
  {\if&#2&\else#1#2#3\fi}
\newcommand{\argext}[3]
  {\if&#1&#3\else#1\if&#3&\else#2#3\fi\fi}
\newcommandx{\argsubsup}[3][2=, 3=]
  {\def\argsubscript{{#2}}\def\argsuperscript{{#3}}#1}
\newcommandx{\argind}[9][2=, 3=, 4=, 5=, 6=, 7=, 8=, 9=]
  {%
  \switch[#1=]%
    \case{0}#2%
    \case{1}#3%
    \case{2}#4%
    \case{3}#5%
    \case{4}#6%
    \case{5}#7%
    \case{6}#8%
    \case{7}#9%
    \otherwise\ensuremath{\clubsuit}%
  \endswitch%
  }
\newcommand{\arga}[1]
  {#1}
\newcommand{\argb}[2]
  {\argext{\arga{#1}}{, \allowbreak}{#2}}
\newcommand{\argc}[3]
  {\argext{\argb{#1}{#2}}{, \allowbreak}{#3}}
\newcommand{\argd}[4]
  {\argext{\argc{#1}{#2}{#3}}{, \allowbreak}{#4}}
\newcommand{\arge}[5]
  {\argext{\argd{#1}{#2}{#3}{#4}}{, \allowbreak}{#5}}
\newcommand{\argf}[6]
  {\argext{\arge{#1}{#2}{#3}{#4}{#5}}{, \allowbreak}{#6}}
\newcommand{\argg}[7]
  {\argext{\argf{#1}{#2}{#3}{#4}{#5}{#6}}{, \allowbreak}{#7}}
\newcommand{\argh}[8]
  {\argext{\argg{#1}{#2}{#3}{#4}{#5}{#6}{#7}}{, \allowbreak}{#8}}
\newcommand{\argi}[9]
  {\argext{\argh{#1}{#2}{#3}{#4}{#5}{#6}{#7}{#8}}{, \allowbreak}{#9}}
\newcommand{\argj}[9]
  {%
  \def\valarga{#1}%
  \def\valargb{#2}%
  \def\valargc{#3}%
  \def\valargd{#4}%
  \def\valarge{#5}%
  \def\valargf{#6}%
  \def\valargg{#7}%
  \def\valargh{#8}%
  \def\valargi{#9}%
  \argauxj%
  }
\newcommand{\argauxj}[1]
  {%
  \argext%
    {
    \argi
      {\valarga} {\valargb} {\valargc} {\valargd} {\valarge} {\valargf}
      {\valargg} {\valargh} {\valargi}
    }
    {, \allowbreak}{#1}%
  }
\newcommand{\argk}[9]
  {%
  \def\valarga{#1}%
  \def\valargb{#2}%
  \def\valargc{#3}%
  \def\valargd{#4}%
  \def\valarge{#5}%
  \def\valargf{#6}%
  \def\valargg{#7}%
  \def\valargh{#8}%
  \def\valargi{#9}%
  \argauxk%
  }
\newcommand{\argauxk}[2]
  {\argext{\argauxj{#1}}{, \allowbreak}{#2}}
\newcommand{\argl}[9]
  {%
  \def\valarga{#1}%
  \def\valargb{#2}%
  \def\valargc{#3}%
  \def\valargd{#4}%
  \def\valarge{#5}%
  \def\valargf{#6}%
  \def\valargg{#7}%
  \def\valargh{#8}%
  \def\valargi{#9}%
  \argauxl%
  }
\newcommand{\argauxl}[3]
  {\argext{\argauxk{#1}{#2}}{, \allowbreak}{#3}}
\newcommand{\argm}[9]
  {%
  \def\valarga{#1}%
  \def\valargb{#2}%
  \def\valargc{#3}%
  \def\valargd{#4}%
  \def\valarge{#5}%
  \def\valargf{#6}%
  \def\valargg{#7}%
  \def\valargh{#8}%
  \def\valargi{#9}%
  \argauxm%
  }
\newcommand{\argauxm}[4]
  {\argext{\argauxl{#1}{#2}{#3}}{, \allowbreak}{#4}}
\newcommand{\argn}[9]
  {%
  \def\valarga{#1}%
  \def\valargb{#2}%
  \def\valargc{#3}%
  \def\valargd{#4}%
  \def\valarge{#5}%
  \def\valargf{#6}%
  \def\valargg{#7}%
  \def\valargh{#8}%
  \def\valargi{#9}%
  \argauxn%
  }
\newcommand{\argauxn}[5]
  {\argext{\argauxm{#1}{#2}{#3}{#4}}{, \allowbreak}{#5}}
\newcommand{\argo}[9]
  {%
  \def\valarga{#1}%
  \def\valargb{#2}%
  \def\valargc{#3}%
  \def\valargd{#4}%
  \def\valarge{#5}%
  \def\valargf{#6}%
  \def\valargg{#7}%
  \def\valargh{#8}%
  \def\valargi{#9}%
  \argauxo%
  }
\newcommand{\argauxo}[6]
  {\argext{\argauxn{#1}{#2}{#3}{#4}{#5}}{, \allowbreak}{#6}}
\newcommand{\argp}[9]
  {%
  \def\valarga{#1}%
  \def\valargb{#2}%
  \def\valargc{#3}%
  \def\valargd{#4}%
  \def\valarge{#5}%
  \def\valargf{#6}%
  \def\valargg{#7}%
  \def\valargh{#8}%
  \def\valargi{#9}%
  \argauxp%
  }
\newcommand{\argauxp}[7]
  {\argext{\argauxo{#1}{#2}{#3}{#4}{#5}{#6}}{, \allowbreak}{#7}}
\newcommand{\argq}[9]
  {%
  \def\valarga{#1}%
  \def\valargb{#2}%
  \def\valargc{#3}%
  \def\valargd{#4}%
  \def\valarge{#5}%
  \def\valargf{#6}%
  \def\valargg{#7}%
  \def\valargh{#8}%
  \def\valargi{#9}%
  \argauxq%
  }
\newcommand{\argauxq}[8]
  {\argext{\argauxp{#1}{#2}{#3}{#4}{#5}{#6}{#7}}{, \allowbreak}{#8}}
\newcommand{\argr}[9]
  {%
  \def\valarga{#1}%
  \def\valargb{#2}%
  \def\valargc{#3}%
  \def\valargd{#4}%
  \def\valarge{#5}%
  \def\valargf{#6}%
  \def\valargg{#7}%
  \def\valargh{#8}%
  \def\valargi{#9}%
  \argauxr%
  }
\newcommand{\argauxr}[9]
  {\argext{\argauxq{#1}{#2}{#3}{#4}{#5}{#6}{#7}{#8}}{, \allowbreak}{#9}}
\newcommand{\txtfnt}[2][]
  {{%
  \IfStrEq{#1}{}
    {#2}
    {%
    \StrLeft{#1}{2}[\optbgn]%
    \StrGobbleLeft{#1}{2}[\optend]%
    \IfStrEqCase{\optbgn}
      {%
      {Rm}{\rmfamily\txtfnt[\optend]{#2}}%
      {Sf}{\sffamily\txtfnt[\optend]{#2}}%
      {Tt}{\ttfamily\txtfnt[\optend]{#2}}%
      {Up}{\upshape\txtfnt[\optend]{#2}}%
      {It}{\itshape\txtfnt[\optend]{#2}}%
      {Sl}{\slshape\txtfnt[\optend]{#2}}%
      {Sc}{\scshape\txtfnt[\optend]{#2}}%
      {Md}{\mdseries\txtfnt[\optend]{#2}}%
      {Bf}{\bfseries\txtfnt[\optend]{#2}}%
      {Em}{\emph{\txtfnt[\optend]{#2}}}%
      }
      [\ensuremath{\clubsuit}]%
    }%
  }}
\newcommand{\txtsub}[2][]
  {\argemp{#2}{\ensuremath{_{\text{\txtfnt[#1]{#2}}}}}}
\newcommand{\txtsup}[2][]
  {\argemp{#2}{\ensuremath{^{\text{\txtfnt[#1]{#2}}}}}}
\newcommandx{\txt}[4][1=, 3=, 4=]
  {\text{\txtfnt[#1]{#2}\ensuremath{\txtsub[#1]{#3}\txtsup[#1]{#4}}}\xspace}
\newcommandx{\txtarg}[5][1=, 3=, 4=]
  {{\txt[#1]{#2}[#3][#4]\argint{(}{#5}{)}}\xspace}
\newcommand{\txtstyname}{RmScMd}
\newcommand{\txtname}[1][]
  {\txt[\argdef{#1}{\txtstyname}]}
\newcommand{\txtargname}[1][]
  {\txtarg[\argdef{#1}{\txtstyname}]}
\newcommand{\txtstyabr}{Em}
\newcommand{\txtabr}[1][]
  {\txt[\argdef{#1}{\txtstyabr}]}
\newcommandx{\mthfnt}[3][1=, 2=0]
  {{%
  \IfStrEqCase{#1}
    {%
    {}%
      {#3}%
    {Name}%
      {%
      \IfStrEqCase{#2}
        {%
        {0}{\mathcal{#3}}%
        {1}{\mathscr{#3}}%
        {2}{\mathfrak{#3}}%
        {3}{\mathbf{#3}}%
        }
        [\ensuremath{\clubsuit}]%
      }%
    {Set}%
      {%
      \IfStrEqCase{#2}
        {%
        {0}{\mathrm{#3}}%
        {1}{\mathbb{#3}}%
        {2}{\mathsf{#3}}%
        {3}{\mathtt{#3}}%
        }
        [\ensuremath{\clubsuit}]%
      }%
    {Fun}%
      {%
      \IfStrEqCase{#2}
        {%
        {0}{\mathsf{#3}}%
        {1}{\mathrm{#3}}%
        }
        [\ensuremath{\clubsuit}]%
      }%
    {Rel}%
      {%
      \IfStrEqCase{#2}
        {%
        {0}{\mathit{#3}}%
        {1}{\mathtt{#3}}%
        }
        [\ensuremath{\clubsuit}]%
      }%
    {Sym}%
      {%
      \IfStrEqCase{#2}
        {%
        {0}{\mathtt{#3}}%
        {1}{\mathbf{#3}}%
        }
        [\ensuremath{\clubsuit}]%
      }%
    {Elm}%
      {\mathnormal{#3}}
    }
    [\ensuremath{\clubsuit}]%
  }}
\newcommand{\mthsub}[1]
  {\argemp{#1}{\ensuremath{_{\mathnormal{#1}}}}}
\newcommand{\mthsup}[1]
  {\argemp{#1}{\ensuremath{^{\mathnormal{#1}}}}}
\newcommandx{\mth}[5][1=, 2=0, 4=, 5=]
  {{\ensuremath{\mthfnt[#1][#2]{#3}\mthsub{#4}\mthsup{#5}}}}
\newcommandx{\mtharg}[6][1=, 2=0, 4=, 5=]
  {{\mth[#1][#2]{#3}[#4][#5]\ensuremath{\argint{\!\left(}{#6}{\right)}}}}
\newcommand{\mthempty}
  {\mth[][]}
\newcommand{\mthstyname}{0}
\newcommand{\mthname}[1][]
  {\mth[Name][\argdef{#1}{\mthstyname}]}
\newcommand{\mthstyset}{0}
\newcommand{\mthset}[1][]
  {\mth[Set][\argdef{#1}{\mthstyset}]}
\newcommand{\mthargset}[1][]
  {\mtharg[Set][\argdef{#1}{\mthstyset}]}
\newcommand{\mthstyfun}{0}
\newcommand{\mthfun}[1][]
  {\mth[Fun][\argdef{#1}{\mthstyfun}]}
\newcommand{\mthargfun}[1][]
  {\mtharg[Fun][\argdef{#1}{\mthstyfun}]}
\newcommand{\mthstyrel}{0}
\newcommand{\mthrel}[1][]
  {\mth[Rel][\argdef{#1}{\mthstyrel}]}
\newcommand{\mthstysym}{0}
\newcommand{\mthsym}[1][]
  {\mth[Sym][\argdef{#1}{\mthstysym}]}
\newcommand{\mthstyelm}{0}
\newcommand{\mthelm}[1][]
  {\mth[Elm][\argdef{#1}{\mthstyelm}]}
\newcommandx{\AName}[4][1=, 2=, 3=, 4=]{\mthname[#4]{A#3}[#1][#2]}
\newcommandx{\BName}[4][1=, 2=, 3=, 4=]{\mthname[#4]{B#3}[#1][#2]}
\newcommandx{\CName}[4][1=, 2=, 3=, 4=]{\mthname[#4]{C#3}[#1][#2]}
\newcommandx{\DName}[4][1=, 2=, 3=, 4=]{\mthname[#4]{D#3}[#1][#2]}
\newcommandx{\EName}[4][1=, 2=, 3=, 4=]{\mthname[#4]{E#3}[#1][#2]}
\newcommandx{\FName}[4][1=, 2=, 3=, 4=]{\mthname[#4]{F#3}[#1][#2]}
\newcommandx{\GName}[4][1=, 2=, 3=, 4=]{\mthname[#4]{G#3}[#1][#2]}
\newcommandx{\HName}[4][1=, 2=, 3=, 4=]{\mthname[#4]{H#3}[#1][#2]}
\newcommandx{\IName}[4][1=, 2=, 3=, 4=]{\mthname[#4]{I#3}[#1][#2]}
\newcommandx{\JName}[4][1=, 2=, 3=, 4=]{\mthname[#4]{J#3}[#1][#2]}
\newcommandx{\KName}[4][1=, 2=, 3=, 4=]{\mthname[#4]{K#3}[#1][#2]}
\newcommandx{\LName}[4][1=, 2=, 3=, 4=]{\mthname[#4]{L#3}[#1][#2]}
\newcommandx{\MName}[4][1=, 2=, 3=, 4=]{\mthname[#4]{M#3}[#1][#2]}
\newcommandx{\NName}[4][1=, 2=, 3=, 4=]{\mthname[#4]{N#3}[#1][#2]}
\newcommandx{\OName}[4][1=, 2=, 3=, 4=]{\mthname[#4]{O#3}[#1][#2]}
\newcommandx{\PName}[4][1=, 2=, 3=, 4=]{\mthname[#4]{P#3}[#1][#2]}
\newcommandx{\QName}[4][1=, 2=, 3=, 4=]{\mthname[#4]{Q#3}[#1][#2]}
\newcommandx{\RName}[4][1=, 2=, 3=, 4=]{\mthname[#4]{R#3}[#1][#2]}
\newcommandx{\SName}[4][1=, 2=, 3=, 4=]{\mthname[#4]{S#3}[#1][#2]}
\newcommandx{\TName}[4][1=, 2=, 3=, 4=]{\mthname[#4]{T#3}[#1][#2]}
\newcommandx{\UName}[4][1=, 2=, 3=, 4=]{\mthname[#4]{U#3}[#1][#2]}
\newcommandx{\VName}[4][1=, 2=, 3=, 4=]{\mthname[#4]{V#3}[#1][#2]}
\newcommandx{\WName}[4][1=, 2=, 3=, 4=]{\mthname[#4]{W#3}[#1][#2]}
\newcommandx{\XName}[4][1=, 2=, 3=, 4=]{\mthname[#4]{X#3}[#1][#2]}
\newcommandx{\YName}[4][1=, 2=, 3=, 4=]{\mthname[#4]{Y#3}[#1][#2]}
\newcommandx{\ZName}[4][1=, 2=, 3=, 4=]{\mthname[#4]{Z#3}[#1][#2]}
\newcommandx{\ASet}[4][1=, 2=, 3=, 4=]{\mthset[#4]{A#3}[#1][#2]}
\newcommandx{\BSet}[4][1=, 2=, 3=, 4=]{\mthset[#4]{B#3}[#1][#2]}
\newcommandx{\CSet}[4][1=, 2=, 3=, 4=]{\mthset[#4]{C#3}[#1][#2]}
\newcommandx{\DSet}[4][1=, 2=, 3=, 4=]{\mthset[#4]{D#3}[#1][#2]}
\newcommandx{\ESet}[4][1=, 2=, 3=, 4=]{\mthset[#4]{E#3}[#1][#2]}
\newcommandx{\FSet}[4][1=, 2=, 3=, 4=]{\mthset[#4]{F#3}[#1][#2]}
\newcommandx{\GSet}[4][1=, 2=, 3=, 4=]{\mthset[#4]{G#3}[#1][#2]}
\newcommandx{\HSet}[4][1=, 2=, 3=, 4=]{\mthset[#4]{H#3}[#1][#2]}
\newcommandx{\ISet}[4][1=, 2=, 3=, 4=]{\mthset[#4]{I#3}[#1][#2]}
\newcommandx{\JSet}[4][1=, 2=, 3=, 4=]{\mthset[#4]{J#3}[#1][#2]}
\newcommandx{\KSet}[4][1=, 2=, 3=, 4=]{\mthset[#4]{K#3}[#1][#2]}
\newcommandx{\LSet}[4][1=, 2=, 3=, 4=]{\mthset[#4]{L#3}[#1][#2]}
\newcommandx{\MSet}[4][1=, 2=, 3=, 4=]{\mthset[#4]{M#3}[#1][#2]}
\newcommandx{\NSet}[4][1=, 2=, 3=, 4=]{\mthset[#4]{N#3}[#1][#2]}
\newcommandx{\OSet}[4][1=, 2=, 3=, 4=]{\mthset[#4]{O#3}[#1][#2]}
\newcommandx{\PSet}[4][1=, 2=, 3=, 4=]{\mthset[#4]{P#3}[#1][#2]}
\newcommandx{\QSet}[4][1=, 2=, 3=, 4=]{\mthset[#4]{Q#3}[#1][#2]}
\newcommandx{\RSet}[4][1=, 2=, 3=, 4=]{\mthset[#4]{R#3}[#1][#2]}
\newcommandx{\SSet}[4][1=, 2=, 3=, 4=]{\mthset[#4]{S#3}[#1][#2]}
\newcommandx{\TSet}[4][1=, 2=, 3=, 4=]{\mthset[#4]{T#3}[#1][#2]}
\newcommandx{\USet}[4][1=, 2=, 3=, 4=]{\mthset[#4]{U#3}[#1][#2]}
\newcommandx{\VSet}[4][1=, 2=, 3=, 4=]{\mthset[#4]{V#3}[#1][#2]}
\newcommandx{\WSet}[4][1=, 2=, 3=, 4=]{\mthset[#4]{W#3}[#1][#2]}
\newcommandx{\XSet}[4][1=, 2=, 3=, 4=]{\mthset[#4]{X#3}[#1][#2]}
\newcommandx{\YSet}[4][1=, 2=, 3=, 4=]{\mthset[#4]{Y#3}[#1][#2]}
\newcommandx{\ZSet}[4][1=, 2=, 3=, 4=]{\mthset[#4]{Z#3}[#1][#2]}
\newcommandx{\aSet}[4][1=, 2=, 3=, 4=]{\mthset[#4]{a#3}[#1][#2]}
\newcommandx{\bSet}[4][1=, 2=, 3=, 4=]{\mthset[#4]{b#3}[#1][#2]}
\newcommandx{\cSet}[4][1=, 2=, 3=, 4=]{\mthset[#4]{c#3}[#1][#2]}
\newcommandx{\dSet}[4][1=, 2=, 3=, 4=]{\mthset[#4]{d#3}[#1][#2]}
\newcommandx{\eSet}[4][1=, 2=, 3=, 4=]{\mthset[#4]{e#3}[#1][#2]}
\newcommandx{\fSet}[4][1=, 2=, 3=, 4=]{\mthset[#4]{f#3}[#1][#2]}
\newcommandx{\gSet}[4][1=, 2=, 3=, 4=]{\mthset[#4]{g#3}[#1][#2]}
\newcommandx{\hSet}[4][1=, 2=, 3=, 4=]{\mthset[#4]{h#3}[#1][#2]}
\newcommandx{\iSet}[4][1=, 2=, 3=, 4=]{\mthset[#4]{i#3}[#1][#2]}
\newcommandx{\jSet}[4][1=, 2=, 3=, 4=]{\mthset[#4]{j#3}[#1][#2]}
\newcommandx{\kSet}[4][1=, 2=, 3=, 4=]{\mthset[#4]{k#3}[#1][#2]}
\newcommandx{\lSet}[4][1=, 2=, 3=, 4=]{\mthset[#4]{l#3}[#1][#2]}
\newcommandx{\mSet}[4][1=, 2=, 3=, 4=]{\mthset[#4]{m#3}[#1][#2]}
\newcommandx{\nSet}[4][1=, 2=, 3=, 4=]{\mthset[#4]{n#3}[#1][#2]}
\newcommandx{\oSet}[4][1=, 2=, 3=, 4=]{\mthset[#4]{o#3}[#1][#2]}
\newcommandx{\pSet}[4][1=, 2=, 3=, 4=]{\mthset[#4]{p#3}[#1][#2]}
\newcommandx{\qSet}[4][1=, 2=, 3=, 4=]{\mthset[#4]{q#3}[#1][#2]}
\newcommandx{\rSet}[4][1=, 2=, 3=, 4=]{\mthset[#4]{r#3}[#1][#2]}
\newcommandx{\sSet}[4][1=, 2=, 3=, 4=]{\mthset[#4]{s#3}[#1][#2]}
\newcommandx{\tSet}[4][1=, 2=, 3=, 4=]{\mthset[#4]{t#3}[#1][#2]}
\newcommandx{\uSet}[4][1=, 2=, 3=, 4=]{\mthset[#4]{u#3}[#1][#2]}
\newcommandx{\vSet}[4][1=, 2=, 3=, 4=]{\mthset[#4]{v#3}[#1][#2]}
\newcommandx{\wSet}[4][1=, 2=, 3=, 4=]{\mthset[#4]{w#3}[#1][#2]}
\newcommandx{\xSet}[4][1=, 2=, 3=, 4=]{\mthset[#4]{x#3}[#1][#2]}
\newcommandx{\ySet}[4][1=, 2=, 3=, 4=]{\mthset[#4]{y#3}[#1][#2]}
\newcommandx{\zSet}[4][1=, 2=, 3=, 4=]{\mthset[#4]{z#3}[#1][#2]}
\newcommandx{\AFun}[4][1=, 2=, 3=, 4=]{\mthfun[#4]{A#3}[#1][#2]}
\newcommandx{\BFun}[4][1=, 2=, 3=, 4=]{\mthfun[#4]{B#3}[#1][#2]}
\newcommandx{\CFun}[4][1=, 2=, 3=, 4=]{\mthfun[#4]{C#3}[#1][#2]}
\newcommandx{\DFun}[4][1=, 2=, 3=, 4=]{\mthfun[#4]{D#3}[#1][#2]}
\newcommandx{\EFun}[4][1=, 2=, 3=, 4=]{\mthfun[#4]{E#3}[#1][#2]}
\newcommandx{\FFun}[4][1=, 2=, 3=, 4=]{\mthfun[#4]{F#3}[#1][#2]}
\newcommandx{\GFun}[4][1=, 2=, 3=, 4=]{\mthfun[#4]{G#3}[#1][#2]}
\newcommandx{\HFun}[4][1=, 2=, 3=, 4=]{\mthfun[#4]{H#3}[#1][#2]}
\newcommandx{\IFun}[4][1=, 2=, 3=, 4=]{\mthfun[#4]{I#3}[#1][#2]}
\newcommandx{\JFun}[4][1=, 2=, 3=, 4=]{\mthfun[#4]{J#3}[#1][#2]}
\newcommandx{\KFun}[4][1=, 2=, 3=, 4=]{\mthfun[#4]{K#3}[#1][#2]}
\newcommandx{\LFun}[4][1=, 2=, 3=, 4=]{\mthfun[#4]{L#3}[#1][#2]}
\newcommandx{\MFun}[4][1=, 2=, 3=, 4=]{\mthfun[#4]{M#3}[#1][#2]}
\newcommandx{\NFun}[4][1=, 2=, 3=, 4=]{\mthfun[#4]{N#3}[#1][#2]}
\newcommandx{\OFun}[4][1=, 2=, 3=, 4=]{\mthfun[#4]{O#3}[#1][#2]}
\newcommandx{\PFun}[4][1=, 2=, 3=, 4=]{\mthfun[#4]{P#3}[#1][#2]}
\newcommandx{\QFun}[4][1=, 2=, 3=, 4=]{\mthfun[#4]{Q#3}[#1][#2]}
\newcommandx{\RFun}[4][1=, 2=, 3=, 4=]{\mthfun[#4]{R#3}[#1][#2]}
\newcommandx{\SFun}[4][1=, 2=, 3=, 4=]{\mthfun[#4]{S#3}[#1][#2]}
\newcommandx{\TFun}[4][1=, 2=, 3=, 4=]{\mthfun[#4]{T#3}[#1][#2]}
\newcommandx{\UFun}[4][1=, 2=, 3=, 4=]{\mthfun[#4]{U#3}[#1][#2]}
\newcommandx{\VFun}[4][1=, 2=, 3=, 4=]{\mthfun[#4]{V#3}[#1][#2]}
\newcommandx{\WFun}[4][1=, 2=, 3=, 4=]{\mthfun[#4]{W#3}[#1][#2]}
\newcommandx{\XFun}[4][1=, 2=, 3=, 4=]{\mthfun[#4]{X#3}[#1][#2]}
\newcommandx{\YFun}[4][1=, 2=, 3=, 4=]{\mthfun[#4]{Y#3}[#1][#2]}
\newcommandx{\ZFun}[4][1=, 2=, 3=, 4=]{\mthfun[#4]{Z#3}[#1][#2]}
\newcommandx{\aFun}[4][1=, 2=, 3=, 4=]{\mthfun[#4]{a#3}[#1][#2]}
\newcommandx{\bFun}[4][1=, 2=, 3=, 4=]{\mthfun[#4]{b#3}[#1][#2]}
\newcommandx{\cFun}[4][1=, 2=, 3=, 4=]{\mthfun[#4]{c#3}[#1][#2]}
\newcommandx{\dFun}[4][1=, 2=, 3=, 4=]{\mthfun[#4]{d#3}[#1][#2]}
\newcommandx{\eFun}[4][1=, 2=, 3=, 4=]{\mthfun[#4]{e#3}[#1][#2]}
\newcommandx{\fFun}[4][1=, 2=, 3=, 4=]{\mthfun[#4]{f#3}[#1][#2]}
\newcommandx{\gFun}[4][1=, 2=, 3=, 4=]{\mthfun[#4]{g#3}[#1][#2]}
\newcommandx{\hFun}[4][1=, 2=, 3=, 4=]{\mthfun[#4]{h#3}[#1][#2]}
\newcommandx{\iFun}[4][1=, 2=, 3=, 4=]{\mthfun[#4]{i#3}[#1][#2]}
\newcommandx{\jFun}[4][1=, 2=, 3=, 4=]{\mthfun[#4]{j#3}[#1][#2]}
\newcommandx{\kFun}[4][1=, 2=, 3=, 4=]{\mthfun[#4]{k#3}[#1][#2]}
\newcommandx{\lFun}[4][1=, 2=, 3=, 4=]{\mthfun[#4]{l#3}[#1][#2]}
\newcommandx{\mFun}[4][1=, 2=, 3=, 4=]{\mthfun[#4]{m#3}[#1][#2]}
\newcommandx{\nFun}[4][1=, 2=, 3=, 4=]{\mthfun[#4]{n#3}[#1][#2]}
\newcommandx{\oFun}[4][1=, 2=, 3=, 4=]{\mthfun[#4]{o#3}[#1][#2]}
\newcommandx{\pFun}[4][1=, 2=, 3=, 4=]{\mthfun[#4]{p#3}[#1][#2]}
\newcommandx{\qFun}[4][1=, 2=, 3=, 4=]{\mthfun[#4]{q#3}[#1][#2]}
\newcommandx{\rFun}[4][1=, 2=, 3=, 4=]{\mthfun[#4]{r#3}[#1][#2]}
\newcommandx{\sFun}[4][1=, 2=, 3=, 4=]{\mthfun[#4]{s#3}[#1][#2]}
\newcommandx{\tFun}[4][1=, 2=, 3=, 4=]{\mthfun[#4]{t#3}[#1][#2]}
\newcommandx{\uFun}[4][1=, 2=, 3=, 4=]{\mthfun[#4]{u#3}[#1][#2]}
\newcommandx{\vFun}[4][1=, 2=, 3=, 4=]{\mthfun[#4]{v#3}[#1][#2]}
\newcommandx{\wFun}[4][1=, 2=, 3=, 4=]{\mthfun[#4]{w#3}[#1][#2]}
\newcommandx{\xFun}[4][1=, 2=, 3=, 4=]{\mthfun[#4]{x#3}[#1][#2]}
\newcommandx{\yFun}[4][1=, 2=, 3=, 4=]{\mthfun[#4]{y#3}[#1][#2]}
\newcommandx{\zFun}[4][1=, 2=, 3=, 4=]{\mthfun[#4]{z#3}[#1][#2]}
\newcommandx{\ARel}[4][1=, 2=, 3=, 4=]{\mthrel[#4]{A#3}[#1][#2]}
\newcommandx{\BRel}[4][1=, 2=, 3=, 4=]{\mthrel[#4]{B#3}[#1][#2]}
\newcommandx{\CRel}[4][1=, 2=, 3=, 4=]{\mthrel[#4]{C#3}[#1][#2]}
\newcommandx{\DRel}[4][1=, 2=, 3=, 4=]{\mthrel[#4]{D#3}[#1][#2]}
\newcommandx{\ERel}[4][1=, 2=, 3=, 4=]{\mthrel[#4]{E#3}[#1][#2]}
\newcommandx{\FRel}[4][1=, 2=, 3=, 4=]{\mthrel[#4]{F#3}[#1][#2]}
\newcommandx{\GRel}[4][1=, 2=, 3=, 4=]{\mthrel[#4]{G#3}[#1][#2]}
\newcommandx{\HRel}[4][1=, 2=, 3=, 4=]{\mthrel[#4]{H#3}[#1][#2]}
\newcommandx{\IRel}[4][1=, 2=, 3=, 4=]{\mthrel[#4]{I#3}[#1][#2]}
\newcommandx{\JRel}[4][1=, 2=, 3=, 4=]{\mthrel[#4]{J#3}[#1][#2]}
\newcommandx{\KRel}[4][1=, 2=, 3=, 4=]{\mthrel[#4]{K#3}[#1][#2]}
\newcommandx{\LRel}[4][1=, 2=, 3=, 4=]{\mthrel[#4]{L#3}[#1][#2]}
\newcommandx{\MRel}[4][1=, 2=, 3=, 4=]{\mthrel[#4]{M#3}[#1][#2]}
\newcommandx{\NRel}[4][1=, 2=, 3=, 4=]{\mthrel[#4]{N#3}[#1][#2]}
\newcommandx{\ORel}[4][1=, 2=, 3=, 4=]{\mthrel[#4]{O#3}[#1][#2]}
\newcommandx{\PRel}[4][1=, 2=, 3=, 4=]{\mthrel[#4]{P#3}[#1][#2]}
\newcommandx{\QRel}[4][1=, 2=, 3=, 4=]{\mthrel[#4]{Q#3}[#1][#2]}
\newcommandx{\RRel}[4][1=, 2=, 3=, 4=]{\mthrel[#4]{R#3}[#1][#2]}
\newcommandx{\SRel}[4][1=, 2=, 3=, 4=]{\mthrel[#4]{S#3}[#1][#2]}
\newcommandx{\TRel}[4][1=, 2=, 3=, 4=]{\mthrel[#4]{T#3}[#1][#2]}
\newcommandx{\URel}[4][1=, 2=, 3=, 4=]{\mthrel[#4]{U#3}[#1][#2]}
\newcommandx{\VRel}[4][1=, 2=, 3=, 4=]{\mthrel[#4]{V#3}[#1][#2]}
\newcommandx{\WRel}[4][1=, 2=, 3=, 4=]{\mthrel[#4]{W#3}[#1][#2]}
\newcommandx{\XRel}[4][1=, 2=, 3=, 4=]{\mthrel[#4]{X#3}[#1][#2]}
\newcommandx{\YRel}[4][1=, 2=, 3=, 4=]{\mthrel[#4]{Y#3}[#1][#2]}
\newcommandx{\ZRel}[4][1=, 2=, 3=, 4=]{\mthrel[#4]{Z#3}[#1][#2]}
\newcommandx{\aRel}[4][1=, 2=, 3=, 4=]{\mthrel[#4]{a#3}[#1][#2]}
\newcommandx{\bRel}[4][1=, 2=, 3=, 4=]{\mthrel[#4]{b#3}[#1][#2]}
\newcommandx{\cRel}[4][1=, 2=, 3=, 4=]{\mthrel[#4]{c#3}[#1][#2]}
\newcommandx{\dRel}[4][1=, 2=, 3=, 4=]{\mthrel[#4]{d#3}[#1][#2]}
\newcommandx{\eRel}[4][1=, 2=, 3=, 4=]{\mthrel[#4]{e#3}[#1][#2]}
\newcommandx{\fRel}[4][1=, 2=, 3=, 4=]{\mthrel[#4]{f#3}[#1][#2]}
\newcommandx{\gRel}[4][1=, 2=, 3=, 4=]{\mthrel[#4]{g#3}[#1][#2]}
\newcommandx{\hRel}[4][1=, 2=, 3=, 4=]{\mthrel[#4]{h#3}[#1][#2]}
\newcommandx{\iRel}[4][1=, 2=, 3=, 4=]{\mthrel[#4]{i#3}[#1][#2]}
\newcommandx{\jRel}[4][1=, 2=, 3=, 4=]{\mthrel[#4]{j#3}[#1][#2]}
\newcommandx{\kRel}[4][1=, 2=, 3=, 4=]{\mthrel[#4]{k#3}[#1][#2]}
\newcommandx{\lRel}[4][1=, 2=, 3=, 4=]{\mthrel[#4]{l#3}[#1][#2]}
\newcommandx{\mRel}[4][1=, 2=, 3=, 4=]{\mthrel[#4]{m#3}[#1][#2]}
\newcommandx{\nRel}[4][1=, 2=, 3=, 4=]{\mthrel[#4]{n#3}[#1][#2]}
\newcommandx{\oRel}[4][1=, 2=, 3=, 4=]{\mthrel[#4]{o#3}[#1][#2]}
\newcommandx{\pRel}[4][1=, 2=, 3=, 4=]{\mthrel[#4]{p#3}[#1][#2]}
\newcommandx{\qRel}[4][1=, 2=, 3=, 4=]{\mthrel[#4]{q#3}[#1][#2]}
\newcommandx{\rRel}[4][1=, 2=, 3=, 4=]{\mthrel[#4]{r#3}[#1][#2]}
\newcommandx{\sRel}[4][1=, 2=, 3=, 4=]{\mthrel[#4]{s#3}[#1][#2]}
\newcommandx{\tRel}[4][1=, 2=, 3=, 4=]{\mthrel[#4]{t#3}[#1][#2]}
\newcommandx{\uRel}[4][1=, 2=, 3=, 4=]{\mthrel[#4]{u#3}[#1][#2]}
\newcommandx{\vRel}[4][1=, 2=, 3=, 4=]{\mthrel[#4]{v#3}[#1][#2]}
\newcommandx{\wRel}[4][1=, 2=, 3=, 4=]{\mthrel[#4]{w#3}[#1][#2]}
\newcommandx{\xRel}[4][1=, 2=, 3=, 4=]{\mthrel[#4]{x#3}[#1][#2]}
\newcommandx{\yRel}[4][1=, 2=, 3=, 4=]{\mthrel[#4]{y#3}[#1][#2]}
\newcommandx{\zRel}[4][1=, 2=, 3=, 4=]{\mthrel[#4]{z#3}[#1][#2]}
\newcommandx{\ASym}[4][1=, 2=, 3=, 4=]{\mthsym[#4]{A#3}[#1][#2]}
\newcommandx{\BSym}[4][1=, 2=, 3=, 4=]{\mthsym[#4]{B#3}[#1][#2]}
\newcommandx{\CSym}[4][1=, 2=, 3=, 4=]{\mthsym[#4]{C#3}[#1][#2]}
\newcommandx{\DSym}[4][1=, 2=, 3=, 4=]{\mthsym[#4]{D#3}[#1][#2]}
\newcommandx{\ESym}[4][1=, 2=, 3=, 4=]{\mthsym[#4]{E#3}[#1][#2]}
\newcommandx{\FSym}[4][1=, 2=, 3=, 4=]{\mthsym[#4]{F#3}[#1][#2]}
\newcommandx{\GSym}[4][1=, 2=, 3=, 4=]{\mthsym[#4]{G#3}[#1][#2]}
\newcommandx{\HSym}[4][1=, 2=, 3=, 4=]{\mthsym[#4]{H#3}[#1][#2]}
\newcommandx{\ISym}[4][1=, 2=, 3=, 4=]{\mthsym[#4]{I#3}[#1][#2]}
\newcommandx{\JSym}[4][1=, 2=, 3=, 4=]{\mthsym[#4]{J#3}[#1][#2]}
\newcommandx{\KSym}[4][1=, 2=, 3=, 4=]{\mthsym[#4]{K#3}[#1][#2]}
\newcommandx{\LSym}[4][1=, 2=, 3=, 4=]{\mthsym[#4]{L#3}[#1][#2]}
\newcommandx{\MSym}[4][1=, 2=, 3=, 4=]{\mthsym[#4]{M#3}[#1][#2]}
\newcommandx{\NSym}[4][1=, 2=, 3=, 4=]{\mthsym[#4]{N#3}[#1][#2]}
\newcommandx{\OSym}[4][1=, 2=, 3=, 4=]{\mthsym[#4]{O#3}[#1][#2]}
\newcommandx{\PSym}[4][1=, 2=, 3=, 4=]{\mthsym[#4]{P#3}[#1][#2]}
\newcommandx{\QSym}[4][1=, 2=, 3=, 4=]{\mthsym[#4]{Q#3}[#1][#2]}
\newcommandx{\RSym}[4][1=, 2=, 3=, 4=]{\mthsym[#4]{R#3}[#1][#2]}
\newcommandx{\SSym}[4][1=, 2=, 3=, 4=]{\mthsym[#4]{S#3}[#1][#2]}
\newcommandx{\TSym}[4][1=, 2=, 3=, 4=]{\mthsym[#4]{T#3}[#1][#2]}
\newcommandx{\USym}[4][1=, 2=, 3=, 4=]{\mthsym[#4]{U#3}[#1][#2]}
\newcommandx{\VSym}[4][1=, 2=, 3=, 4=]{\mthsym[#4]{V#3}[#1][#2]}
\newcommandx{\WSym}[4][1=, 2=, 3=, 4=]{\mthsym[#4]{W#3}[#1][#2]}
\newcommandx{\XSym}[4][1=, 2=, 3=, 4=]{\mthsym[#4]{X#3}[#1][#2]}
\newcommandx{\YSym}[4][1=, 2=, 3=, 4=]{\mthsym[#4]{Y#3}[#1][#2]}
\newcommandx{\ZSym}[4][1=, 2=, 3=, 4=]{\mthsym[#4]{Z#3}[#1][#2]}
\newcommandx{\aSym}[4][1=, 2=, 3=, 4=]{\mthsym[#4]{a#3}[#1][#2]}
\newcommandx{\bSym}[4][1=, 2=, 3=, 4=]{\mthsym[#4]{b#3}[#1][#2]}
\newcommandx{\cSym}[4][1=, 2=, 3=, 4=]{\mthsym[#4]{c#3}[#1][#2]}
\newcommandx{\dSym}[4][1=, 2=, 3=, 4=]{\mthsym[#4]{d#3}[#1][#2]}
\newcommandx{\eSym}[4][1=, 2=, 3=, 4=]{\mthsym[#4]{e#3}[#1][#2]}
\newcommandx{\fSym}[4][1=, 2=, 3=, 4=]{\mthsym[#4]{f#3}[#1][#2]}
\newcommandx{\gSym}[4][1=, 2=, 3=, 4=]{\mthsym[#4]{g#3}[#1][#2]}
\newcommandx{\hSym}[4][1=, 2=, 3=, 4=]{\mthsym[#4]{h#3}[#1][#2]}
\newcommandx{\iSym}[4][1=, 2=, 3=, 4=]{\mthsym[#4]{i#3}[#1][#2]}
\newcommandx{\jSym}[4][1=, 2=, 3=, 4=]{\mthsym[#4]{j#3}[#1][#2]}
\newcommandx{\kSym}[4][1=, 2=, 3=, 4=]{\mthsym[#4]{k#3}[#1][#2]}
\newcommandx{\lSym}[4][1=, 2=, 3=, 4=]{\mthsym[#4]{l#3}[#1][#2]}
\newcommandx{\mSym}[4][1=, 2=, 3=, 4=]{\mthsym[#4]{m#3}[#1][#2]}
\newcommandx{\nSym}[4][1=, 2=, 3=, 4=]{\mthsym[#4]{n#3}[#1][#2]}
\newcommandx{\oSym}[4][1=, 2=, 3=, 4=]{\mthsym[#4]{o#3}[#1][#2]}
\newcommandx{\pSym}[4][1=, 2=, 3=, 4=]{\mthsym[#4]{p#3}[#1][#2]}
\newcommandx{\qSym}[4][1=, 2=, 3=, 4=]{\mthsym[#4]{q#3}[#1][#2]}
\newcommandx{\rSym}[4][1=, 2=, 3=, 4=]{\mthsym[#4]{r#3}[#1][#2]}
\newcommandx{\sSym}[4][1=, 2=, 3=, 4=]{\mthsym[#4]{s#3}[#1][#2]}
\newcommandx{\tSym}[4][1=, 2=, 3=, 4=]{\mthsym[#4]{t#3}[#1][#2]}
\newcommandx{\uSym}[4][1=, 2=, 3=, 4=]{\mthsym[#4]{u#3}[#1][#2]}
\newcommandx{\vSym}[4][1=, 2=, 3=, 4=]{\mthsym[#4]{v#3}[#1][#2]}
\newcommandx{\wSym}[4][1=, 2=, 3=, 4=]{\mthsym[#4]{w#3}[#1][#2]}
\newcommandx{\xSym}[4][1=, 2=, 3=, 4=]{\mthsym[#4]{x#3}[#1][#2]}
\newcommandx{\ySym}[4][1=, 2=, 3=, 4=]{\mthsym[#4]{y#3}[#1][#2]}
\newcommandx{\zSym}[4][1=, 2=, 3=, 4=]{\mthsym[#4]{z#3}[#1][#2]}
\newcommandx{\AElm}[4][1=, 2=, 3=, 4=]{\mthelm[#4]{A#3}[#1][#2]}
\newcommandx{\BElm}[4][1=, 2=, 3=, 4=]{\mthelm[#4]{B#3}[#1][#2]}
\newcommandx{\CElm}[4][1=, 2=, 3=, 4=]{\mthelm[#4]{C#3}[#1][#2]}
\newcommandx{\DElm}[4][1=, 2=, 3=, 4=]{\mthelm[#4]{D#3}[#1][#2]}
\newcommandx{\EElm}[4][1=, 2=, 3=, 4=]{\mthelm[#4]{E#3}[#1][#2]}
\newcommandx{\FElm}[4][1=, 2=, 3=, 4=]{\mthelm[#4]{F#3}[#1][#2]}
\newcommandx{\GElm}[4][1=, 2=, 3=, 4=]{\mthelm[#4]{G#3}[#1][#2]}
\newcommandx{\HElm}[4][1=, 2=, 3=, 4=]{\mthelm[#4]{H#3}[#1][#2]}
\newcommandx{\IElm}[4][1=, 2=, 3=, 4=]{\mthelm[#4]{I#3}[#1][#2]}
\newcommandx{\JElm}[4][1=, 2=, 3=, 4=]{\mthelm[#4]{J#3}[#1][#2]}
\newcommandx{\KElm}[4][1=, 2=, 3=, 4=]{\mthelm[#4]{K#3}[#1][#2]}
\newcommandx{\LElm}[4][1=, 2=, 3=, 4=]{\mthelm[#4]{L#3}[#1][#2]}
\newcommandx{\MElm}[4][1=, 2=, 3=, 4=]{\mthelm[#4]{M#3}[#1][#2]}
\newcommandx{\NElm}[4][1=, 2=, 3=, 4=]{\mthelm[#4]{N#3}[#1][#2]}
\newcommandx{\OElm}[4][1=, 2=, 3=, 4=]{\mthelm[#4]{O#3}[#1][#2]}
\newcommandx{\PElm}[4][1=, 2=, 3=, 4=]{\mthelm[#4]{P#3}[#1][#2]}
\newcommandx{\QElm}[4][1=, 2=, 3=, 4=]{\mthelm[#4]{Q#3}[#1][#2]}
\newcommandx{\RElm}[4][1=, 2=, 3=, 4=]{\mthelm[#4]{R#3}[#1][#2]}
\newcommandx{\SElm}[4][1=, 2=, 3=, 4=]{\mthelm[#4]{S#3}[#1][#2]}
\newcommandx{\TElm}[4][1=, 2=, 3=, 4=]{\mthelm[#4]{T#3}[#1][#2]}
\newcommandx{\UElm}[4][1=, 2=, 3=, 4=]{\mthelm[#4]{U#3}[#1][#2]}
\newcommandx{\VElm}[4][1=, 2=, 3=, 4=]{\mthelm[#4]{V#3}[#1][#2]}
\newcommandx{\WElm}[4][1=, 2=, 3=, 4=]{\mthelm[#4]{W#3}[#1][#2]}
\newcommandx{\XElm}[4][1=, 2=, 3=, 4=]{\mthelm[#4]{X#3}[#1][#2]}
\newcommandx{\YElm}[4][1=, 2=, 3=, 4=]{\mthelm[#4]{Y#3}[#1][#2]}
\newcommandx{\ZElm}[4][1=, 2=, 3=, 4=]{\mthelm[#4]{Z#3}[#1][#2]}
\newcommandx{\aElm}[4][1=, 2=, 3=, 4=]{\mthelm[#4]{a#3}[#1][#2]}
\newcommandx{\bElm}[4][1=, 2=, 3=, 4=]{\mthelm[#4]{b#3}[#1][#2]}
\newcommandx{\cElm}[4][1=, 2=, 3=, 4=]{\mthelm[#4]{c#3}[#1][#2]}
\newcommandx{\dElm}[4][1=, 2=, 3=, 4=]{\mthelm[#4]{d#3}[#1][#2]}
\newcommandx{\eElm}[4][1=, 2=, 3=, 4=]{\mthelm[#4]{e#3}[#1][#2]}
\newcommandx{\fElm}[4][1=, 2=, 3=, 4=]{\mthelm[#4]{f#3}[#1][#2]}
\newcommandx{\gElm}[4][1=, 2=, 3=, 4=]{\mthelm[#4]{g#3}[#1][#2]}
\newcommandx{\hElm}[4][1=, 2=, 3=, 4=]{\mthelm[#4]{h#3}[#1][#2]}
\newcommandx{\iElm}[4][1=, 2=, 3=, 4=]{\mthelm[#4]{i#3}[#1][#2]}
\newcommandx{\jElm}[4][1=, 2=, 3=, 4=]{\mthelm[#4]{j#3}[#1][#2]}
\newcommandx{\kElm}[4][1=, 2=, 3=, 4=]{\mthelm[#4]{k#3}[#1][#2]}
\newcommandx{\lElm}[4][1=, 2=, 3=, 4=]{\mthelm[#4]{l#3}[#1][#2]}
\newcommandx{\mElm}[4][1=, 2=, 3=, 4=]{\mthelm[#4]{m#3}[#1][#2]}
\newcommandx{\nElm}[4][1=, 2=, 3=, 4=]{\mthelm[#4]{n#3}[#1][#2]}
\newcommandx{\oElm}[4][1=, 2=, 3=, 4=]{\mthelm[#4]{o#3}[#1][#2]}
\newcommandx{\pElm}[4][1=, 2=, 3=, 4=]{\mthelm[#4]{p#3}[#1][#2]}
\newcommandx{\qElm}[4][1=, 2=, 3=, 4=]{\mthelm[#4]{q#3}[#1][#2]}
\newcommandx{\rElm}[4][1=, 2=, 3=, 4=]{\mthelm[#4]{r#3}[#1][#2]}
\newcommandx{\sElm}[4][1=, 2=, 3=, 4=]{\mthelm[#4]{s#3}[#1][#2]}
\newcommandx{\tElm}[4][1=, 2=, 3=, 4=]{\mthelm[#4]{t#3}[#1][#2]}
\newcommandx{\uElm}[4][1=, 2=, 3=, 4=]{\mthelm[#4]{u#3}[#1][#2]}
\newcommandx{\vElm}[4][1=, 2=, 3=, 4=]{\mthelm[#4]{v#3}[#1][#2]}
\newcommandx{\wElm}[4][1=, 2=, 3=, 4=]{\mthelm[#4]{w#3}[#1][#2]}
\newcommandx{\xElm}[4][1=, 2=, 3=, 4=]{\mthelm[#4]{x#3}[#1][#2]}
\newcommandx{\yElm}[4][1=, 2=, 3=, 4=]{\mthelm[#4]{y#3}[#1][#2]}
\newcommandx{\zElm}[4][1=, 2=, 3=, 4=]{\mthelm[#4]{z#3}[#1][#2]}
\newcommand{\eg}
  {\txtabr{e.g.}}
\newcommand{\ie}
  {\txtabr{i.e.}}
\newcommand{\viceversa}
  {\txtabr{vice versa}}
\newcommand{\Mutatismutandis}
  {\txtabr{Mutatis mutandis}}
\newcommand{\aka}
  {\txtabr{a.k.a.}}
\renewcommand{\iff}
  {\txtabr{iff}}
\newcommand{\resp}
  {\txtabr{resp.}}
\newcommand{\wrt}
  {\txtabr{w.r.t.}}
\newcommand{\defeq}
  {\ensuremath{\triangleq}}
\newcommand{\lst}
  {\mthargfun{lst}}
\newcommand{\dual}[1]
  {\mthempty{\overline{#1}}}
\newcommand{\tuple}[1]
  {\ensuremath{\!\argint{\langle}{#1}{\rangle}}}
\newcommand{\tupleb}[2]
  {\tuple{\argb{#1}{#2}}}
\newcommand{\tuplec}[3]
  {\tuple{\argc{#1}{#2}{#3}}}
\newcommand{\tupled}[4]
  {\tuple{\argd{#1}{#2}{#3}{#4}}}
\newcommand{\tuplee}[5]
  {\tuple{\arge{#1}{#2}{#3}{#4}{#5}}}
\newcommand{\tuplef}[6]
  {\tuple{\argf{#1}{#2}{#3}{#4}{#5}{#6}}}
\newcommand{\tupleg}[7]
  {\tuple{\argg{#1}{#2}{#3}{#4}{#5}{#6}{#7}}}
\newcommand{\tupleh}[8]
  {\tuple{\argh{#1}{#2}{#3}{#4}{#5}{#6}{#7}{#8}}}
\newcommand{\tuplei}[9]
  {\tuple{\argi{#1}{#2}{#3}{#4}{#5}{#6}{#7}{#8}{#9}}}
\newcommand{\tuplej}[9]
  {%
  \def\defarga{#1}%
  \def\defargb{#2}%
  \def\defargc{#3}%
  \def\defargd{#4}%
  \def\defarge{#5}%
  \def\defargf{#6}%
  \def\defargg{#7}%
  \def\defargh{#8}%
  \def\defargi{#9}%
  \tupleauxj%
  }
\newcommand{\tuplek}[9]
  {%
  \def\defarga{#1}%
  \def\defargb{#2}%
  \def\defargc{#3}%
  \def\defargd{#4}%
  \def\defarge{#5}%
  \def\defargf{#6}%
  \def\defargg{#7}%
  \def\defargh{#8}%
  \def\defargi{#9}%
  \tupleauxk%
  }
\newcommand{\tuplel}[9]
  {%
  \def\defarga{#1}%
  \def\defargb{#2}%
  \def\defargc{#3}%
  \def\defargd{#4}%
  \def\defarge{#5}%
  \def\defargf{#6}%
  \def\defargg{#7}%
  \def\defargh{#8}%
  \def\defargi{#9}%
  \tupleauxl%
  }
\newcommand{\tuplem}[9]
  {%
  \def\defarga{#1}%
  \def\defargb{#2}%
  \def\defargc{#3}%
  \def\defargd{#4}%
  \def\defarge{#5}%
  \def\defargf{#6}%
  \def\defargg{#7}%
  \def\defargh{#8}%
  \def\defargi{#9}%
  \tupleauxm%
  }
\newcommand{\tuplen}[9]
  {%
  \def\defarga{#1}%
  \def\defargb{#2}%
  \def\defargc{#3}%
  \def\defargd{#4}%
  \def\defarge{#5}%
  \def\defargf{#6}%
  \def\defargg{#7}%
  \def\defargh{#8}%
  \def\defargi{#9}%
  \tupleauxn%
  }
\newcommand{\tupleo}[9]
  {%
  \def\defarga{#1}%
  \def\defargb{#2}%
  \def\defargc{#3}%
  \def\defargd{#4}%
  \def\defarge{#5}%
  \def\defargf{#6}%
  \def\defargg{#7}%
  \def\defargh{#8}%
  \def\defargi{#9}%
  \tupleauxo%
  }
\newcommand{\tuplep}[9]
  {%
  \def\defarga{#1}%
  \def\defargb{#2}%
  \def\defargc{#3}%
  \def\defargd{#4}%
  \def\defarge{#5}%
  \def\defargf{#6}%
  \def\defargg{#7}%
  \def\defargh{#8}%
  \def\defargi{#9}%
  \tupleauxp%
  }
\newcommand{\tupleq}[9]
  {%
  \def\defarga{#1}%
  \def\defargb{#2}%
  \def\defargc{#3}%
  \def\defargd{#4}%
  \def\defarge{#5}%
  \def\defargf{#6}%
  \def\defargg{#7}%
  \def\defargh{#8}%
  \def\defargi{#9}%
  \tupleauxq%
  }
\newcommand{\tupler}[9]
  {%
  \def\defarga{#1}%
  \def\defargb{#2}%
  \def\defargc{#3}%
  \def\defargd{#4}%
  \def\defarge{#5}%
  \def\defargf{#6}%
  \def\defargg{#7}%
  \def\defargh{#8}%
  \def\defargi{#9}%
  \tupleauxr%
  }
\newcommand{\tupleauxj}[1]
  {%
  \tuple{\argj{\defarga}{\defargb}{\defargc}{\defargd}{\defarge}{\defargf}%
    {\defargg}{\defargh}{\defargi}{#1}}%
  }
\newcommand{\tupleauxk}[2]
  {%
  \tuple{\argk{\defarga}{\defargb}{\defargc}{\defargd}{\defarge}{\defargf}%
    {\defargg}{\defargh}{\defargi}{#1}{#2}}%
  }
\newcommand{\tupleauxl}[3]
  {%
  \tuple{\argl{\defarga}{\defargb}{\defargc}{\defargd}{\defarge}{\defargf}%
    {\defargg}{\defargh}{\defargi}{#1}{#2}{#3}}%
  }
\newcommand{\tupleauxm}[4]
  {%
  \tuple{\argm{\defarga}{\defargb}{\defargc}{\defargd}{\defarge}{\defargf}%
    {\defargg}{\defargh}{\defargi}{#1}{#2}{#3}{#4}}%
  }
\newcommand{\tupleauxn}[5]
  {%
  \tuple{\argn{\defarga}{\defargb}{\defargc}{\defargd}{\defarge}{\defargf}%
    {\defargg}{\defargh}{\defargi}{#1}{#2}{#3}{#4}{#5}}%
  }
\newcommand{\tupleauxo}[6]
  {%
  \tuple{\argo{\defarga}{\defargb}{\defargc}{\defargd}{\defarge}{\defargf}%
    {\defargg}{\defargh}{\defargi}{#1}{#2}{#3}{#4}{#5}{#6}}%
  }
\newcommand{\tupleauxp}[7]
  {%
  \tuple{\argp{\defarga}{\defargb}{\defargc}{\defargd}{\defarge}{\defargf}%
    {\defargg}{\defargh}{\defargi}{#1}{#2}{#3}{#4}{#5}{#6}{#7}}%
  }
\newcommand{\tupleauxq}[8]
  {%
  \tuple{\argq{\defarga}{\defargb}{\defargc}{\defargd}{\defarge}{\defargf}%
    {\defargg}{\defargh}{\defargi}{#1}{#2}{#3}{#4}{#5}{#6}{#7}{#8}}%
  }
\newcommand{\tupleauxr}[9]
  {%
  \tuple{\argr{\defarga}{\defargb}{\defargc}{\defargd}{\defarge}{\defargf}%
    {\defargg}{\defargh}{\defargi}{#1}{#2}{#3}{#4}{#5}{#6}{#7}{#8}{#9}}%
  }
\newcommandx{\tupleauxbx}[2][1=, 2=]
  {%
  \tupleb
    {\argdef{#1}{\defarga[\argsubscript][\argsuperscript]}}
    {\argdef{#2}{\defargb[\argsubscript][\argsuperscript]}}%
  }
\newcommandx{\tupleauxcx}[3][1=, 2=, 3=]
  {%
  \tuplec
    {\argdef{#1}{\defarga[\argsubscript][\argsuperscript]}}
    {\argdef{#2}{\defargb[\argsubscript][\argsuperscript]}}
    {\argdef{#3}{\defargc[\argsubscript][\argsuperscript]}}%
  }
\newcommandx{\tupleauxdx}[4][1=, 2=, 3=, 4=]
  {%
  \tupled
    {\argdef{#1}{\defarga[\argsubscript][\argsuperscript]}}
    {\argdef{#2}{\defargb[\argsubscript][\argsuperscript]}}
    {\argdef{#3}{\defargc[\argsubscript][\argsuperscript]}}
    {\argdef{#4}{\defargd[\argsubscript][\argsuperscript]}}%
  }
\newcommandx{\tupleauxex}[5][1=, 2=, 3=, 4=, 5=]
  {%
  \tuplee
    {\argdef{#1}{\defarga[\argsubscript][\argsuperscript]}}
    {\argdef{#2}{\defargb[\argsubscript][\argsuperscript]}}
    {\argdef{#3}{\defargc[\argsubscript][\argsuperscript]}}
    {\argdef{#4}{\defargd[\argsubscript][\argsuperscript]}}
    {\argdef{#5}{\defarge[\argsubscript][\argsuperscript]}}%
  }
\newcommandx{\tupleauxfx}[6][1=, 2=, 3=, 4=, 5=, 6=]
  {%
  \tuplef
    {\argdef{#1}{\defarga[\argsubscript][\argsuperscript]}}
    {\argdef{#2}{\defargb[\argsubscript][\argsuperscript]}}
    {\argdef{#3}{\defargc[\argsubscript][\argsuperscript]}}
    {\argdef{#4}{\defargd[\argsubscript][\argsuperscript]}}
    {\argdef{#5}{\defarge[\argsubscript][\argsuperscript]}}
    {\argdef{#6}{\defargf[\argsubscript][\argsuperscript]}}%
  }
\newcommandx{\tupleauxgx}[7][1=, 2=, 3=, 4=, 5=, 6=, 7=]
  {%
  \tupleg
    {\argdef{#1}{\defarga[\argsubscript][\argsuperscript]}}
    {\argdef{#2}{\defargb[\argsubscript][\argsuperscript]}}
    {\argdef{#3}{\defargc[\argsubscript][\argsuperscript]}}
    {\argdef{#4}{\defargd[\argsubscript][\argsuperscript]}}
    {\argdef{#5}{\defarge[\argsubscript][\argsuperscript]}}
    {\argdef{#6}{\defargf[\argsubscript][\argsuperscript]}}
    {\argdef{#7}{\defargg[\argsubscript][\argsuperscript]}}%
  }
\newcommandx{\tupleauxhx}[8][1=, 2=, 3=, 4=, 5=, 6=, 7=, 8=]
  {%
  \tupleh
    {\argdef{#1}{\defarga[\argsubscript][\argsuperscript]}}
    {\argdef{#2}{\defargb[\argsubscript][\argsuperscript]}}
    {\argdef{#3}{\defargc[\argsubscript][\argsuperscript]}}
    {\argdef{#4}{\defargd[\argsubscript][\argsuperscript]}}
    {\argdef{#5}{\defarge[\argsubscript][\argsuperscript]}}
    {\argdef{#6}{\defargf[\argsubscript][\argsuperscript]}}
    {\argdef{#7}{\defargg[\argsubscript][\argsuperscript]}}
    {\argdef{#8}{\defargh[\argsubscript][\argsuperscript]}}%
  }
\newcommandx{\tupleauxix}[9][1=, 2=, 3=, 4=, 5=, 6=, 7=, 8=, 9=]
  {%
  \tuplei
    {\argdef{#1}{\defarga[\argsubscript][\argsuperscript]}}
    {\argdef{#2}{\defargb[\argsubscript][\argsuperscript]}}
    {\argdef{#3}{\defargc[\argsubscript][\argsuperscript]}}
    {\argdef{#4}{\defargd[\argsubscript][\argsuperscript]}}
    {\argdef{#5}{\defarge[\argsubscript][\argsuperscript]}}
    {\argdef{#6}{\defargf[\argsubscript][\argsuperscript]}}
    {\argdef{#7}{\defargg[\argsubscript][\argsuperscript]}}
    {\argdef{#8}{\defargh[\argsubscript][\argsuperscript]}}
    {\argdef{#9}{\defargi[\argsubscript][\argsuperscript]}}%
  }
\newcommandx{\tupleauxxjx}[9][1=, 2=, 3=, 4=, 5=, 6=, 7=, 8=, 9=]
  {%
  \def\optarga{#1}%
  \def\optargb{#2}%
  \def\optargc{#3}%
  \def\optargd{#4}%
  \def\optarge{#5}%
  \def\optargf{#6}%
  \def\optargg{#7}%
  \def\optargh{#8}%
  \def\optargi{#9}%
  \tupleauxxxjx%
  }
\newcommandx{\tupleauxxkx}[9][1=, 2=, 3=, 4=, 5=, 6=, 7=, 8=, 9=]
  {%
  \def\optarga{#1}%
  \def\optargb{#2}%
  \def\optargc{#3}%
  \def\optargd{#4}%
  \def\optarge{#5}%
  \def\optargf{#6}%
  \def\optargg{#7}%
  \def\optargh{#8}%
  \def\optargi{#9}%
  \tupleauxxxkx%
  }
\newcommandx{\tupleauxxlx}[9][1=, 2=, 3=, 4=, 5=, 6=, 7=, 8=, 9=]
  {%
  \def\optarga{#1}%
  \def\optargb{#2}%
  \def\optargc{#3}%
  \def\optargd{#4}%
  \def\optarge{#5}%
  \def\optargf{#6}%
  \def\optargg{#7}%
  \def\optargh{#8}%
  \def\optargi{#9}%
  \tupleauxxxlx%
  }
\newcommandx{\tupleauxxmx}[9][1=, 2=, 3=, 4=, 5=, 6=, 7=, 8=, 9=]
  {%
  \def\optarga{#1}%
  \def\optargb{#2}%
  \def\optargc{#3}%
  \def\optargd{#4}%
  \def\optarge{#5}%
  \def\optargf{#6}%
  \def\optargg{#7}%
  \def\optargh{#8}%
  \def\optargi{#9}%
  \tupleauxxxmx%
  }
\newcommandx{\tupleauxxnx}[9][1=, 2=, 3=, 4=, 5=, 6=, 7=, 8=, 9=]
  {%
  \def\optarga{#1}%
  \def\optargb{#2}%
  \def\optargc{#3}%
  \def\optargd{#4}%
  \def\optarge{#5}%
  \def\optargf{#6}%
  \def\optargg{#7}%
  \def\optargh{#8}%
  \def\optargi{#9}%
  \tupleauxxxnx%
  }
\newcommandx{\tupleauxxox}[9][1=, 2=, 3=, 4=, 5=, 6=, 7=, 8=, 9=]
  {%
  \def\optarga{#1}%
  \def\optargb{#2}%
  \def\optargc{#3}%
  \def\optargd{#4}%
  \def\optarge{#5}%
  \def\optargf{#6}%
  \def\optargg{#7}%
  \def\optargh{#8}%
  \def\optargi{#9}%
  \tupleauxxxox%
  }
\newcommandx{\tupleauxxpx}[9][1=, 2=, 3=, 4=, 5=, 6=, 7=, 8=, 9=]
  {%
  \def\optarga{#1}%
  \def\optargb{#2}%
  \def\optargc{#3}%
  \def\optargd{#4}%
  \def\optarge{#5}%
  \def\optargf{#6}%
  \def\optargg{#7}%
  \def\optargh{#8}%
  \def\optargi{#9}%
  \tupleauxxxpx%
  }
\newcommandx{\tupleauxxqx}[9][1=, 2=, 3=, 4=, 5=, 6=, 7=, 8=, 9=]
  {%
  \def\optarga{#1}%
  \def\optargb{#2}%
  \def\optargc{#3}%
  \def\optargd{#4}%
  \def\optarge{#5}%
  \def\optargf{#6}%
  \def\optargg{#7}%
  \def\optargh{#8}%
  \def\optargi{#9}%
  \tupleauxxxqx%
  }
\newcommandx{\tupleauxxrx}[9][1=, 2=, 3=, 4=, 5=, 6=, 7=, 8=, 9=]
  {%
  \def\optarga{#1}%
  \def\optargb{#2}%
  \def\optargc{#3}%
  \def\optargd{#4}%
  \def\optarge{#5}%
  \def\optargf{#6}%
  \def\optargg{#7}%
  \def\optargh{#8}%
  \def\optargi{#9}%
  \tupleauxxxrx%
  }
\newcommandx{\tupleauxxxjx}[1][1=]
  {%
  \tuplej
    {\argdef{\optarga}{\tuplearga[\argsubscript][\argsuperscript]}}
    {\argdef{\optargb}{\tupleargb[\argsubscript][\argsuperscript]}}
    {\argdef{\optargc}{\tupleargc[\argsubscript][\argsuperscript]}}
    {\argdef{\optargd}{\tupleargd[\argsubscript][\argsuperscript]}}
    {\argdef{\optarge}{\tuplearge[\argsubscript][\argsuperscript]}}
    {\argdef{\optargf}{\tupleargf[\argsubscript][\argsuperscript]}}
    {\argdef{\optargg}{\tupleargg[\argsubscript][\argsuperscript]}}
    {\argdef{\optargh}{\tupleargh[\argsubscript][\argsuperscript]}}
    {\argdef{\optargi}{\tupleargi[\argsubscript][\argsuperscript]}}
    {\argdef{#1}{\tupleargj[\argsubscript][\argsuperscript]}}%
  }
\newcommandx{\tupleauxxxkx}[2][1=, 2=]
  {%
  \tuplek
    {\argdef{\optarga}{\tuplearga[\argsubscript][\argsuperscript]}}
    {\argdef{\optargb}{\tupleargb[\argsubscript][\argsuperscript]}}
    {\argdef{\optargc}{\tupleargc[\argsubscript][\argsuperscript]}}
    {\argdef{\optargd}{\tupleargd[\argsubscript][\argsuperscript]}}
    {\argdef{\optarge}{\tuplearge[\argsubscript][\argsuperscript]}}
    {\argdef{\optargf}{\tupleargf[\argsubscript][\argsuperscript]}}
    {\argdef{\optargg}{\tupleargg[\argsubscript][\argsuperscript]}}
    {\argdef{\optargh}{\tupleargh[\argsubscript][\argsuperscript]}}
    {\argdef{\optargi}{\tupleargi[\argsubscript][\argsuperscript]}}
    {\argdef{#1}{\tupleargj[\argsubscript][\argsuperscript]}}
    {\argdef{#2}{\tupleargk[\argsubscript][\argsuperscript]}}
  }
\newcommandx{\tupleauxxxlx}[3][1=, 2=, 3=]
  {%
  \tuplel
    {\argdef{\optarga}{\tuplearga[\argsubscript][\argsuperscript]}}
    {\argdef{\optargb}{\tupleargb[\argsubscript][\argsuperscript]}}
    {\argdef{\optargc}{\tupleargc[\argsubscript][\argsuperscript]}}
    {\argdef{\optargd}{\tupleargd[\argsubscript][\argsuperscript]}}
    {\argdef{\optarge}{\tuplearge[\argsubscript][\argsuperscript]}}
    {\argdef{\optargf}{\tupleargf[\argsubscript][\argsuperscript]}}
    {\argdef{\optargg}{\tupleargg[\argsubscript][\argsuperscript]}}
    {\argdef{\optargh}{\tupleargh[\argsubscript][\argsuperscript]}}
    {\argdef{\optargi}{\tupleargi[\argsubscript][\argsuperscript]}}
    {\argdef{#1}{\tupleargj[\argsubscript][\argsuperscript]}}
    {\argdef{#2}{\tupleargk[\argsubscript][\argsuperscript]}}
    {\argdef{#3}{\tupleargl[\argsubscript][\argsuperscript]}}
  }
\newcommandx{\tupleauxxxmx}[4][1=, 2=, 3=, 4=]
  {%
  \tuplem
    {\argdef{\optarga}{\tuplearga[\argsubscript][\argsuperscript]}}
    {\argdef{\optargb}{\tupleargb[\argsubscript][\argsuperscript]}}
    {\argdef{\optargc}{\tupleargc[\argsubscript][\argsuperscript]}}
    {\argdef{\optargd}{\tupleargd[\argsubscript][\argsuperscript]}}
    {\argdef{\optarge}{\tuplearge[\argsubscript][\argsuperscript]}}
    {\argdef{\optargf}{\tupleargf[\argsubscript][\argsuperscript]}}
    {\argdef{\optargg}{\tupleargg[\argsubscript][\argsuperscript]}}
    {\argdef{\optargh}{\tupleargh[\argsubscript][\argsuperscript]}}
    {\argdef{\optargi}{\tupleargi[\argsubscript][\argsuperscript]}}
    {\argdef{#1}{\tupleargj[\argsubscript][\argsuperscript]}}
    {\argdef{#2}{\tupleargk[\argsubscript][\argsuperscript]}}
    {\argdef{#3}{\tupleargl[\argsubscript][\argsuperscript]}}
    {\argdef{#4}{\tupleargm[\argsubscript][\argsuperscript]}}
  }
\newcommandx{\tupleauxxxnx}[5][1=, 2=, 3=, 4=, 5=]
  {%
  \tuplen
    {\argdef{\optarga}{\tuplearga[\argsubscript][\argsuperscript]}}
    {\argdef{\optargb}{\tupleargb[\argsubscript][\argsuperscript]}}
    {\argdef{\optargc}{\tupleargc[\argsubscript][\argsuperscript]}}
    {\argdef{\optargd}{\tupleargd[\argsubscript][\argsuperscript]}}
    {\argdef{\optarge}{\tuplearge[\argsubscript][\argsuperscript]}}
    {\argdef{\optargf}{\tupleargf[\argsubscript][\argsuperscript]}}
    {\argdef{\optargg}{\tupleargg[\argsubscript][\argsuperscript]}}
    {\argdef{\optargh}{\tupleargh[\argsubscript][\argsuperscript]}}
    {\argdef{\optargi}{\tupleargi[\argsubscript][\argsuperscript]}}
    {\argdef{#1}{\tupleargj[\argsubscript][\argsuperscript]}}
    {\argdef{#2}{\tupleargk[\argsubscript][\argsuperscript]}}
    {\argdef{#3}{\tupleargl[\argsubscript][\argsuperscript]}}
    {\argdef{#4}{\tupleargm[\argsubscript][\argsuperscript]}}
    {\argdef{#5}{\tupleargn[\argsubscript][\argsuperscript]}}
  }
\newcommandx{\tupleauxxxox}[6][1=, 2=, 3=, 4=, 5=, 6=]
  {%
  \tupleo
    {\argdef{\optarga}{\tuplearga[\argsubscript][\argsuperscript]}}
    {\argdef{\optargb}{\tupleargb[\argsubscript][\argsuperscript]}}
    {\argdef{\optargc}{\tupleargc[\argsubscript][\argsuperscript]}}
    {\argdef{\optargd}{\tupleargd[\argsubscript][\argsuperscript]}}
    {\argdef{\optarge}{\tuplearge[\argsubscript][\argsuperscript]}}
    {\argdef{\optargf}{\tupleargf[\argsubscript][\argsuperscript]}}
    {\argdef{\optargg}{\tupleargg[\argsubscript][\argsuperscript]}}
    {\argdef{\optargh}{\tupleargh[\argsubscript][\argsuperscript]}}
    {\argdef{\optargi}{\tupleargi[\argsubscript][\argsuperscript]}}
    {\argdef{#1}{\tupleargj[\argsubscript][\argsuperscript]}}
    {\argdef{#2}{\tupleargk[\argsubscript][\argsuperscript]}}
    {\argdef{#3}{\tupleargl[\argsubscript][\argsuperscript]}}
    {\argdef{#4}{\tupleargm[\argsubscript][\argsuperscript]}}
    {\argdef{#5}{\tupleargn[\argsubscript][\argsuperscript]}}
    {\argdef{#6}{\tupleargo[\argsubscript][\argsuperscript]}}
  }
\newcommandx{\tupleauxxxpx}[7][1=, 2=, 3=, 4=, 5=, 6=, 7=]
  {%
  \tuplep
    {\argdef{\optarga}{\tuplearga[\argsubscript][\argsuperscript]}}
    {\argdef{\optargb}{\tupleargb[\argsubscript][\argsuperscript]}}
    {\argdef{\optargc}{\tupleargc[\argsubscript][\argsuperscript]}}
    {\argdef{\optargd}{\tupleargd[\argsubscript][\argsuperscript]}}
    {\argdef{\optarge}{\tuplearge[\argsubscript][\argsuperscript]}}
    {\argdef{\optargf}{\tupleargf[\argsubscript][\argsuperscript]}}
    {\argdef{\optargg}{\tupleargg[\argsubscript][\argsuperscript]}}
    {\argdef{\optargh}{\tupleargh[\argsubscript][\argsuperscript]}}
    {\argdef{\optargi}{\tupleargi[\argsubscript][\argsuperscript]}}
    {\argdef{#1}{\tupleargj[\argsubscript][\argsuperscript]}}
    {\argdef{#2}{\tupleargk[\argsubscript][\argsuperscript]}}
    {\argdef{#3}{\tupleargl[\argsubscript][\argsuperscript]}}
    {\argdef{#4}{\tupleargm[\argsubscript][\argsuperscript]}}
    {\argdef{#5}{\tupleargn[\argsubscript][\argsuperscript]}}
    {\argdef{#6}{\tupleargo[\argsubscript][\argsuperscript]}}
    {\argdef{#7}{\tupleargp[\argsubscript][\argsuperscript]}}
  }
\newcommandx{\tupleauxxxqx}[8][1=, 2=, 3=, 4=, 5=, 6=, 7=, 8=]
  {%
  \tupleq
    {\argdef{\optarga}{\tuplearga[\argsubscript][\argsuperscript]}}
    {\argdef{\optargb}{\tupleargb[\argsubscript][\argsuperscript]}}
    {\argdef{\optargc}{\tupleargc[\argsubscript][\argsuperscript]}}
    {\argdef{\optargd}{\tupleargd[\argsubscript][\argsuperscript]}}
    {\argdef{\optarge}{\tuplearge[\argsubscript][\argsuperscript]}}
    {\argdef{\optargf}{\tupleargf[\argsubscript][\argsuperscript]}}
    {\argdef{\optargg}{\tupleargg[\argsubscript][\argsuperscript]}}
    {\argdef{\optargh}{\tupleargh[\argsubscript][\argsuperscript]}}
    {\argdef{\optargi}{\tupleargi[\argsubscript][\argsuperscript]}}
    {\argdef{#1}{\tupleargj[\argsubscript][\argsuperscript]}}
    {\argdef{#2}{\tupleargk[\argsubscript][\argsuperscript]}}
    {\argdef{#3}{\tupleargl[\argsubscript][\argsuperscript]}}
    {\argdef{#4}{\tupleargm[\argsubscript][\argsuperscript]}}
    {\argdef{#5}{\tupleargn[\argsubscript][\argsuperscript]}}
    {\argdef{#6}{\tupleargo[\argsubscript][\argsuperscript]}}
    {\argdef{#7}{\tupleargp[\argsubscript][\argsuperscript]}}
    {\argdef{#8}{\tupleargq[\argsubscript][\argsuperscript]}}
  }
\newcommandx{\tupleauxxxrx}[9][1=, 2=, 3=, 4=, 5=, 6=, 7=, 8=, 9=]
  {%
  \tupler
    {\argdef{\optarga}{\tuplearga[\argsubscript][\argsuperscript]}}
    {\argdef{\optargb}{\tupleargb[\argsubscript][\argsuperscript]}}
    {\argdef{\optargc}{\tupleargc[\argsubscript][\argsuperscript]}}
    {\argdef{\optargd}{\tupleargd[\argsubscript][\argsuperscript]}}
    {\argdef{\optarge}{\tuplearge[\argsubscript][\argsuperscript]}}
    {\argdef{\optargf}{\tupleargf[\argsubscript][\argsuperscript]}}
    {\argdef{\optargg}{\tupleargg[\argsubscript][\argsuperscript]}}
    {\argdef{\optargh}{\tupleargh[\argsubscript][\argsuperscript]}}
    {\argdef{\optargi}{\tupleargi[\argsubscript][\argsuperscript]}}
    {\argdef{#1}{\tupleargj[\argsubscript][\argsuperscript]}}
    {\argdef{#2}{\tupleargk[\argsubscript][\argsuperscript]}}
    {\argdef{#3}{\tupleargl[\argsubscript][\argsuperscript]}}
    {\argdef{#4}{\tupleargm[\argsubscript][\argsuperscript]}}
    {\argdef{#5}{\tupleargn[\argsubscript][\argsuperscript]}}
    {\argdef{#6}{\tupleargo[\argsubscript][\argsuperscript]}}
    {\argdef{#7}{\tupleargp[\argsubscript][\argsuperscript]}}
    {\argdef{#8}{\tupleargq[\argsubscript][\argsuperscript]}}
    {\argdef{#9}{\tupleargr[\argsubscript][\argsuperscript]}}%
  }
\newcommand{\set}[2]
  {\ensuremath{\argint{\{}{\argext{#1}{\allowbreak:\allowbreak}{#2}}{\}}}}
\newcommand{\pow}[1]
  {\ensuremath{2^{#1}}}
\newcommand{\card}[1]
  {\mthempty{\argint{\vert}{#1}{\vert}}}
\newcommand{\cmp}
  {\ensuremath{\circ}}
\newcommand{\rst}
  {\mthempty{\upharpoonright}}
\newcommandx{\pto}[2][1=, 2=]
  {\ensuremath{\rightharpoonup}}
\newcommandx{\cto}[2][1=, 2=]
  {\:\mthempty{\to}[#1][#2]\:}
\newcommandx{\cpto}[2][1=, 2=]
  {\:\mthempty{\pto}[#1][#2]\:}
\newcommand{\SetN}
  {\mthset[1]{N}}
\newcommand{\SetZ}
  {\mthset[1]{Z}}
\newcommand{\numcc}[2]
  {\mthempty{[\argb{#1}{#2}]}}
\newcommand{\numco}[2]
  {\mthempty{[\argb{#1}{#2})}}
\DeclareRobustCommand{\min}
  {\mthfun{min}}
\DeclareRobustCommand{\max}
  {\mthfun{max}}
\DeclareRobustCommand{\arg}
  {\mthfun{arg}}
\newcommandx{\EF}[5][1=, 2=, 3=, 4=, 5=]
  {\txtargname{EF#5{\small\argint{$[$}{#1}{$]$}}}[#2][#3]{#4}}
\newcommandx{\BG}[5][1=, 2=, 3=, 4=, 5=]
  {\txtargname{BG#5{\small\argint{$[$}{#1}{$]$}}}[#2][#3]{#4}}
\newcommandx{\CG}[5][1=, 2=, 3=, 4=, 5=]
  {\txtargname{CG#5{\small\argint{$[$}{#1}{$]$}}}[#2][#3]{#4}}
\newcommandx{\PG}[5][1=, 2=, 3=, 4=, 5=]
  {\txtargname{PG#5{\small\argint{$[$}{#1}{$]$}}}[#2][#3]{#4}}
\newcommandx{\RG}[5][1=, 2=, 3=, 4=, 5=]
  {\txtargname{RG#5{\small\argint{$[$}{#1}{$]$}}}[#2][#3]{#4}}
\newcommandx{\SG}[5][1=, 2=, 3=, 4=, 5=]
  {\txtargname{SG#5{\small\argint{$[$}{#1}{$]$}}}[#2][#3]{#4}}
\newcommandx{\MG}[5][1=, 2=, 3=, 4=, 5=]
  {\txtargname{MG#5{\small\argint{$[$}{#1}{$]$}}}[#2][#3]{#4}}
\newcommandx{\EG}[5][1=, 2=, 3=, 4=, 5=]
  {\txtargname{EG#5{\small\argint{$[$}{#1}{$]$}}}[#2][#3]{#4}}
\newcommandx{\MPG}[5][1=, 2=, 3=, 4=, 5=]
  {\txtargname{MPG#5{\small\argint{$[$}{#1}{$]$}}}[#2][#3]{#4}}
\newcommandx{\DPG}[5][1=, 2=, 3=, 4=, 5=]
  {\txtargname{DPG#5{\small\argint{$[$}{#1}{$]$}}}[#2][#3]{#4}}
\renewcommandx{\SG}[5][1=, 2=, 3=, 4=, 5=]
  {\txtargname{SG#5{\small\argint{$[$}{#1}{$]$}}}[#2][#3]{#4}}
\newcommandx{\BF}[5][1=, 2=, 3=, 4=, 5=]
  {\txtargname{BF#5{\small\argint{$[$}{#1}{$]$}}}[#2][#3]{#4}}
\newcommandx{\FOL}[5][1=, 2=, 3=, 4=, 5=]
  {\txtargname{FOL#5{\small\argint{$[$}{#1}{$]$}}}[#2][#3]{#4}}
\newcommandx{\SOL}[5][1=, 2=, 3=, 4=, 5=]
  {\txtargname{SOL#5{\small\argint{$[$}{#1}{$]$}}}[#2][#3]{#4}}
\newcommandx{\TL}[5][1=, 2=, 3=, 4=, 5=]
  {\txtargname{TL#5{\small\argint{$[$}{#1}{$]$}}}[#2][#3]{#4}}
\newcommandx{\PL}[5][1=, 2=, 3=, 4=, 5=]
  {\txtargname{PL#5{\small\argint{$[$}{#1}{$]$}}}[#2][#3]{#4}}
\newcommandx{\ML}[5][1=, 2=, 3=, 4=, 5=]
  {\txtargname{ML#5{\small\argint{$[$}{#1}{$]$}}}[#2][#3]{#4}}
\newcommandx{\MC}[5][1=, 2=, 3=, 4=, 5=]
  {\txtargname{$\mu$Calculus#5{\small\argint{$[$}{#1}{$]$}}}[#2][#3]{#4}}
\newcommandx{\LTL}[5][1=, 2=, 3=, 4=, 5=]
  {\txtargname{LTL#5{\small\argint{$[$}{#1}{$]$}}}[#2][#3]{#4}}
\newcommandx{\PTL}[5][1=, 2=, 3=, 4=, 5=]
  {\txtargname{PTL#5{\small\argint{$[$}{#1}{$]$}}}[#2][#3]{#4}}
\newcommandx{\CTL}[5][1=, 2=, 3=, 4=, 5=]
  {\txtargname{CTL#5{\small\argint{$[$}{#1}{$]$}}}[#2][#3]{#4}}
\newcommandx{\CTLP}[5][1=, 2=, 3=, 4=, 5=]
  {\txtargname{CTL$^{+}$#5{\small\argint{$[$}{#1}{$]$}}}[#2][#3]{#4}}
\newcommandx{\CTLS}[5][1=, 2=, 3=, 4=, 5=]
  {\txtargname{CTL$^{\star}$#5{\small\argint{$[$}{#1}{$]$}}}[#2][#3]{#4}}
\newcommandx{\STL}[5][1=, 2=, 3=, 4=, 5=]
  {\txtargname{STL#5{\small\argint{$[$}{#1}{$]$}}}[#2][#3]{#4}}
\newcommandx{\STLP}[5][1=, 2=, 3=, 4=, 5=]
  {\txtargname{STL$^{+}$#5{\small\argint{$[$}{#1}{$]$}}}[#2][#3]{#4}}
\newcommandx{\STLS}[5][1=, 2=, 3=, 4=, 5=]
  {\txtargname{STL$^{\star}$#5{\small\argint{$[$}{#1}{$]$}}}[#2][#3]{#4}}
\newcommandx{\ATL}[5][1=, 2=, 3=, 4=, 5=]
  {\txtargname{ATL#5{\small\argint{$[$}{#1}{$]$}}}[#2][#3]{#4}}
\newcommandx{\ATLP}[5][1=, 2=, 3=, 4=, 5=]
  {\txtargname{ATL$^{+}$#5{\small\argint{$[$}{#1}{$]$}}}[#2][#3]{#4}}
\newcommandx{\ATLS}[5][1=, 2=, 3=, 4=, 5=]
  {\txtargname{ATL$^{\star}$#5{\small\argint{$[$}{#1}{$]$}}}[#2][#3]{#4}}
\newcommandx{\SL}[5][1=, 2=, 3=, 4=, 5=]
  {\txtargname{SL#5{\small\argint{$[$}{#1}{$]$}}}[#2][#3]{#4}}
\newcommandx{\LogTime}[4][1=, 2=, 3=, 4=]
  {\txtargname{LogTime#4}[#2][#3]{#1}}
\newcommandx{\LogTimeE}[4][1=, 2=, 3=, 4=]
  {\LogTime[#1][#2][#3][#4]-\EComplexity}
\newcommandx{\LogTimeH}[4][1=, 2=, 3=, 4=]
  {\LogTime[#1][#2][#3][#4]-\HComplexity}
\newcommandx{\LogTimeC}[4][1=, 2=, 3=, 4=]
  {\LogTime[#1][#2][#3][#4]-\CComplexity}
\newcommandx{\LogSpace}[4][1=, 2=, 3=, 4=]
  {\txtargname{LogSpace#4}[#2][#3]{#1}}
\newcommandx{\LogSpaceE}[4][1=, 2=, 3=, 4=]
  {\LogSpace[#1][#2][#3][#4]-\EComplexity}
\newcommandx{\LogSpaceH}[4][1=, 2=, 3=, 4=]
  {\LogSpace[#1][#2][#3][#4]-\HComplexity}
\newcommandx{\LogSpaceC}[4][1=, 2=, 3=, 4=]
  {\LogSpace[#1][#2][#3][#4]-\CComplexity}
\newcommandx{\PTime}[4][1=, 2=, 3=, 4=]
  {\txtargname{PTime#4}[#2][#3]{#1}}
\newcommandx{\PTimeE}[4][1=, 2=, 3=, 4=]
  {\PTime[#1][#2][#3][#4]-\EComplexity}
\newcommandx{\PTimeH}[4][1=, 2=, 3=, 4=]
  {\PTime[#1][#2][#3][#4]-\HComplexity}
\newcommandx{\PTimeC}[4][1=, 2=, 3=, 4=]
  {\PTime[#1][#2][#3][#4]-\CComplexity}
\newcommand{\UPTime}
  {\UComplexity\PTime}
\newcommand{\CoUPTime}
  {\CoComplexity\UPTime}
\newcommand{\NPTime}
  {\NComplexity\PTime}
\newcommand{\CoNPTime}
  {\CoComplexity\NPTime}
\newcommandx{\PSpace}[4][1=, 2=, 3=, 4=]
  {\txtargname{PSpace#4}[#2][#3]{#1}}
\newcommandx{\PSpaceE}[4][1=, 2=, 3=, 4=]
  {\PSpace[#1][#2][#3][#4]-\EComplexity}
\newcommandx{\PSpaceH}[4][1=, 2=, 3=, 4=]
  {\PSpace[#1][#2][#3][#4]-\HComplexity}
\newcommandx{\PSpaceC}[4][1=, 2=, 3=, 4=]
  {\PSpace[#1][#2][#3][#4]-\CComplexity}
\newcommandx{\ExpTime}[4][1=, 2=, 3=, 4=]
  {\txtargname{ExpTime#4}[#2][#3]{#1}}
\newcommandx{\ExpTimeE}[4][1=, 2=, 3=, 4=]
  {\ExpTime[#1][#2][#3][#4]-\EComplexity}
\newcommandx{\ExpTimeH}[4][1=, 2=, 3=, 4=]
  {\ExpTime[#1][#2][#3][#4]-\HComplexity}
\newcommandx{\ExpTimeC}[4][1=, 2=, 3=, 4=]
  {\ExpTime[#1][#2][#3][#4]-\CComplexity}
\newcommandx{\ExpSpace}[4][1=, 2=, 3=, 4=]
  {\txtargname{ExpSpace#4}[#2][#3]{#1}}
\newcommandx{\ExpSpaceE}[4][1=, 2=, 3=, 4=]
  {\ExpSpace[#1][#2][#3][#4]-\EComplexity}
\newcommandx{\ExpSpaceH}[4][1=, 2=, 3=, 4=]
  {\ExpSpace[#1][#2][#3][#4]-\HComplexity}
\newcommandx{\ExpSpaceC}[4][1=, 2=, 3=, 4=]
  {\ExpSpace[#1][#2][#3][#4]-\CComplexity}
\newcommandx{\NonElm}[4][1=, 2=, 3=, 4=]
  {\txtargname{NonElementary#4}[#2][#3]{#1}}
\newcommandx{\NonElmE}[4][1=, 2=, 3=, 4=]
  {\NonElm[#1][#2][#3][#4]-\EComplexity}
\newcommandx{\NonElmH}[4][1=, 2=, 3=, 4=]
  {\NonElm[#1][#2][#3][#4]-\HComplexity}
\newcommandx{\NonElmC}[4][1=, 2=, 3=, 4=]
  {\NonElm[#1][#2][#3][#4]-\CComplexity}
\newcommandx{\NonElmTime}[4][1=, 2=, 3=, 4=]
  {\txtargname{NonElementaryTime#4}[#2][#3]{#1}}
\newcommandx{\NonElmTimeE}[4][1=, 2=, 3=, 4=]
  {\NonElmTime[#1][#2][#3][#4]-\EComplexity}
\newcommandx{\NonElmTimeH}[4][1=, 2=, 3=, 4=]
  {\NonElmTime[#1][#2][#3][#4]-\HComplexity}
\newcommandx{\NonElmTimeC}[4][1=, 2=, 3=, 4=]
  {\NonElmTime[#1][#2][#3][#4]-\CComplexity}
\newcommandx{\NonElmSpace}[4][1=, 2=, 3=, 4=]
  {\txtargname{NonElementarySpace#4}[#2][#3]{#1}}
\newcommandx{\NonElmSpaceE}[4][1=, 2=, 3=, 4=]
  {\NonElmSpace[#1][#2][#3][#4]-\EComplexity}
\newcommandx{\NonElmSpaceH}[4][1=, 2=, 3=, 4=]
  {\NonElmSpace[#1][#2][#3][#4]-\HComplexity}
\newcommandx{\NonElmSpaceC}[4][1=, 2=, 3=, 4=]
  {\NonElmSpace[#1][#2][#3][#4]-\CComplexity}
\newcommandx{\DLHier}[4][2=, 3=, 4=]
  {\mthargset[0]{\Delta#4}[#1][#3]{#2}}
\newcommandx{\DLHierE}[4][2=, 3=, 4=]
  {\DLHier{#1}[#2][#3][#4]-\EComplexity}
\newcommandx{\DLHierH}[4][2=, 3=, 4=]
  {\DLHier{#1}[#2][#3][#4]-\HComplexity}
\newcommandx{\DLHierC}[4][2=, 3=, 4=]
  {\DLHier{#1}[#2][#3][#4]-\CComplexity}
\newcommandx{\ELHier}[4][2=, 3=, 4=]
  {\mthargset[0]{\Sigma#4}[#1][#3]{#2}}
\newcommandx{\ELHierE}[4][2=, 3=, 4=]
  {\ELHier{#1}[#2][#3][#4]-\EComplexity}
\newcommandx{\ELHierH}[4][2=, 3=, 4=]
  {\ELHier{#1}[#2][#3][#4]-\HComplexity}
\newcommandx{\ELHierC}[4][2=, 3=, 4=]
  {\ELHier{#1}[#2][#3][#4]-\CComplexity}
\newcommandx{\ULHier}[4][2=, 3=, 4=]
  {\mthargset[0]{\Pi#4}[#1][#3]{#2}}
\newcommandx{\ULHierE}[4][2=, 3=, 4=]
  {\ULHier{#1}[#2][#3][#4]-\EComplexity}
\newcommandx{\ULHierH}[4][2=, 3=, 4=]
  {\ULHier{#1}[#2][#3][#4]-\HComplexity}
\newcommandx{\ULHierC}[4][2=, 3=, 4=]
  {\ULHier{#1}[#2][#3][#4]-\CComplexity}
\newcommandx{\DBHier}[4][2=, 3=, 4=]
  {\mthargset[3]{\Delta#4}[#1][#3]{#2}}
\newcommandx{\DBHierE}[4][2=, 3=, 4=]
  {\DBHier{#1}[#2][#3][#4]-\EComplexity}
\newcommandx{\DBHierH}[4][2=, 3=, 4=]
  {\DBHier{#1}[#2][#3][#4]-\HComplexity}
\newcommandx{\DBHierC}[4][2=, 3=, 4=]
  {\DBHier{#1}[#2][#3][#4]-\CComplexity}
\newcommandx{\EBHier}[4][2=, 3=, 4=]
  {\mthargset[3]{\Sigma#4}[#1][#3]{#2}}
\newcommandx{\EBHierE}[4][2=, 3=, 4=]
  {\EBHier{#1}[#2][#3][#4]-\EComplexity}
\newcommandx{\EBHierH}[4][2=, 3=, 4=]
  {\EBHier{#1}[#2][#3][#4]-\HComplexity}
\newcommandx{\EBHierC}[4][2=, 3=, 4=]
  {\EBHier{#1}[#2][#3][#4]-\CComplexity}
\newcommandx{\UBHier}[4][2=, 3=, 4=]
  {\mthargset[3]{\Pi#4}[#1][#3]{#2}}
\newcommandx{\UBHierE}[4][2=, 3=, 4=]
  {\UBHier{#1}[#2][#3][#4]-\EComplexity}
\newcommandx{\UBHierH}[4][2=, 3=, 4=]
  {\UBHier{#1}[#2][#3][#4]-\HComplexity}
\newcommandx{\UBHierC}[4][2=, 3=, 4=]
  {\UBHier{#1}[#2][#3][#4]-\CComplexity}
\newcommandx{\DPolHier}[4][2=, 3=, 4=]
  {\DLHier{#1}[#2][\argb{\mathrm{P}}{#3}][#4]}
\newcommandx{\DPolHierE}[4][2=, 3=, 4=]
  {\DPolHier{#1}[#2][#3][#4]-\EComplexity}
\newcommandx{\DPolHierH}[4][2=, 3=, 4=]
  {\DPolHier{#1}[#2][#3][#4]-\HComplexity}
\newcommandx{\DPolHierC}[4][2=, 3=, 4=]
  {\DPolHier{#1}[#2][#3][#4]-\CComplexity}
\newcommandx{\EPolHier}[4][2=, 3=, 4=]
  {\ELHier{#1}[#2][\argb{\mathrm{P}}{#3}][#4]}
\newcommandx{\EPolHierE}[4][2=, 3=, 4=]
  {\EPolHier{#1}[#2][#3][#4]-\EComplexity}
\newcommandx{\EPolHierH}[4][2=, 3=, 4=]
  {\EPolHier{#1}[#2][#3][#4]-\HComplexity}
\newcommandx{\EPolHierC}[4][2=, 3=, 4=]
  {\EPolHier{#1}[#2][#3][#4]-\CComplexity}
\newcommandx{\UPolHier}[4][2=, 3=, 4=]
  {\ULHier{#1}[#2][\argb{\mathrm{P}}{#3}][#4]}
\newcommandx{\UPolHierE}[4][2=, 3=, 4=]
  {\UPolHier{#1}[#2][#3][#4]-\EComplexity}
\newcommandx{\UPolHierH}[4][2=, 3=, 4=]
  {\UPolHier{#1}[#2][#3][#4]-\HComplexity}
\newcommandx{\UPolHierC}[4][2=, 3=, 4=]
  {\UPolHier{#1}[#2][#3][#4]-\CComplexity}
\newcommandx{\DAriHier}[4][2=, 3=, 4=]
  {\DLHier{#1}[#2][\argb{0}{#3}][#4]}
\newcommandx{\DAriHierE}[4][2=, 3=, 4=]
  {\DAriHier{#1}[#2][#3][#4]-\EComplexity}
\newcommandx{\DAriHierH}[4][2=, 3=, 4=]
  {\DAriHier{#1}[#2][#3][#4]-\HComplexity}
\newcommandx{\DAriHierC}[4][2=, 3=, 4=]
  {\DAriHier{#1}[#2][#3][#4]-\CComplexity}
\newcommandx{\EAriHier}[4][2=, 3=, 4=]
  {\ELHier{#1}[#2][\argb{0}{#3}][#4]}
\newcommandx{\EAriHierE}[4][2=, 3=, 4=]
  {\EAriHier{#1}[#2][#3][#4]-\EComplexity}
\newcommandx{\EAriHierH}[4][2=, 3=, 4=]
  {\EAriHier{#1}[#2][#3][#4]-\HComplexity}
\newcommandx{\EAriHierC}[4][2=, 3=, 4=]
  {\EAriHier{#1}[#2][#3][#4]-\CComplexity}
\newcommandx{\UAriHier}[4][2=, 3=, 4=]
  {\ULHier{#1}[#2][\argb{0}{#3}][#4]}
\newcommandx{\UAriHierE}[4][2=, 3=, 4=]
  {\UAriHier{#1}[#2][#3][#4]-\EComplexity}
\newcommandx{\UAriHierH}[4][2=, 3=, 4=]
  {\UAriHier{#1}[#2][#3][#4]-\HComplexity}
\newcommandx{\UAriHierC}[4][2=, 3=, 4=]
  {\UAriHier{#1}[#2][#3][#4]-\CComplexity}
\newcommandx{\DAnaHier}[4][2=, 3=, 4=]
  {\DLHier{#1}[#2][\argb{1}{#3}][#4]}
\newcommandx{\DAnaHierE}[4][2=, 3=, 4=]
  {\DAnaHier{#1}[#2][#3][#4]-\EComplexity}
\newcommandx{\DAnaHierH}[4][2=, 3=, 4=]
  {\DAnaHier{#1}[#2][#3][#4]-\HComplexity}
\newcommandx{\DAnaHierC}[4][2=, 3=, 4=]
  {\DAnaHier{#1}[#2][#3][#4]-\CComplexity}
\newcommandx{\EAnaHier}[4][2=, 3=, 4=]
  {\ELHier{#1}[#2][\argb{1}{#3}][#4]}
\newcommandx{\EAnaHierE}[4][2=, 3=, 4=]
  {\EAnaHier{#1}[#2][#3][#4]-\EComplexity}
\newcommandx{\EAnaHierH}[4][2=, 3=, 4=]
  {\EAnaHier{#1}[#2][#3][#4]-\HComplexity}
\newcommandx{\EAnaHierC}[4][2=, 3=, 4=]
  {\EAnaHier{#1}[#2][#3][#4]-\CComplexity}
\newcommandx{\UAnaHier}[4][2=, 3=, 4=]
  {\ULHier{#1}[#2][\argb{1}{#3}][#4]}
\newcommandx{\UAnaHierE}[4][2=, 3=, 4=]
  {\UAnaHier{#1}[#2][#3][#4]-\EComplexity}
\newcommandx{\UAnaHierH}[4][2=, 3=, 4=]
  {\UAnaHier{#1}[#2][#3][#4]-\HComplexity}
\newcommandx{\UAnaHierC}[4][2=, 3=, 4=]
  {\UAnaHier{#1}[#2][#3][#4]-\CComplexity}
\newcommandx{\DBorHier}[4][2=, 3=, 4=]
  {\DBHier{#1}[#2][\argb{\mathrm{B}}{#3}][#4]}
\newcommandx{\DBorHierE}[4][2=, 3=, 4=]
  {\DBorHier{#1}[#2][#3][#4]-\EComplexity}
\newcommandx{\DBorHierH}[4][2=, 3=, 4=]
  {\DBorHier{#1}[#2][#3][#4]-\HComplexity}
\newcommandx{\DBorHierC}[4][2=, 3=, 4=]
  {\DBorHier{#1}[#2][#3][#4]-\CComplexity}
\newcommandx{\EBorHier}[4][2=, 3=, 4=]
  {\EBHier{#1}[#2][\argb{\mathrm{B}}{#3}][#4]}
\newcommandx{\EBorHierE}[4][2=, 3=, 4=]
  {\EBorHier{#1}[#2][#3][#4]-\EComplexity}
\newcommandx{\EBorHierH}[4][2=, 3=, 4=]
  {\EBorHier{#1}[#2][#3][#4]-\HComplexity}
\newcommandx{\EBorHierC}[4][2=, 3=, 4=]
  {\EBorHier{#1}[#2][#3][#4]-\CComplexity}
\newcommandx{\UBorHier}[4][2=, 3=, 4=]
  {\UBHier{#1}[#2][\argb{\mathrm{B}}{#3}][#4]}
\newcommandx{\UBorHierE}[4][2=, 3=, 4=]
  {\UBorHier{#1}[#2][#3][#4]-\EComplexity}
\newcommandx{\UBorHierH}[4][2=, 3=, 4=]
  {\UBorHier{#1}[#2][#3][#4]-\HComplexity}
\newcommandx{\UBorHierC}[4][2=, 3=, 4=]
  {\UBorHier{#1}[#2][#3][#4]-\CComplexity}
\newcommand{\EComplexity}
  {{\txtname{easy}}}
\newcommand{\HComplexity}
  {{\txtname{hard}}}
\newcommand{\CComplexity}
  {{\txtname{complete}}}
\newcommand{\UComplexity}
  {{\txtname{U}}}
\newcommand{\NComplexity}
  {{\txtname{N}}}
\newcommand{\CoComplexity}
  {{\txtname{Co}}}
\newtheorem{definition}{Definition}
\newtheorem{proposition}{Proposition}
\newtheorem{theorem}{Theorem}
\newtheorem{corollary}{Corollary}
\newcommandx{\PP}[5][1=, 2=, 3=, 4=, 5=]
  {\txtargname{PP#5{\small\argint{$[$}{#1}{$]$}}}[#2][#3]{#4}}
\newcommandx{\PR}[5][1=, 2=, 3=, 4=, 5=]
  {\txtargname{PR#5{\small\argint{$[$}{#1}{$]$}}}[#2][#3]{#4}}
\newcommandx{\BRIM}[5][1=, 2=, 3=, 4=, 5=]
  {\txtargname{SEPM#5{\small\argint{$[$}{#1}{$]$}}}[#2][#3]{#4}}
\newcommandx{\QDPM}[5][1=, 2=, 3=, 4=, 5=]
  {\txtargname{QDPM#5{\small\argint{$[$}{#1}{$]$}}}[#2][#3]{#4}}
\newcommand{\gamname}{\Game}
\newcommand{\GamName}
  {\mthname{\gamname}}
\newcommand{\movrel}{Mv}
\newcommandx{\MovRel}[3][1=, 2=, 3=]
  {\mthrel{\movrel#3}[#1][#2]}
\newcommand{\winset}{Win}
\newcommandx{\WinSet}[3][1=, 2=, 3=]
  {\mthset{\winset#3}[#1][#2]}
\newcommand{\prtfun}{pr}
\newcommandx{\prtFun}[4][1=, 2=, 3=, 4=]
  {\mthargfun{\prtfun#4}[#1][#2]{#3}}
\newcommand{\depfun}{dep}
\newcommandx{\depFun}[4][1=, 2=, 3=, 4=]
  {\mthargfun{\depfun#4}[#1][#2]{#3}}
\newcommand{\escfun}{esc}
\newcommandx{\escFun}[4][1=, 2=, 3=, 4=]
  {\mthargfun{\escfun#4}[#1][#2]{#3}}
\newcommand{\intfun}{int}
\newcommandx{\intFun}[4][1=, 2=, 3=, 4=]
  {\mthargfun{\intfun#4}[#1][#2]{#3}}
\newcommand{\styfun}{sty}
\newcommandx{\styFun}[4][1=, 2=, 3=, 4=]
  {\mthargfun{\styfun#4}[#1][#2]{#3}}
\newcommand{\srcfun}{src}
\newcommandx{\srcFun}[4][1=, 2=, 3=, 4=]
  {\mthargfun{\srcfun#4}[#1][#2]{#3}}
\newcommand{\regset}{Rg}
\newcommandx{\RegSet}[3][1=, 2=, 3=]
  {\mthset{\regset#3}[#1][#2]}
\newcommand{\qdset}{QD}
\newcommandx{\QDSet}[3][1=, 2=, 3=]
  {\mthset{\qdset#3}[#1][#2]}
\newcommand{\diselm}{\top}
\newcommandx{\disElm}[3][1=, 2=, 3=]
  {\mthelm{\diselm#3}[#1][#2]}
\newcommand{\ordrel}{\prec}
\newcommandx{\ordRel}[3][1=, 2=, 3=]
  {\mthrel{\ordrel#3}[#1][#2]}
\newcommand{\comrel}{\Yright}
\newcommandx{\comRel}[3][1=, 2=, 3=]
  {\mthrel{\comrel#3}[#1][#2]}
\newcommand{\qryfun}{\Re}
\newcommandx{\qryFun}[4][1=, 2=, 3=, 4=]
  {\mthargfun{\qryfun#4}[#1][#2]{#3}}
\newcommand{\sucfun}{\downarrow}
\newcommandx{\sucFun}[4][1=, 2=, 3=, 4=]
  {\mthargfun{\sucfun#4}[#1][#2]{#3}}
\newcommand{\strset}{Str}
\newcommandx{\StrSet}[3][1=, 2=, 3=]
  {\mthset{\strset#3}[#1][#2]}
\newcommand{\strsym}{\sigma}
\newcommandx{\strSym}[4][1=, 2=, 3=, 4=]
  {\mthargfun{\strsym#4}[#1][#2]{#3}}
\newcommand{\strelm}{\sigma}
\newcommandx{\strElm}[4][1=, 2=, 3=, 4=]
  {\mthargfun{\strelm#4}[#1][#2]{#3}}
\newcommand{\playset}{Plays}
\newcommandx{\PlaySet}[4][1=, 2=, 3=, 4=]
  {\mthset{\playset#4}[#1][#2]{#3}}
\newcommand{\playfun}{play}
\newcommandx{\playFun}[4][1=, 2=, 3=, 4=]
  {\mthargfun{\playfun#4}[#1][#2]{#3}}
\newcommand{\pthset}{Pth}
\newcommandx{\PthSet}[3][1=, 2=, 3=]
  {\mthset{\pthset#3}[#1][#2]}
\newcommand{\pthsym}{\pi}
\newcommandx{\pthSym}[3][1=, 2=, 3=]
  {\mthsym{\pthsym#3}[#1][#2]}
\newcommand{\pthelm}{\pi}
\newcommandx{\pthElm}[3][1=, 2=, 3=]
  {\mthelm{\pthelm#3}[#1][#2]}
\newcommand{\hstset}{Hst}
\newcommandx{\HstSet}[3][1=, 2=, 3=]
  {\mthset{\hstset#3}[#1][#2]}
\newcommand{\prfset}{Pf}
\newcommandx{\PrfSet}[3][1=, 2=, 3=]
  {\mthset{\prfset#3}[#1][#2]}
\newcommand{\prfsym}{f}
\newcommandx{\prfSym}[3][1=, 2=, 3=]
  {\mthsym{\prfsym#3}[#1][#2]}
\newcommand{\prfelm}{f}
\newcommandx{\prfElm}[3][1=, 2=, 3=]
  {\mthelm{\prfelm#3}[#1][#2]}
\newcommand{\refrel}{\trianglelefteq}
\newcommandx{\refRel}[3][1=, 2=, 3=]
  {\mthrel{\refrel#3}[#1][#2]}
\newcommand{\plrrel}{\leq}
\newcommandx{\plrRel}[3][1=, 2=, 3=]
  {\mthrel{\plrrel#3}[#1][#2]}
\newcommand{\gaufun}{\eth}
\newcommandx{\gauFun}[4][1=, 2=, 3=, 4=]
  {\mthargfun{\gaufun#4}[#1][#2]{#3}}
\newcommand{\noropr}{\upharpoonright}
\newcommandx{\norOpr}[4][1=, 2=, 3=, 4=]
  {\mthargfun{\noropr#4}[#1][#2]{#3}}
\newcommand{\bndfun}{\#}
\newcommandx{\bndFun}[4][1=, 2=, 3=, 4=]
  {\mthargfun{\bndfun#4}[#1][#2]{#3}}
\newcommand{\denot}[1]
  {\mthempty{\argint{\lVert}{#1}{\rVert}}}
\newcommand{\crtset}{Cr}
\newcommandx{\CrtSet}[3][1=, 2=, 3=]
  {\mthset{\crtset#3}[#1][#2]}
\newcommand{\ifpfun}{ifp}
\newcommandx{\ifpFun}[4][1=, 2=, 3=, 4=]
  {\mthargfun{\ifpfun#4}[#1][#2]{#3}}
\newcommand{\lfpfun}{lfp}
\newcommandx{\lfpFun}[4][1=, 2=, 3=, 4=]
  {\mthargfun{\lfpfun#4}[#1][#2]{#3}}
\newcommand{\rwaset}{RWA}
\newcommandx{\RWASet}[3][1=, 2=, 3=]
  {\mthset{\rwaset#3}[#1][#2]}
\newcommand{\nabfun}{\nabla}
\newcommandx{\nabFun}[4][1=, 2=, 3=, 4=]
  {\mthargfun{\nabfun#4}[#1][#2]{#3}}
\newcommand{\delfun}{\Delta}
\newcommandx{\delFun}[4][1=, 2=, 3=, 4=]
  {\mthargfun{\delfun#4}[#1][#2]{#3}}
\def\forallcmd#1{\ifx#1\forallcmd\else\defcmd#1\expandafter\forallcmd\fi}
\newcommandx{\DefMacroStructure}[5][2=, 3=, 4=, 5=]
  {
  \DefMacroName{#1}[#2]
  \DefMacroSet{#1}[#3][#4][#5]
  }
\newcommandx{\DefMacroName}[2][2=]
  {
  \csdef{#1Name}{\mthname{\argdef{#2}{#1}}}
  }
\newcommandx{\DefMacroSet}[4][2=, 3=, 4=]
  {
  \csdef{#1Set}{\mthset{\argdef{#2}{#1}}}
  \caselower[q]{#1}
  \DefMacroElm{\thestring}[#3]
  \DefMacroSym{\thestring}[#4]
  }
\newcommandx{\DefMacroElm}[2][2=]
  {
  \csdef{#1Elm}{\mthelm{\argdef{#2}{#1}}}
  \def\defcmd##1{\csdef{##1#1Elm}{\mthelm{##1}}}
  \forallcmd abcdefghijklmnopqrstuvwxyz\forallcmd
  \forallcmd ABCDEFGHIJKLMNOPQRSTUVWXYZ\forallcmd
  }
\newcommandx{\DefMacroSym}[2][2=]
  {
  \csdef{#1Sym}{\mthsym{\argdef{#2}{#1}}}
  \def\defcmd##1{\csdef{##1#1Sym}{\mthsym{##1}}}
  \forallcmd abcdefghijklmnopqrstuvwxyz\forallcmd
  \forallcmd ABCDEFGHIJKLMNOPQRSTUVWXYZ\forallcmd
  }
\newcommandx{\DefMacroFun}[2][2=]
  {
  \csdef{#1Fun}{\mthfun{\argdef{#2}{#1}}}
  }
\newcommandx{\DefMacroRel}[2][2=]
  {
  \csdef{#1Rel}{\mthrel{\argdef{#2}{#1}}}
  }
\tikzstyle{every node} =
\tikzstyle{every edge} +=
\tikzstyle{noall} =
\tikzstyle{nodraw} =
\tikzstyle{nofill} =
\tikzstyle{nonode} =
\tikzstyle{cnode} =
\tikzstyle{snode} =
\tikzstyle{lnode} =
\tikzstyle{pnode} =
  \newcommand{\figexm}
    {
    \begin{center}
    \footnotesize
    \scalebox{0.600}[0.680]
    {
    \begin{tikzpicture}
      [node distance = 5em, bend angle = 22.5, inner sep = 0.3em, minimum size =
      3.2em]

      \tikzset{every loop/.style = {max distance = 1.5em}}

      \node [cnode]
            (A)
            []
            {$\aSym/\mathbf{2}$};
      \node [snode]
            (B)
            [right of = A]
            {$\bSym/\mathbf{6}$};
      \node [snode]
            (C)
            [right of = B]
            {$\cSym/\mathbf{4}$};
      \node [cnode]
            (D)
            [right of = C]
            {$\dSym/\mathbf{1}$};
      \node [cnode]
            (E)
            [below of = A]
            {$\eSym/\mathbf{1}$};
      \node [snode]
            (F)
            [below of = B]
            {$\fSym/\mathbf{1}$};
      \node [cnode]
            (G)
            [below of = C]
            {$\gSym/\mathbf{3}$};
      \node [snode]
            (H)
            [below of = D]
            {$\hSym/\mathbf{2}$};
      \node [minimum size = 2.5em]
            (L)
            [below right of = B, xshift = -0.75em, yshift = -5.25em]
            {\large A parity game $\GamName$.};

      \path[->]
        (A) edge  [bend angle = 30, bend left]
                  (D)
            edge  [bend left]
                  (E)
        (B) edge  []
                  (F)
            edge  []
                  (G)
        (C) edge  []
                  (B)
            edge  []
                  (D)
        (D) edge  [loop right]
                  ()
            edge  [bend left]
                  (H)
        (E) edge  [bend left]
                  (A)
            edge  []
                  (B)
        (F) edge  [loop right]
                  ()
        (G) edge  []
                  (D)
            edge  []
                  (C)
        (H) edge  [bend left]
                  (D)
            edge  []
                  (G)
        ;

    \end{tikzpicture}
    }
  \end{center}
  }
  \newcommand{\figsima}
    {
    \begin{center}
    \footnotesize
    \scalebox{0.600}[0.680]
    {
    \begin{tikzpicture}
      [node distance = 5em, bend angle = 22.5, inner sep = 0.3em, minimum size =
      3.2em]

      \tikzset{every loop/.style = {max distance = 1.5em}}

      \node [noall]
            (A)
            []
            {\Large $\msrElm[\aSym]$};
      \node [noall]
            (B)
            [right of = A]
            {\Large $\msrElm[\bSym]$};
      \node [noall]
            (C)
            [right of = B]
            {\Large $\msrElm[\cSym]$};
      \node [noall]
            (D)
            [right of = C]
            {\Large $\bot$};
      \node [noall]
            (E)
            [below of = A]
            {\Large $\bot$};
      \node [noall]
            (F)
            [below of = B]
            {\Large $\bot$};
      \node [noall]
            (G)
            [below of = C]
            {\Large $\bot$};
      \node [noall]
            (H)
            [below of = D]
            {\Large $\msrElm[\hSym]$};
      \node [minimum size = 2.5em]
            (L)
            [below right of = B, xshift = -1.25em, yshift = -5.25em]
            {\large $\mathbf{(1):\mfElm[1] = \prgFun[\bot](\mfElm[\bot]) =
            \prgFun[+](\mfElm[1])}$};

      \path[->]
        (A) edge  [bend angle = 30, bend left, blue]
                  (D)
            edge  [bend left, blue]
                  (E)
        (B) edge  [blue]
                  (F)
            edge  [blue]
                  (G)
        (C) edge  [blue]
                  (D)
        (D) edge  [bend left, dashed, red]
                  (H)
        (E) edge  [bend left, dashed, red]
                  (A)
            edge  [dashed, red]
                  (B)
        (F) edge  [loop right, blue]
                  ()
        (G) edge  [dashed, red]
                  (C)
        (H) edge  [bend left, blue]
                  (D)
            edge  [blue]
                  (G)
        ;

      \begin{scope}
        [very thick, fill = gray, draw = gray, fill opacity = 0.075]

        \filldraw
          ($(A) + (0.00, 0.65)$)
            to [out = 0, in = 180]
          ($(C) + (0.00, 0.65)$)
            to [out = 0, in = 90]
          ($(C) + (0.65, 0.00)$)
            to [out = 270, in = 180]
          ($(H) + (0.00, 0.65)$)
            to [out = 0, in = 90]
          ($(H) + (0.50, 0.00)$)
            to [out = 270, in = 0]
          ($(H) + (-0.00, -0.50)$)
            to [out = 180, in = 270]
          ($(H) + (-0.65, 0.00)$)
            to [out = 90, in = 0]
          ($(C) + (0.00, -0.65)$)
            to [out = 180, in = 0]
          ($(A) + (0.00, -0.65)$)
            to [out = 180, in = 270]
          ($(A) + (-0.50, 0.0)$)
            to [out = 90, in = 180]
          ($(A) + (0.00, 0.65)$)
          ;

      \end{scope}

    \end{tikzpicture}
    }
  \end{center}
  }
  \newcommand{\figsimb}
    {
    \begin{center}
    \footnotesize
    \scalebox{0.600}[0.680]
    {
    \begin{tikzpicture}
      [node distance = 5em, bend angle = 22.5, inner sep = 0.3em, minimum size =
      3.2em]

      \tikzset{every loop/.style = {max distance = 1.5em}}

      \node [noall]
            (A)
            []
            {\Large $\msrElm[\aSym]$};
      \node [noall]
            (B)
            [right of = A]
            {\Large $\msrElm[\bSym]$};
      \node [noall]
            (C)
            [right of = B]
            {\Large $\msrElm[\cSym]$};
      \node [noall]
            (D)
            [right of = C]
            {\Large $\msrElm[\dSym]$};
      \node [noall]
            (E)
            [below of = A]
            {\Large $\msrElm[\eSym]$};
      \node [noall]
            (F)
            [below of = B]
            {\Large $\bot$};
      \node [noall]
            (G)
            [below of = C]
            {\Large $\msrElm[\gSym]$};
      \node [noall]
            (H)
            [below of = D]
            {\Large $\msrElm[\hSym]$};
      \node [minimum size = 2.5em]
            (L)
            [below right of = B, xshift = -1em, yshift = -5.25em]
            {\Large $\mathbf{(2):\mfElm[2] = \prgFun[\bot](\mfElm[1])}$};

      \path[->]
        (A) edge  [bend angle = 30, bend left, dashed, red]
                  (D)
            edge  [bend left, dashed, red]
                  (E)
        (B) edge  [blue]
                  (F)
        (C) edge  [blue]
                  (D)
        (D) edge  [bend left, blue]
                  (H)
        (E) edge  [blue]
                  (B)
        (F) edge  [loop right, blue]
                  ()
        (G) edge  [blue]
                  (C)
        (H) edge  [bend left, dashed, red]
                  (D)
            edge  [dashed, red]
                  (G)
        ;

      \begin{scope}
        [very thick, fill = gray, draw = gray, fill opacity = 0.075]

        \filldraw
          ($(A) + (0.00, 0.65)$)
            to [out = 0, in = 180]
          ($(D) + (0.00, 0.65)$)
            to [out = 0, in = 90]
          ($(D) + (0.50, 0.00)$)
            to [out = 270, in = 90]
          ($(H) + (0.50, 0.00)$)
            to [out = 270, in = 0]
          ($(H) + (-0.00, -0.65)$)
            to [out = 180, in = 0]
          ($(G) + (-0.00, -0.65)$)
            to [out = 180, in = 270]
          ($(G) + (-0.50, 0.00)$)
            to [out = 90, in = 0]
          ($(B) + (0.00, -0.65)$)
            to [out = 180, in = 90]
          ($(E) + (0.50, 0.00)$)
            to [out = 270, in = 0]
          ($(E) + (0.00, -0.65)$)
            to [out = 180, in = 270]
          ($(E) + (-0.50, 0.00)$)
            to [out = 90, in = 270]
          ($(A) + (-0.50, 0.00)$)
            to [out = 90, in = 180]
          ($(A) + (0.00, 0.65)$)
          ;

      \end{scope}

    \end{tikzpicture}
    }
  \end{center}
  }
  \newcommand{\figsimc}
    {
    \begin{center}
    \footnotesize
    \scalebox{0.600}[0.680]
    {
    \begin{tikzpicture}
      [node distance = 5em, bend angle = 22.5, inner sep = 0.3em, minimum size =
      3.2em]

      \tikzset{every loop/.style = {max distance = 1.5em}}

      \node [noall]
            (A)
            []
            {\Large $\msrElm[\aSym]'$};
      \node [noall]
            (B)
            [right of = A]
            {\Large $\msrElm[\bSym]$};
      \node [noall]
            (C)
            [right of = B]
            {\Large $\msrElm[\cSym]'$};
      \node [noall]
            (D)
            [right of = C]
            {\Large $\msrElm[\dSym]'$};
      \node [noall]
            (E)
            [below of = A]
            {\Large $\msrElm[\eSym]$};
      \node [noall]
            (F)
            [below of = B]
            {\Large $\bot$};
      \node [noall]
            (G)
            [below of = C]
            {\Large $\msrElm[\gSym]'$};
      \node [noall]
            (H)
            [below of = D]
            {\Large $\msrElm[\hSym]'$};
      \node [minimum size = 2.5em]
            (L)
            [below right of = B, xshift = -1em, yshift = -5.25em]
            {\large $\mathbf{(3):\mfElm[3] = \prgFun[+](\mfElm[2])}$};

      \path[->]
        (A) edge  [bend angle = 30, bend left, blue]
                  (D)
        (B) edge  [blue]
                  (F)
        (C) edge  [blue]
                  (B)
        (D) edge  [bend left, blue]
                  (H)
        (E) edge  [bend left, dashed, red]
                  (A)
            edge  [blue]
                  (B)
        (F) edge  [loop right, blue]
                  ()
        (G) edge  [blue]
                  (C)
        (H) edge  [blue]
                  (G)
        ;

      \begin{scope}
        [very thick, fill = gray, draw = gray, fill opacity = 0.075]

        \filldraw
          ($(A) + (0.00, 0.65)$)
            to [out = 0, in = 180]
          ($(D) + (0.00, 0.65)$)
            to [out = 0, in = 90]
          ($(D) + (0.50, 0.00)$)
            to [out = 270, in = 90]
          ($(H) + (0.50, 0.00)$)
            to [out = 270, in = 0]
          ($(H) + (-0.00, -0.65)$)
            to [out = 180, in = 0]
          ($(G) + (-0.00, -0.65)$)
            to [out = 180, in = 270]
          ($(G) + (-0.50, 0.00)$)
            to [out = 90, in = 0]
          ($(B) + (0.00, -0.65)$)
            to [out = 180, in = 90]
          ($(E) + (0.50, 0.00)$)
            to [out = 270, in = 0]
          ($(E) + (0.00, -0.65)$)
            to [out = 180, in = 270]
          ($(E) + (-0.50, 0.00)$)
            to [out = 90, in = 270]
          ($(A) + (-0.50, 0.00)$)
            to [out = 90, in = 180]
          ($(A) + (0.00, 0.65)$)
          ;

      \end{scope}

    \end{tikzpicture}
    }
  \end{center}
  }
  \newcommand{\figexp}
    {
      \begin{center}
      \footnotesize
      \scalebox{0.750}[0.750]
      {
        \begin{tikzpicture}
        \begin{axis}
          [
            width = 0.60\textwidth, height = 0.45\textwidth,
            xmin = 50, xmax = 500000,
            ymin = 0, ymax = 120,
            xmode=log,
            x axis line style = -, y axis line style = -,
            ymajorgrids = true,
            xlabel = Number of positions, ylabel = Time (s),
            y label style = {at = {(axis description cs:0.02,0.5)}},
            legend pos = outer north east,
            legend entries = {FPJ, TL, QDPM, SSPM, SPM, QPT, PP, Rec}
          ]

          \addplot [blue!60!black, solid, line width = 1pt, mark = pentagon,
            mark options = solid, mark size = 2.5] table [x index = 0, y index
            = 8] {clusteresultest};

          \addplot [green!50!black, solid, line width = 1pt, mark = o, mark
            options = solid, mark size = 2] table [x index = 0, y index = 7]
            {clusteresultest};

          \addplot [black, solid, line width = 0.75pt, mark = star, mark
            options = solid, mark size = 2.5] table [x index = 0, y index = 6]
            {clusteresultest};

          \addplot [red!50!black, solid, line width = 0.75pt, mark = square, mark
            options = solid, mark size = 2.5] table [x index = 0, y index = 5]
            {clusteresultest};

          \addplot [pink!50!black, solid, line width = 0.75pt, mark = triangle, mark
            options = solid, mark size = 2.5] table [x index = 0, y index = 4]
            {clusteresultest};

          \addplot [orange!50!black, solid, line width = 0.75pt, mark = diamond, mark
            options = solid, mark size = 2.5] table [x index = 0, y index = 3]
            {clusteresultest};

          \addplot [yellow!50!black, solid, line width = 0.75pt, mark = oplus, mark
            options = solid, mark size = 2.5] table [x index = 0, y index = 2]
            {clusteresultest};

          \addplot [brown!50!black, solid, line width = 0.75pt, mark = otimes, mark
            options = solid, mark size = 2.5] table [x index = 0, y index = 1]
            {clusteresultest};

        \end{axis}
      \end{tikzpicture}
      }
    \end{center}
    }
  \newcommand{\tabsim}
    {
    \setlength{\tabcolsep}{0pt}
    \begin{tabular}{c|c|c|c}
      \begin{minipage}[t]{0.25\textwidth}
        \vspace{-.30em}
        \figexm
        \vspace{-0.5em}
      \end{minipage} &
      \begin{minipage}[t]{0.25\textwidth}
        \vspace{-.30em}
        \figsima
        \vspace{-0.5em}
      \end{minipage} &
      \begin{minipage}[t]{0.25\textwidth}
        \vspace{-0.30em}
        \figsimb
        \vspace{-0.5em}
      \end{minipage} &
      \begin{minipage}[t]{0.25\textwidth}
        \vspace{-0.30em}
        \figsimc
        \vspace{-1em}
      \end{minipage}
    \end{tabular}
    }
  \newcommand{\tabper}
    {
    \begin{tabular}{|l||r|r|r|r|r|r||r|r|}
      \hline
      \multicolumn{1}{|c||}{} & \multicolumn{6}{c||}{}            &
      \multicolumn{2}{c|}{} \\[-0.975em]
      \multicolumn{1}{|c||}{} & \multicolumn{6}{c||}{Exponential} &
      \multicolumn{2}{c|}{Quasi Polynomial} \\
      %%%%%%%%%%%%%%%%%%%%%%%%%%%%%%%%%%%%%%%%%%%%%%%%%%%%%%%%%%%%%%%%%%%%%%%%
      \hline & & & & & & & & \\[-0.975em]
      Benchmarks                & Rec & PP  & TL  & FPJ & SPM
      & {\bf QDPM}  & SSPM  & QPT \\
      \hline & & & & & & & & \\[-0.975em]
      %%%%%%%%%%%%%%%%%%%%%%%%%%%%%%%%%%%%%%%%%%%%%%%%%%%%%%%%%%%%%%%%%%%%%%%%
      Two Counters~\cite{Dij19} & 20  & 107 & 11  & 26  & 7
      & 85    & 4     & 5 \\
      %%%%%%%%%%%%%%%%%%%%%%%%%%%%%%%%%%%%%%%%%%%%%%%%%%%%%%%%%%%%%%%%%%%%%%%%
      QPT~\cite{FJSSW17}        & $>\!\!10^3\, [0s]$
                                      & $>\!\!10^3\, [0s]$
                                            & $>\!\!10^3\, [0s]$
                                                  & $>\!\!10^3\, [0s]$
                                                        & $>\!\!10^3\, [0s]$
      & $>\!\!10^3\, [0s]$
              & $>\!\!10^3\, [0.41s]$
                      & 30 \\
      %%%%%%%%%%%%%%%%%%%%%%%%%%%%%%%%%%%%%%%%%%%%%%%%%%%%%%%%%%%%%%%%%%%%%%%%
      Gazda's wc~\cite{Gaz16}   & 35  & 35  & $>\!\!10^3\, [0.01s]$
                                                  & 37  & 380
      & $>\!\!10^3\, [0.67s]$
              & 22    & 29 \\
      %%%%%%%%%%%%%%%%%%%%%%%%%%%%%%%%%%%%%%%%%%%%%%%%%%%%%%%%%%%%%%%%%%%%%%%%
      DP~\cite{BDM18a}          & 36  & 23  & $>\!\!10^3\, [0.19s]$
                                                  & 38  & 20
      & $>\!\!10^3\, [0s]$
              & $>\!\!10^3\, [0.17s]$
                      & 28 \\
      %%%%%%%%%%%%%%%%%%%%%%%%%%%%%%%%%%%%%%%%%%%%%%%%%%%%%%%%%%%%%%%%%%%%%%%%
      Divide\&Impera~\cite{BDM20} & 17  & 91  & $>\!\!10^3\, [27.75s]$
                                                  & 16  & 7
      & $>\!\!10^3\, [7.45s]$
              & 173   & 4 \\
      %%%%%%%%%%%%%%%%%%%%%%%%%%%%%%%%%%%%%%%%%%%%%%%%%%%%%%%%%%%%%%%%%%%%%%%%
      \hline
    \end{tabular}
    }
  \newcommand{\algprg}
    {
    \begin{algorithm}[H]
      \caption{\label{alg:prg}Operator $\prgFun[+]$}
      \Signature{$\prgFun[+] \colon \MFSet \to \MFSet$} \;
      \Function{$\prgFun[+](\mfElm)$}
        {
        \nl $\QSet \gets \denot{\mfElm}[+]$ \;
        \nl \While{$\escFun(\mfElm, \QSet) \neq \emptyset$}
          {
          \nl $\ESet \gets \bepFun(\mfElm, \QSet)$ \;
          \nl $\mfElm \gets \liftFun(\mfElm, \ESet, \dual{\QSet})$ \;
          \nl $\QSet \gets \QSet \setminus \ESet$ \;
          }
        \nl $\mfElm \gets {\mfElm}[\QSet \mapsto \top]$ \;
        \nl \Return $\mfElm$ \;
        }
    \end{algorithm}
    }
\begin{document}

  \title{From Quasi-Dominions to Progress Measures}
  \def\titlerunning{From Quasi-Dominions to Progress Measures}

  \author{Massimo Benerecetti$^{\mathsf{1}}$ \& Daniele Dell'Erba$^{\mathsf{2}}$
    \& Marco Faella$^{\mathsf{1}}$ \& Fabio Mogavero$^{\mathsf{1}}$
    \institute{$^{\mathsf{1}}$Universit\`a degli Studi di Napoli Federico II \&
    $^{\mathsf{2}}$University of Liverpool}}

  \def\authorrunning{M. Benerecetti, D. Dell'Erba, M. Faella, \& F. Mogavero}

  \maketitle

%%****************************************************************************%%
%%                                                                            %%
%% Article Title                                                              %%
%%                                                                            %%
%% Abstract.tex                                                               %%
%%                                                                            %%
%% Revision 0                                                                 %%
%%                                                                            %%
%% Copyright (C) 2020, Massimo Benerecetti, Daniele Dell'Erba, Marco Faella,  %%
%%                     and Fabio Mogavero.                                    %%
%% All rights reserved.                                                       %%
%%                                                                            %%
%%****************************************************************************%%

% Begin of file Abstract.tex

\begin{abstract}

We revisit the approaches to the solution of \emph{parity games} based on
\emph{progress measures} and show how the notion of \emph{quasi dominions} can
be integrated with those approaches.
The idea is that, while progress measure based techniques typically focus on one
of the two players, little information is gathered on the other player during
the solution process.
Adding quasi dominions provides additional information on this player that can
be leveraged to greatly accelerate convergence to a progress measure.
To accommodate quasi dominions, however, non trivial refinements of the approach
are necessary.
In particular, we need to introduce a novel notion of measure and a new method
to prove correctness of the resulting solution technique.

\end{abstract}

% End of file Abstract.tex

%%****************************************************************************%%
%%                                                                            %%
%% Article Title                                                              %%
%%                                                                            %%
%% Introduction.tex                                                           %%
%%                                                                            %%
%% Revision 0                                                                 %%
%%                                                                            %%
%% Copyright (C) 2020, Massimo Benerecetti, Daniele Dell'Erba, Marco Faella,  %%
%%                     and Fabio Mogavero.                                    %%
%% All rights reserved.                                                       %%
%%                                                                            %%
%%****************************************************************************%%

% Begin of file Introduction.tex

\begin{section}{Introduction}

Parity games are two-player infinite-duration games on graphs, which play a
crucial role in various fields of theoretical computer science.
These are games played on graphs, whose nodes, called \emph{positions}, are
labelled with natural numbers, called \emph{priorities}, and controlled by one
of two players: player $\PlrSym$ and player $\OppSym$.
Each player can choose edges, called \emph{moves}, when the game is at one of
its positions.
The goal of player $\OppSym$ is to force a play $\pthElm$, namely an infinite path in the
underlying graph, such that the maximal priority occurring infinitely often
along $\pthElm$ is of odd parity.
If such a play cannot be enforced by player $\OppSym$, player $\PlrSym$ wins the
game.
In this case, player $\PlrSym$ can indeed force a play whose maximal recurring
priority is even.

Finding efficient algorithms to solve these games in practice is a core problem
in formal verification and reactive synthesis, as it leads to efficient
solutions of the model-checking and satisfiability problems of expressive
temporal logics.
These algorithms can, indeed, be used as back-ends of satisfiability and
model-checking procedures~\cite{EJ88,EJS93,KVW00}.
In particular, the solution problem for these games has been proved linear-time
interreducible with the model-checking problem for the \emph{modal
\MC}~\cite{EJS93} and it is closely related to other games of infinite duration,
such as \emph{mean payoff}~\cite{EM79,GKK88}, \emph{discounted
payoff}~\cite{ZP96}, \emph{simple stochastic}~\cite{Con92}, and
\emph{energy}~\cite{CDHR10} games.
Parity games are also central to several techniques employed in automata
theory~\cite{Mos84,EJ91,KV98,GTW02}.
Besides the practical importance, parity games are also interesting from a
computational complexity point of view, since their solution problem is one of
the few inhabitants of the \UPTime $\cap$ \CoUPTime class~\cite{Jur98}.
That result improves the \NPTime $\cap$ \CoNPTime membership~\cite{EJS93}, which
easily follows from the property of \emph{memoryless
determinacy}~\cite{EJ91,Mos91}.

A number of quite different approaches to solve parity games have been proposed
in the literature that exhibit quite different characteristics.
Typically, the most efficient ones in practice are those based on game
decomposition, such as the Recursive Algorithm~\cite{McN93,Zie98}, Priority
Promotion~\cite{BDM18,BDM18a,BDM16b}, and Tangle Learning~\cite{Dij18a}, which,
however, usually suffer from poor worst-case bounds.
On the other hand, the approaches based on progress measures~\cite{Jur00,GW15}
often lead to good worst-case behaviours~\cite{JL17,FJSSW17}, but typically
perform very poorly in practice.
The main reason for this inefficiency resides in the fact that those algorithms
iteratively explore a space of functions assigning some value, called
\emph{measure}, to each position in the game.
At each iteration, the measures of some of the positions may increase and when
they become stationary for all the positions, a fixpoint is reached and a
solution can be extracted from the resulting measures.
In order to guarantee correctness, measures are allowed to increase very slowly,
which often leads to slow convergence to a solution and makes these approaches
less appealing in practical contexts.
The slow growth is the result of a uniform measure update policy for one of the
two players, specifically player $\OppSym$, which only allows for a minimal
measure increase for each of its positions that must be updated.

In this work we show that the update policy can be considerably improved upon,
without sacrificing correctness.
Instead of relying on the minimal increase policy to ensure soundness of measure
updates, we propose an approach that brings \emph{quasi dominions} into the
equation.
Note that the same idea has been implemented for Mean Payoff Games in~\cite{BDM20a}
where the resulting algorithm proved to be very efficient.
Informally, a quasi dominion is a set of positions from which one of the
players, say player $\PlrSym$, can win the game as long as the opponent, player
$\OppSym$, chooses not to escape from that set.
As such, the notion is not new and is at the very heart of the Priority
Promotion algorithms~\cite{BDM16}.
The idea is to leverage quasi dominions to justify a larger, but still
sound, increase in the measure for positions controlled by player $\OppSym$.
The crucial observation is that player $\OppSym$ surely loses from any position
of a quasi dominion for player $\PlrSym$, unless it can escape that set by
taking some exiting move.
Therefore, player $\OppSym$ can safely increase the measure of the escaping
position according to the exiting moves chosen, regardless of the fact that the
increase may not be minimal.
In this way, we are able to skip lower measures and jump directly to measures
that would be reached anyway, albeit with a number of iterations that is usually
much higher.

The integration of progress measures and quasi dominions, however, requires
\begin{inparaenum}[(i)]
  \item
    a richer form of measure, able to encode additional information that allows
    us to identify quasi dominions in the game, and
  \item
    a new update algorithm that takes this information into account when
    increasing the measures.
\end{inparaenum}

The main contributions of this paper can be summarised as follows:
\begin{inparaenum}[(a)]
  \item
    a novel solution algorithm for parity games based on the integration of
    progress measures and quasi dominions;
  \item
    the experimental results showing an improvement on the performance of orders
    of magnitude \wrt the classic and quasi-polynomial small progress measure
    algorithm;
  \item
    the present approach can pave the way to practically efficient
    quasi-polynomial algorithms based on the integration of succinct progress
    measures, such as those in~\cite{JL17,FJSSW17}.
\end{inparaenum}
%
%\indent
Due to space constraints, most proofs are reported in the appendix.

\end{section}

% End of file Introduction.tex

%%****************************************************************************%%
%%                                                                            %%
%% Article Title                                                              %%
%%                                                                            %%
%% Preliminaries.tex                                                          %%
%%                                                                            %%
%% Revision 0                                                                 %%
%%                                                                            %%
%% Copyright (C) 2020, Massimo Benerecetti, Daniele Dell'Erba, Marco Faella,  %%
%%                     and Fabio Mogavero.                                    %%
%% All rights reserved.                                                       %%
%%                                                                            %%
%%****************************************************************************%%

% Begin of file Preliminaries.tex

\begin{section}{Preliminaries}
  \label{sec:prl}

  A two-player turn-based \emph{arena} is a tuple $\AName = \tuplec
  {\PosSet[\PlrSym]} {\PosSet[\OppSym]} {\MovRel}$, with $\PosSet[\PlrSym] \cap
  \PosSet[\OppSym] = \emptyset$ and $\PosSet \defeq \PosSet[\PlrSym] \cup
  \PosSet[\OppSym]$, such that $\tupleb{\PosSet}{\MovRel}$ is a finite directed
  graph without sinks.
  $\PosSet[\PlrSym]$ (\resp, $\PosSet[\OppSym]$) is the set of positions of
  player $\PlrSym$ (\resp, $\OppSym$) and $\MovRel \subseteq \PosSet \times
  \PosSet$ is a left-total relation describing all possible moves.
  A \emph{path} in $\VSet \subseteq \PosSet$ is a finite or infinite sequence
  $\pthElm \in \PthSet(\VSet)$ of positions in $\VSet$ compatible with the move
  relation, \ie, $(\pthElm_{i}, \pthElm_{i + 1}) \in \MovRel$, for all $0 \leq i < \card{\pthElm} - 1$.
  The set $\FPthSet(\posElm)$ contains all the finite paths originating at the position $\posElm$.
  For a finite path $\pthElm$, with $\lst{\pthElm}$ we denote the last position
  of $\pthElm$.
  Finally, $\SPthSet(\posElm, \VSet)$ is the set of simple paths in
  $\FPthSet(\posElm)$ that are completely composed of positions in $\VSet$.
  A positional \emph{strategy} for player $\alpha \in \{ \PlrSym, \OppSym \}$ on
  $\VSet \subseteq \PosSet$ is a function $\strElm[\alpha] \in
  \StrSet[\alpha](\VSet) \defeq (\VSet \cap \PosSet[\alpha]) \to \PosSet$,
  mapping each $\alpha$-position $\posElm$ in $\VSet$ to a position
  $\strElm[\alpha](\posElm)$ compatible with the move relation, \ie, $(\posElm,
  \strElm[\alpha](\posElm)) \in \MovRel$.
  With $\StrSet[\alpha](\VSet)$ we denote the set of all $\alpha$-strategies
  on $\VSet$.
  Given an $\alpha$-strategy $\strElm[\alpha] \in \StrSet[\alpha](\VSet)$ and a
  set of positions $\USet \subseteq \PosSet$, the operator $\strElm[\alpha]
  \downarrow \USet$ restricts $\strElm[\alpha]$ to the positions in $\VSet \cap
  \USet$.
  A \emph{play} in $\VSet \subseteq \PosSet$ from a position $\posElm \in \VSet$
  \wrt a pair of strategies $(\strElm[\PlrSym], \strElm[\OppSym]) \in
  \StrSet[\PlrSym](\VSet) \times \StrSet[\OppSym](\VSet)$, called
  \emph{$((\strElm[\PlrSym], \strElm[\OppSym]), \posElm)$-play}, is a path
  $\pthElm \in \PthSet(\VSet)$ such that $(\pthElm)_{0} = \posElm$ and, for all
  $0 \leq i < \card{\pthElm} - 1$, if $(\pthElm)_{i} \in \PosSet[\PlrSym]$
  then $(\pthElm)_{i + 1} = \strElm[\PlrSym]((\pthElm)_{i})$ else $(\pthElm)_{i
  + 1} = \strElm[\OppSym]((\pthElm)_{i})$.
  The \emph{play function} $\playFun : (\StrSet[\PlrSym](\VSet) \times
  \StrSet[\OppSym](\VSet)) \times \VSet \to \PthSet(\VSet)$ returns, for each
  position $\posElm \in \VSet$ and pair of strategies $(\strElm[\PlrSym],
  \strElm[\OppSym]) \in \StrSet[\PlrSym](\VSet) \times \StrSet[\OppSym](\VSet)$,
  the maximal $((\strElm[\PlrSym], \strElm[\OppSym]), \posElm)$-play
  $\playFun((\strElm[\PlrSym], \strElm[\OppSym]), \posElm)$.
  A path $\pthElm \in \PthSet(\posElm)$ is called a \emph{$(\strElm[\alpha],
  \posElm)$-play in $\VSet$}, if $\pthElm = \playFun((\strElm[\PlrSym],
  \strElm[\OppSym]), \posElm)$, for some $\strElm[\dual{\alpha}] \in
  \StrSet[\dual{\alpha}](\VSet)$.

  A \emph{parity game} is a tuple $\GamName = \tuplec {\AName} {\PrtSet}
  {\prtFun} \in \PG$, where $\AName$ is an arena, $\PrtSet \subset \SetN$ is a
  finite set of priorities, and $\prtFun : \PosSet \to \PrtSet$ is a
  \emph{priority function} assigning a priority to each position.
  The priority function can be naturally extended to games and paths as follows:
  $\prtFun(\GamName) \defeq \max[\posElm \in \PosSet] \, \prtFun(\posElm)$; for
  a path $\pthElm \in \PthSet$, we set $\prtFun(\pthElm) \defeq \max_{0 \leq i < \card{\pthElm}}
  \, \prtFun((\pthElm)_{i})$, if $\pthElm$ is finite,
  and $\prtFun(\pthElm) \defeq \limsup_{i \in \SetN} \prtFun((\pthElm)_{i})$,
  otherwise.
  A set of positions $\VSet \subseteq \PosSet$ is an $\alpha$-\emph{dominion},
  with $\alpha \in \{ \PlrSym, \OppSym \}$, if there exists an $\alpha$-strategy
  $\strElm[\alpha] \in \StrSet[\alpha](\VSet)$ such that, for all
  $\dual{\alpha}$-strategies $\strElm[\dual{\alpha}] \in
  \StrSet[\dual{\alpha}](\VSet)$ and positions $\posElm \in \VSet$, the
  induced play $\pthElm = \playFun((\strElm[\PlrSym], \strElm[\OppSym]),
  \posElm)$ is infinite and $\prtFun(\pthElm) \equiv_{2} \alpha$.
  In other words, $\strElm[\alpha]$ only induces on $\VSet$ infinite plays whose
  maximal priority visited infinitely often has parity $\alpha$.
  By $\GamName \!\setminus\! \VSet$ we denote the maximal subgame of $\GamName$
  with set of positions $\PosSet'$ contained in $\PosSet \!\setminus\! \VSet$
  and move relation $\MovRel'$ equal to the restriction of $\MovRel$ to
  $\PosSet'$.

\end{section}

% End of file Preliminaries.tex

%%****************************************************************************%%
%%                                                                            %%
%% Article Title                                                              %%
%%                                                                            %%
%% SectionI.tex                                                               %%
%%                                                                            %%
%% Revision 0                                                                 %%
%%                                                                            %%
%% Copyright (C) 2020, Massimo Benerecetti, Daniele Dell'Erba, Marco Faella,  %%
%%                     and Fabio Mogavero.                                    %%
%% All rights reserved.                                                       %%
%%                                                                            %%
%%****************************************************************************%%

% Begin of file SectionI.tex

\begin{section}{Solving Parity Games via Progress Measures}
  \label{sec:prgmsr}

  The abstract notion of \emph{progress measure}~\cite{Kla91} has been
  introduced as a way to encode \emph{global properties} on paths of a graph by
  means of simpler \emph{local properties} of adjacent vertexes, \ie, of edges.
  In particular, this notion has been successfully employed in the literature,
  \eg, for the solution of automata theory~\cite{SE89,Kla91a,KK91,Kla92,Kla94}
  and program verification~\cite{Kla92a,Var96a} problems.

  In the context of parity games~\cite{Jur00}, the graph property of interest,
  called \emph{parity property}, asserts that, along every path in the graph,
  the maximal priority occurring infinitely often is of odd parity.
  More precisely, in game theoretic terms, a \emph{parity progress measure}
  witnesses the existence of a strategy $\strElm$ for one of the two players,
  from now on player $\OppSym$, such that each path in the graph induced by
  fixing $\strElm$ satisfies the desired property, where the graph induced by
  that strategy is obtained from the game arena by removing all the moves
  exiting from position owned by player $\OppSym$, except those ones specified
  by $\strElm$ itself.
  A parity progress measure associates with each vertex of the underlying graph
  a value, called \emph{parity measure} (or simply \emph{measure}, for short),
  taken from some totally-ordered set.
  Measures are thus associated with positions in the game and the measure
  $\msrElm$ of a position $\posElm$ can intuitively be interpreted as a local
  assessment of \emph{how far} $\posElm$ is from satisfying the parity property,
  with the maximal value $\msrElm = \top$ denoting failure in the satisfaction
  of the property for $\posElm$.
  More precisely, a progress measure implicitly identifies a strategy $\strElm$
  with the following characteristic: in the graph induced by $\strElm$, along
  every path, measures cannot increase and they strictly decrease when passing
  through an even-priority position.
  This ensures that every path eventually gets trapped into a cycle whose
  maximal priority is odd.

  In order to obtain a progress measure, we start from some well-behaved
  assignment of measures to positions of the game.  The local information
  encoded by these measures is, then, propagated back along the edges of the
  underlying graph so as to associate with each position the information on the
  priorities occurring along the plays of some finite length starting at that
  position.
  The propagation process is performed by means of a low-level measure-update
  operator, called \emph{stretch} operator $+$.
  The operator computes the contribution that a given position $\posElm$ would
  provide to a given measure $\msrElm$.
  Consider, for instance, a position $\posElm$ that has an adjacent position
  $\uposElm$ with measure $\msrElm$.
  Then $\msrElm + \posElm$ is the measure that $\posElm$ would obtain by
  choosing the move leading to $\uposElm$, namely, the one obtained by
  augmenting $\eta$ with the contribution of (the priority of) position
  $\posElm$.
  The intuition is that when $\posElm$ is an even-priority position, the
  augmented measure $\msrElm + \posElm$ strictly increases, moving further away
  from the priority condition.

  The process described above terminates when no position can be pushed further
  away from the property.
  More specifically, each even position %of player $\PlrSym$ 
  has to strictly dominate
  the measures obtainable through all, respectively one of, its adjacent
  positions, depending on whether that position belongs to the player $\PlrSym$
  or to the player $\OppSym$, respectively.
  Similarly, each odd position %of player $\OppSym$ 
  must have measures no lower than
  those obtainable through all, respectively one of, its adjacent moves, again
  depending on the owner of the position.
  When this happens, the positions with measure $\top$ are the ones from which
  player $\PlrSym$ wins the parity game, while the remaining ones are those from
  which the opponent can win, by simply forcing plays of non-increasing
  measures.
  The measures currently associated with this second set of positions form a
  progress measure for the game.

  Different notions of parity measures have been proposed in the literature,
  see, \eg, \cite{Jur00,GW15,JL17,FJSSW17}.
  In this section we introduce an abstract concept of measure space and progress
  measure.
  All the progress measure based approaches in the literature, including the one
  presented in this article, can be viewed as instantiations of this abstract
  schema.

  \begin{subsection}{Measure-Function Spaces}

    As mentioned above, techniques based on progress measures rely on attaching,
    at each step of the computation, suitable information to all positions in
    the game and updating it until a fixpoint is reached.
    The piece of information associated with every single position is called the
    current \emph{measure} of that position, whereas the set of all possible
    measures is called a \emph{measure space}.
    Such a space is a totally ordered set, with minimum and maximum elements,
    and provides the two special operations of \emph{truncation} and
    \emph{stretch} that evaluate and update the measure of a given position \wrt
    another position.
    Intuitively, the truncation operator $\rst_{\posElm}$ disregards the
    contribution to a given measure that is due to positions with priority lower
    than that of $\posElm$ along the explored finite plays.
    The stretch operator $+$, introduced in the previous paragraph, propagates the
    contribution that the position would provide to a given measure.

    These two operators essentially embed the
    semantics of the parity property into the propagation operation that sits at
    the basis of the computation of a progress measure.
    At the abstract level, canonical instances of these operators can be any
    functions that preserve the maximum element and the order, except for
    possibly mapping different measures onto the same one.
    All these requirements are formalised by the following definition.

    \begin{definition}[Measure Space]
      \label{def:msrspc}
      A \emph{measure space} is a mathematical structure $\MsrName \defeq
      \tuplef {\MsrSet} {<} {\bot} {\top} {\rst} {+}$, whose components enjoy
      the following properties:
      \begin{enumerate}
        \item\label{def:msrspc(ord)}
          $\tupled {\MsrSet} {<} {\bot} {\top}$ is a \emph{strict total order}
          with \emph{minimum} and \emph{maximum} on elements referred to as
          \emph{measures};
        \item\label{def:msrspc(trn)}
          the function $\rst \colon \MsrSet \times \PosSet \to \MsrSet$, called
          \emph{truncation operator}, maps a measure $\msrElm \in \MsrSet$ and
          a position $\posElm \in \PosSet$ to a measure $\msrElm \rst_{\posElm}
          \in \MsrSet$; this operator is \emph{canonical} whenever
          \begin{inparaenum}[(i)]
            \item\label{def:msrspc(trn:top)}
              $\msrElm = \top$ \iff $\msrElm \rst_{\posElm} = \top$ and
            \item\label{def:msrspc(trn:mon)}
              if $\msrElm \leq \msrElm[][\star]$ then $\msrElm \rst_{\posElm}
              \leq \msrElm[][\star] \rst_{\posElm}$, for all $\msrElm[][\star]
              \in \MsrSet$;
          \end{inparaenum}
        \item\label{def:msrspc(str)}
          the function $+ \colon \MsrSet \times \PosSet \to \MsrSet$, called
          \emph{stretch operator}, maps a measure $\msrElm \in \MsrSet$ and a
          position $\posElm \in \PosSet$ to a measure $\msrElm + \posElm \in
          \MsrSet$; this operator is \emph{canonical} whenever
          \begin{inparaenum}[(i)]
            \item\label{def:msrspc(str:top)}
              $\msrElm = \top$ \iff $\msrElm + \posElm = \top$ and
            \item\label{def:msrspc(str:mon)}
              if $\msrElm \leq \msrElm[][\star]$ then $\msrElm + \posElm \leq
              \msrElm[][\star] + \posElm$, for all $\msrElm[][\star] \in
              \MsrSet$.
          \end{inparaenum}
      \end{enumerate}
    \end{definition}

    Notice that both operators are canonical if they are \emph{monotone}
    in their first argument and preserve the distinction between the measure $\top$
    and the other measures, in the sense that $\top$ cannot be
    obtained by truncating or stretching a non-top measure and, \viceversa, no
    non-$\top$ measure is derivable by truncating or stretching $\top$.
    \footnote{Readers familiar with the research published in~\cite{Jur00,JL17}
    might find interesting to observe that both the small progress
    measure~\cite{Jur00} and the succinct progress measure~\cite{JL17}
    algorithms make implicit use of a measure space with a canonical truncation
    operator, but a non-canonical stretch operator.
    In more detail, the \emph{$d/2$-tuples} associated with the positions during
    an execution form a totally ordered set with minimum and maximum, once
    extended with the value $\top$ and where the value $\bot$ is identified with
    the unique all-zero $d/2$-tuple.
    Moreover, the truncation operator is represented by the function that zeros
    all components of a $d/2$-tuple with indexes smaller than the priority of
    the position given as second argument.
    Finally, the maps over $d/2$-tuples induced by the ternary functions
    $\mthfun{Prog}(\cdot, \cdot, \cdot)$~\cite{Jur00} and $\mthfun{lift}(\cdot,
    \cdot, \cdot)$~\cite{JL17}, used in the definition of the function
    $\mthfun{Lift}(\cdot, \cdot)$ at the core of the algorithms, implement the
    corresponding stretch operators.
    Such operators are, however, not canonical, since they can map some of the
    non-$\top$ measures to $\top$, failing so to satisfy the \emph{if} direction
    of Condition~\ref{def:msrspc(str:top)} of Definition~\ref{def:msrspc}.}
    %The lack of this feature is precisely the technical expedient which ensures
    %the termination of these approaches, by forcing the reachable part of the
    %measure space to be finite.
    %In this work, we show that termination can be achieved by other means.
    %without necessarily requiring the finiteness of the space.

    Given a measure space $\MsrName$, a \emph{measure function} $\mfElm$
    formalises the intuitive association discussed above by mapping each
    position $\posElm$ in the game to a measure $\mfElm(\posElm)$ in $\MsrName$.
    In addition, the order relation $<$ between measures declared in $\MsrName$
    induces a pointwise partial order $\sqsubseteq$ on the set measure functions
    $\MFSet$ defined in the usual way.
    This set together with its induced order form what we call a
    \emph{measure-function space}.

    \begin{definition}[Measure-Function Space]
      \label{def:msrfunspc}
      The \emph{measure-function space} induced by a given measure space
      $\MsrName$ is the \emph{partial order} $\MFName \defeq \tupleb {\MFSet}
      {\sqsubseteq}$, whose components are defined as prescribed in the
      following:
      \begin{enumerate}
        \item\label{def:msrfunspc(set)}
          $\MFSet \defeq \PosSet \to \MsrSet$ is the set of all functions
          $\mfElm \in \MFSet$, named \emph{measure functions}, mapping each
          position $\posElm \in \PosSet$ to a measure $\mfElm(\posElm) \in
          \MsrSet$;
        \item\label{def:msrfunspc(ord)}
          for all $\mfElm[1], \mfElm[2] \in \MFSet$, it holds that $\mfElm[1]
          \sqsubseteq \mfElm[2]$ \iff $\mfElm[1](\posElm) \leq
          \mfElm[2](\posElm)$, for all positions $\posElm \in \PosSet$.
      \end{enumerate}
    \end{definition}

    By taking $\mfElm[\bot]$ as the measure function associating measure $\bot$
    with every position, the following property of measure-function spaces
    immediately follows.

    \begin{proposition}
      \label{prp:unqmin}
      Every measure-function space $\MFName$ contains a unique minimal element
      $\mfElm[\bot] \in \MFSet$.
    \end{proposition}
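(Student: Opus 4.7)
The plan is to show first that the constant function $\mfElm[\bot] : \PosSet \to \MsrSet$, defined by $\mfElm[\bot](\posElm) \defeq \bot$ for every $\posElm \in \PosSet$, is a well-defined element of $\MFSet$; this is immediate since Definition~\ref{def:msrspc}(\ref{def:msrspc(ord)}) guarantees that $\bot \in \MsrSet$, hence the assignment yields a function in $\PosSet \to \MsrSet = \MFSet$ as required by Definition~\ref{def:msrfunspc}(\ref{def:msrfunspc(set)}).

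Next, I would show that $\mfElm[\bot]$ is in fact a \emph{minimum} of $\tupleb{\MFSet}{\sqsubseteq}$, from which the existence of a minimal element follows trivially. To this end, pick an arbitrary $\mfElm \in \MFSet$: for every position $\posElm \in \PosSet$, since $\bot$ is the minimum of the strict total order $\tupled{\MsrSet}{<}{\bot}{\top}$, we have $\mfElm[\bot](\posElm) = \bot \leq \mfElm(\posElm)$. By Definition~\ref{def:msrfunspc}(\ref{def:msrfunspc(ord)}), this pointwise inequality is precisely $\mfElm[\bot] \sqsubseteq \mfElm$. Hence $\mfElm[\bot]$ lies below every element of $\MFSet$ in the induced partial order.

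Finally, uniqueness follows from the standard fact that in any partial order a minimum, whenever it exists, coincides with the unique minimal element: if $\mfElm' \in \MFSet$ were any minimal element, then from $\mfElm[\bot] \sqsubseteq \mfElm'$ together with the minimality of $\mfElm'$ (no element strictly below it), and using antisymmetry of $\sqsubseteq$ inherited pointwise from the total order on $\MsrSet$, one concludes $\mfElm' = \mfElm[\bot]$. There is no real obstacle here, the only point that deserves care is keeping the distinction between \emph{minimum} and \emph{minimal} clean: the proof really establishes the stronger statement that $\mfElm[\bot]$ is the bottom of $\MFName$, and the uniqueness claim in the proposition is then an immediate corollary.
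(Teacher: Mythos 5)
Your proof is correct and follows exactly the argument the paper intends: the paper gives no explicit proof, simply defining $\mfElm[\bot]$ as the constant-$\bot$ function and noting the claim "immediately follows," which is precisely your observation that $\bot$ being the minimum of the measure space makes $\mfElm[\bot]$ the minimum (hence the unique minimal element) of the pointwise order. Your extra care in distinguishing \emph{minimum} from \emph{minimal} is a reasonable elaboration of the same one-line argument, not a different route.
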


    The \emph{$\PlrSym$-denotation} (\resp, \emph{$\OppSym$-denotation}) of a
    measure function $\mfElm \in \MFSet$ is the set $\denot{\mfElm}[\PlrSym]
    \defeq \set{ \posElm \in \PosSet }{ \mfElm(\posElm) \rst_{\posElm} = \top}$
    (\resp, $\denot{\mfElm}[\OppSym] \defeq \dual{\denot{\mfElm}[\PlrSym]}$) of
    all positions having maximal (\resp, non-maximal) measures associated with
    them in $\mfElm$, once truncated.
    Similarly, the \emph{$\bot$-denotation} (\resp, \emph{$+$-denotation}) of
    $\mfElm$ is the set $\denot{\mfElm}[\bot] \defeq \set{ \posElm \in \PosSet
    }{ \mfElm(\posElm) \rst_{\posElm} = \bot}$ (\resp, $\denot{\mfElm}[+] \defeq
    \dual{\denot{\mfElm}[\bot]}$) of all positions having minimal (\resp,
    non-minimal) measures.

    According to the intuition reported at the beginning of this section, the
    measure associated with a given position $\posElm$ is meant to encode
    information about the priorities encountered along the plays starting at
    that positions.
    More specifically, each measure contains the information gathered along some
    finite path and can be obtained by repeatedly applying the stretch operator
    backwards from the last position of that path.
    To formalise this intuition, we introduce the notion of measure
    $\msrFun(\pthElm)$ of a finite path $\pthElm$, including the empty one
    $\varepsilon$, that can be recursively computed via the function $\msrFun
    \colon \FPthSet \to \MFSet$ as follows:
    \[
      \msrFun(\pthElm) \defeq
      \begin{cases}
        \bot,
        & \text{if } \pthElm = \varepsilon; \\
        \msrFun(\pthElm') + \posElm,
        & \text{otherwise, where } \pthElm = \posElm \cdot \pthElm', \text{ for
        some unique } \posElm \in \PosSet \text { and } \pthElm' \in \FPthSet.
      \end{cases}
    \]
    At this point, we can constrain a measure function $\mfElm$, by requiring
    the measure $\mfElm(\posElm)$ of a position $\posElm$ to be witnessed by
    some finite path $\pi$ starting at $\posElm$, \ie, $\mfElm(\posElm) =
    \msrFun(\pthElm)$.
    By doing this, we obtain a \emph{ground measure function}.

    \begin{definition}[Ground Measure-Function Space]
      \label{def:grnmsrfunspc}
      The \emph{ground measure-function space} induced by a given measure space
      $\MsrName$ is the \emph{subspace} $\tupleb {\GMFSet} {\sqsubseteq}$ of
      the measure-function space $\MFName$, where $\GMFSet \defeq \set{ \mfElm 
      \in \MFSet }{ \forall \posElm \in \PosSet \ldotp \mfElm(\posElm) \in 
      \GMsrSet(\posElm) }$
      %$\GMFSet \defeq \PosSet \cto[\posElm] \GMsrSet(\posElm)$
      with $\GMsrSet(\posElm) \defeq \set{
      \msrFun(\pthElm) }{ \pthElm \in \FPthSet(\posElm) }$.
    \end{definition}

    All progress-measure approaches proposed in the literature implicitly work
    by updating ground measure functions only.
    The same holds true for the algorithm proposed in the current work, which
    actually runs on the even more restricted set of \emph{simple measure
    functions} introduced in Section~\ref{sec:qsidommsr}. %the second half of
%     Notice however that the property of being ground for a measure function is
%     not strictly required to prove soundness and completeness of the approaches,
%     as shown in the following.

  \end{subsection}

  \begin{subsection}{Progress-Measure Functions}

    The following definition, which tightly connects the truncation and stretch
    operators, formalises the essential semantic features of a measure space
    that are required to provide a meaningful notion of progress measure, as
    proven in Theorem~\ref{thm:prgmsr}.

    \begin{definition}[Progress Measure Space]
      \label{def:prgmsrspc}
      A measure space $\MsrName$ is a \emph{progress measure space} if the following
      properties hold true, for each measure $\msrElm \in \MsrSet$ and position
      $\posElm \in \PosSet$:
      \begin{enumerate}
        \item\label{def:prgmsrspc(grt)}
          $\msrElm \rst_{\posElm} < (\msrElm + \posElm) \rst_{\posElm}$, if
          $\msrElm \rst_{\posElm} < \top$ and $\prtFun(\posElm)$ is even;
        \item\label{def:prgmsrspc(geq)}
          $\msrElm \rst_{\uposElm} \leq (\msrElm + \posElm) \rst_{\uposElm}$,
          for all even-priority positions $\uposElm$ with $\prtFun(\posElm) \leq
          \prtFun(\uposElm)$.
      \end{enumerate}
    \end{definition}

    The first condition requires that the contribution to a measure due to an
    even position $\posElm$ cannot be cancelled out by truncating at $\posElm$
    itself and that such contribution is always meaningful, namely strictly
    increasing.
    This matches the intuition that even priorities tend to move away from the
    parity condition, therefore increasing the associated measure.
    The second property, instead, ensures that no lower-priority position
    $\posElm$ can overcome a higher even priority position $\uposElm$, in the
    sense that the contribution of $\posElm$ to the measure cannot move closer
    to the parity condition, once the stretched measure is analysed \wrt
    $\uposElm$.
    Technically, this means that the stretch forced by a lower priority position
    is always viewed as a non-strict improvement by any even position with
    higher priority, regardless of the parity of the first one.

    We can now turn our attention to the notion of \emph{progress measure}.
    Intuitively, a measure function over a progress mesure space is a progress 
    measure if it guarantees that every position with a non-$\top$ measure can 
    progress toward the parity property, namely toward lower measures.
    In other words, this establishes a type of stability property on the
    positions of a game according to the following intuition, which takes into
    account the opposite attitude of the two players.
    While player $\PlrSym$ tries to push toward higher measures, the opponent
    will try to keep the measure as low as possible.
    If player $\PlrSym$ cannot increase the measures of its positions and the
    opponent is not forced to increase the measures of its own positions, then
    player $\PlrSym$ cannot prevent the player $\OppSym$ from winning the game
    from all those positions whose measure did not reach value $\top$.
    This corresponds to requiring that player $\PlrSym$ cannot increase the
    measure of its positions by stretching the measure of any adjacent move,
    while the opponent can always choose a move whose corresponding stretch
    does not force the increment of the measure.
    The following definition precisely formalises this intuitive explanation.

    \begin{definition}[Progress Measure]
      \label{def:prgmsr}
      A measure function $\mfElm \in \MFSet$ is a \emph{progress measure} if the
      following conditions hold, for all positions $\posElm \in \PosSet$:
      \begin{enumerate}
        \item\label{def:prgmsr(plr)}
          $\mfElm(\wposElm) + \posElm \leq \mfElm(\posElm)$, for all adjacents
          $\wposElm \in \MovRel(\posElm)$ of $\posElm$, if $\posElm \in
          \PosSet[\PlrSym]$;
        \item\label{def:prgmsr(opp)}
          $\mfElm(\wposElm) + \posElm \leq \mfElm(\posElm)$, for some adjacent
          $\wposElm \in \MovRel(\posElm)$ of $\posElm$, if $\posElm \in
          \PosSet[\OppSym]$.
      \end{enumerate}
      A $\OppSym$-strategy $\strElm \in \StrSet[\OppSym]$ is
      \emph{$\mfElm$-coherent} if $\mfElm(\strElm(\posElm)) + \posElm \leq
      \mfElm(\posElm)$, for all $\OppSym$-positions $\posElm \in
      \PosSet[\OppSym]$.
    \end{definition}

    Assuming a progress measure space with a canonical truncation operator, the
    notion of progress measure actually ensures that any play satisfying this
    progress condition will eventually be trapped in a cycle in which the
    maximal priority is odd, thereby witnessing a win for player $\OppSym$.
    This is established by the following theorem.

    \begin{restatable}{theorem}{prgmsrr}(Progress Measure)
      \label{thm:prgmsr}
      Let $\mfElm \in \MFSet$ be a progress measure \wrt a progress measure
      space $\MsrName$ with a canonical truncation operator.
      Then, $\denot{\mfElm}[\OppSym]$ is a $\OppSym$-dominion for which all
      $\mfElm$-coherent $\OppSym$-strategies are $\OppSym$-winning.
    \end{restatable}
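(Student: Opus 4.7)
The plan is to exhibit, for any $\mfElm$-coherent $\OppSym$-strategy $\strElm$, that every infinite play conforming to $\strElm$ and starting in $\denot{\mfElm}[\OppSym]$ has an odd maximal priority occurring infinitely often. The argument is standard in spirit for progress-measure proofs, but with the abstract axiomatisation here one has to be careful to invoke only the canonicity conditions on the truncation/stretch operators and the two conditions of Definition~\ref{def:prgmsrspc}.

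First I would show that such plays are infinite and remain inside $\denot{\mfElm}[\OppSym]$. Infiniteness follows from the absence of sinks in the arena together with the fact that both $\strElm$ and any opposing $\PlrSym$-strategy are total on their respective positions. For invariance, I proceed by induction on the play prefix: if $\posElm_i \in \denot{\mfElm}[\OppSym]$, then by canonicity of truncation (Condition~\ref{def:msrspc(trn:top)}) we have $\mfElm(\posElm_i) < \top$; the next position $\posElm_{i+1}$ satisfies $\mfElm(\posElm_{i+1}) + \posElm_i \leq \mfElm(\posElm_i) < \top$ by either the progress condition (Definition~\ref{def:prgmsr}.\ref{def:prgmsr(plr)}) when $\posElm_i \in \PosSet[\PlrSym]$ or by $\mfElm$-coherence when $\posElm_i \in \PosSet[\OppSym]$; finally, canonicity of the stretch operator (Condition~\ref{def:msrspc(str:top)}) gives $\mfElm(\posElm_{i+1}) < \top$, hence $\posElm_{i+1} \in \denot{\mfElm}[\OppSym]$ by canonicity of truncation again.

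Now let $\pthElm = \posElm_0 \posElm_1 \ldots$ be an infinite such play and, for contradiction, assume the maximal priority $p$ occurring infinitely often is even. Choose $i_0$ large enough that $\prtFun(\posElm_i) \leq p$ for all $i \geq i_0$, and use the pigeonhole principle on the finite position set to pick a fixed $\uposElm \in \PosSet$ with $\prtFun(\uposElm) = p$ and an infinite index sequence $i_0 \leq j_1 < j_2 < \ldots$ with $\posElm_{j_k} = \uposElm$. The core of the proof is to track the sequence $\mfElm(\posElm_i) \rst_\uposElm$ for $i \geq i_0$. Monotonicity of truncation (Condition~\ref{def:msrspc(trn:mon)}) applied to $\mfElm(\posElm_{i+1}) + \posElm_i \leq \mfElm(\posElm_i)$ yields $(\mfElm(\posElm_{i+1}) + \posElm_i) \rst_\uposElm \leq \mfElm(\posElm_i) \rst_\uposElm$, and Definition~\ref{def:prgmsrspc}.\ref{def:prgmsrspc(geq)} (using $\prtFun(\posElm_i) \leq \prtFun(\uposElm)$ and $\uposElm$ even) provides $\mfElm(\posElm_{i+1}) \rst_\uposElm \leq (\mfElm(\posElm_{i+1}) + \posElm_i) \rst_\uposElm$. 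Chaining gives the non-increasingness of $\mfElm(\posElm_i) \rst_\uposElm$ from $i_0$ onwards.

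The strict drop at the recurring instances of $\uposElm$ will be the crux. At each $i = j_k$, invariance yields $\mfElm(\posElm_{j_k+1}) < \top$, hence by canonical truncation $\mfElm(\posElm_{j_k+1}) \rst_\uposElm < \top$, so Definition~\ref{def:prgmsrspc}.\ref{def:prgmsrspc(grt)} applies (with $\posElm = \uposElm$ even) and gives a \emph{strict} inequality $\mfElm(\posElm_{j_k+1}) \rst_\uposElm < (\mfElm(\posElm_{j_k+1}) + \uposElm) \rst_\uposElm \leq \mfElm(\uposElm) \rst_\uposElm$. Combined with the non-increasingness on the intervals $(j_k, j_{k+1}]$, this forces $\mfElm(\posElm_{j_{k+1}}) \rst_\uposElm < \mfElm(\uposElm) \rst_\uposElm$; but $\posElm_{j_{k+1}} = \uposElm$, so $\mfElm(\uposElm) \rst_\uposElm < \mfElm(\uposElm) \rst_\uposElm$, a contradiction. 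I expect the main subtlety to be exactly the interplay between the \emph{weak} inequality of Condition~\ref{def:prgmsrspc(geq)} (available at every step along the play) and the \emph{strict} inequality of Condition~\ref{def:prgmsrspc(grt)} (available only when the truncating position coincides with the even-priority position being stretched over), together with the need to invoke canonicity of truncation to move between statements about $\msrElm < \top$ and $\msrElm \rst_\uposElm < \top$.
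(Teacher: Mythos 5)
Your overall strategy is the same as the paper's: assume towards a contradiction that a play compatible with an $\mfElm$-coherent $\OppSym$-strategy is won by player $\PlrSym$, isolate a recurring even priority dominating the tail of the play, and combine the weak inequality of Condition~\ref{def:prgmsrspc(geq)} (non-increase of the truncated measures along each move) with the strict inequality of Condition~\ref{def:prgmsrspc(grt)} (strict increase across the dominating even-priority position, read against the direction of the play) to obtain $\msrElm \rst_{\uposElm} < \msrElm \rst_{\uposElm}$. The paper extracts a simple cycle from the memoryless strategies and truncates at the cycle head, while you truncate at a fixed infinitely recurring position of maximal recurring priority; that difference is immaterial.

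There is, however, one genuine gap: your invariance step concludes $\mfElm(\posElm_{i + 1}) \neq \top$ from $\mfElm(\posElm_{i + 1}) + \posElm_{i} \leq \mfElm(\posElm_{i}) < \top$ by invoking canonicity of the \emph{stretch} operator, namely Condition~\ref{def:msrspc(str:top)} of Definition~\ref{def:msrspc}. The theorem assumes only a canonical \emph{truncation} operator; the stretch operator is deliberately not assumed canonical, precisely so that the statement covers the small and succinct progress measures, whose stretch operators map some non-$\top$ measures to $\top$ (see the footnote following Definition~\ref{def:msrspc}). Without Condition~\ref{def:msrspc(str:top)} you cannot pass from $\mfElm(\posElm_{i + 1}) + \posElm_{i} \neq \top$ to $\mfElm(\posElm_{i + 1}) \neq \top$, and therefore you cannot supply the hypothesis $\mfElm(\posElm_{j_{k} + 1}) \rst_{\uposElm} < \top$ that Condition~\ref{def:prgmsrspc(grt)} needs at the crux of your contradiction. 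The paper obtains the required non-$\top$-ness without touching the stretch operator: it supposes $\mfElm(\posElm_{i + 1}) \rst_{\uposElm} = \top$ and propagates this \emph{backwards} along the play, using Condition~\ref{def:prgmsrspc(geq)}, monotonicity of truncation (Condition~\ref{def:msrspc(trn:mon)}), and the maximality of $\top$ in the order to force $\mfElm(\posElm_{i}) \rst_{\uposElm} = \top$, until reaching the initial position, where $\mfElm(\posElm_{0}) \rst_{\uposElm} \neq \top$ holds by Condition~\ref{def:msrspc(trn:top)} because $\posElm_{0} \in \denot{\mfElm}[\OppSym]$. Replacing your forward invariance argument with this backward propagation repairs the proof while remaining within the stated hypotheses.
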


  \end{subsection}

\end{section}

% End of file SectionI.tex

%%****************************************************************************%%
%%                                                                            %%
%% Article Title                                                              %%
%%                                                                            %%
%% SectionII.tex                                                              %%
%%                                                                            %%
%% Revision 0                                                                 %%
%%                                                                            %%
%% Copyright (C) 2020, Massimo Benerecetti, Daniele Dell'Erba, Marco Faella,  %%
%%                     and Fabio Mogavero.                                    %%
%% All rights reserved.                                                       %%
%%                                                                            %%
%%****************************************************************************%%

% Begin of file SectionII.tex

\begin{section}{Solving Parity Games via Quasi-Dominion Measures}
  \label{sec:qsidommsr}

  The framework set forth in the previous section is already sufficient to
  define a sound and complete approach for the solution of parity games, as
  shown in~\cite{Jur00}.
  Here, however, we shall further refine the measure space in order to
  accommodate the notion of quasi dominion into the measure functions.

  \begin{subsection}{Quasi-Dominion-Measure Functions}

    The notion of \emph{quasi dominion} was originally introduced for parity
    games in~\cite{BDM16,BDM18} and, in a slightly different form,
    in~\cite{Fea10a}.
    Here we provide a somewhat more general version that can be easily
    integrated with measure functions.

    \begin{definition}[Quasi Dominion]
      \label{def:qsidom}
      A set of positions $\QSet \subseteq \PosSet$ is a \emph{weak quasi
      $\PlrSym$-dominion} if there exists a $\PlrSym$-strategy $\strElm[\PlrSym]
      \in \StrSet[\PlrSym](\QSet)$, called \emph{$\PlrSym$-witness for $\QSet$},
      such that, for all $\OppSym$-strategies $\strElm[\OppSym] \in
      \StrSet[\OppSym](\QSet)$ and positions $\posElm \in \QSet$, if the induced
      play $\pthElm = \playFun((\strElm[\PlrSym], \strElm[\OppSym]), \posElm)$
      is infinite then $\prtFun(\pthElm)$ is even.
      If the even-parity condition holds also for finite plays, then $\QSet$ is
      called \emph{quasi $\PlrSym$-dominion}.
    \end{definition}

    Essentially, a quasi $\PlrSym$-dominion consists in a set $\QSet$ of
    positions, starting from which player $\PlrSym$ can force plays whose
    maximal prefixes contained in $\QSet$ have even maximal priority.
    Observe
    that, in case the maximal prefix contained in $\QSet$ is infinite, then the
    play is actually winning for player $\PlrSym$.  When the condition holds only for
    infinite plays, the set is called weak quasi $\PlrSym$-dominion.
    Clearly, any quasi $\PlrSym$-dominion is also a weak quasi
    $\PlrSym$-dominion.
    Moreover, the latter are closed under subsets, while the former are not.
    It is an immediate consequence of the definition above that all infinite
    plays induced by the $\PlrSym$-witness, if any, are winning for player $\PlrSym$.
    This also entails that any subset $\QSet[][\star]$ of a weak quasi
    $\PlrSym$-dominion $\QSet$, in which the player $\PlrSym$ can remains and
    from where the opponent cannot escape, is a $\PlrSym$-dominion.
    Indeed, in such a set of positions, player $\PlrSym$ always has moves that
    remain in $\QSet[][\star]$, while the opponent can only choose moves
    remaining in $\QSet[][\star]$.
    Hence, any play compatible with the $\PlrSym$-witness for $\QSet$ that
    starts in $\QSet[][\star]$ is infinite and entirely contained
    in $\QSet[][\star]$.
    We have, so, the following result.

    \begin{corollary}[Quasi Dominion]
      \label{cor:qsidom}
      Let $\QSet \subseteq \PosSet$ be a \emph{weak quasi $\PlrSym$-dominion},
      $\strElm[\PlrSym] \in \StrSet[\PlrSym](\QSet)$ one of its
      $\PlrSym$-witnesses, and $\QSet[][\star] \subseteq \QSet$ a subset such
      that for all positions $\posElm \in \QSet[][\star] \cap \PosSet[\PlrSym]$
      it holds $\strElm[\PlrSym](\posElm) \in \QSet[][\star]$ and
      for all positions $\posElm \in \QSet[][\star] \cap \PosSet[\OppSym]$
      it holds $\MovRel(\posElm) \subseteq \QSet[][\star]$.
      Then, $\QSet[][\star]$ is a $\PlrSym$-dominion.
    \end{corollary}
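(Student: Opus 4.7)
The plan is to use the restriction $\strElm[\PlrSym] \prj \QSet[][\star]$ of the witness strategy $\strElm[\PlrSym]$ to the subset $\QSet[][\star]$ as the $\PlrSym$-winning strategy certifying that $\QSet[][\star]$ is a $\PlrSym$-dominion. First I would check that this restriction is a well-formed $\PlrSym$-strategy on $\QSet[][\star]$: for every $\PlrSym$-position $\posElm \in \QSet[][\star]$, the hypothesis $\strElm[\PlrSym](\posElm) \in \QSet[][\star]$ ensures the chosen successor lies in $\QSet[][\star]$, and since $\strElm[\PlrSym]$ was already a valid $\PlrSym$-strategy on $\QSet$, the move is compatible with $\MovRel$.

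Next, I would prove by a straightforward induction on the length of prefixes that every play $\pthElm = \playFun((\strElm[\PlrSym] \prj \QSet[][\star], \strElm[\OppSym]), \posElm)$, starting at an arbitrary $\posElm \in \QSet[][\star]$ and compatible with any $\OppSym$-strategy $\strElm[\OppSym] \in \StrSet[\OppSym](\QSet[][\star])$, stays entirely within $\QSet[][\star]$. Indeed, at each visited $\PlrSym$-position the restricted strategy sends the play into $\QSet[][\star]$, while at each visited $\OppSym$-position the closure hypothesis $\MovRel(\posElm) \subseteq \QSet[][\star]$ leaves no possibility to escape. Since the underlying arena has no sinks, such a play cannot terminate and is therefore infinite.

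Finally, I would lift this to the winning condition. Because $\QSet[][\star] \subseteq \QSet$, the play $\pthElm$ can be viewed as an infinite play in $\QSet$ compatible with $\strElm[\PlrSym]$; extending $\strElm[\OppSym]$ arbitrarily to an $\OppSym$-strategy on the whole of $\QSet$ does not alter $\pthElm$, since the play never leaves $\QSet[][\star]$. Invoking Definition~\ref{def:qsidom} on $\QSet$ with witness $\strElm[\PlrSym]$ then immediately yields that $\prtFun(\pthElm)$ is even, which is precisely the $\PlrSym$-dominion requirement for $\QSet[][\star]$ via $\strElm[\PlrSym] \prj \QSet[][\star]$. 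The proof is essentially clean bookkeeping; the only mildly delicate point is the harmless extension of the $\OppSym$-strategy from $\QSet[][\star]$ to $\QSet$, which is cosmetic because the closure conditions guarantee that no extending choice is ever exercised along a play originating in $\QSet[][\star]$.
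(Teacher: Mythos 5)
Your proposal is correct and follows essentially the same route as the paper, which justifies this corollary in the prose immediately preceding it: the two closure hypotheses force every play compatible with the (restricted) witness to remain inside $\QSet[][\star]$ and hence to be infinite, and the weak quasi $\PlrSym$-dominion property of $\QSet$ then yields even parity for all such plays. Your extra care about extending the $\OppSym$-strategy from $\QSet[][\star]$ to $\QSet$ is a harmless formalisation of a step the paper leaves implicit.
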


    The notion of progress measure introduced in the previous section basically
    gives us positions that do satisfy the parity condition and, thus, are
    winning for player $\OppSym$, namely the non-$\top$ positions.
    This is done by enforcing on the measure space the conditions of
    Definition~\ref{def:prgmsrspc} that formally captures the idea that even
    priority positions push further away form the parity property.
    However, the progress measure is an asymmetric notion, centred around one
    of the two players, specifically player $\OppSym$, and does not provide any
    meaningful information on the other player.
    More specifically, no estimation on how far player $\PlrSym$ is from winning the
    game starting in a given position, \ie, from satisfying the dual/even parity
    property, can be extracted from it.

    Quasi dominions, however, are precisely intended to encode the dual
    information for player $\PlrSym$.
    In this case, the odd priority positions are those that push further away
    from satisfying the dual/even parity property.
    A natural way to embed information about quasi dominions into the measures
    is, thus, to enforce the dual conditions of Definition~\ref{def:prgmsrspc},
    which leads us to the notion of \emph{regress measure}.
    Here we constrain the behaviour of the truncation and stretch operators
    \wrt the odd positions, instead of the even ones.

    \begin{definition}[Regress Measure Space]
      \label{def:regmsrspc}
      A measure space $\MsrName$ is a \emph{regress space} if the following
      properties hold true, for each measure $\msrElm \in \MsrSet$ and position
      $\posElm \in \PosSet$:
      \begin{enumerate}
        \item\label{def:regmsrspc(les)}
          $(\msrElm + \posElm) \rst_{\posElm} < \msrElm \rst_{\posElm}$, if
          $\bot < \msrElm \rst_{\posElm} < \top$ and $\prtFun(\posElm)$ is odd;
        \item\label{def:regmsrspc(leq)}
          $(\msrElm + \posElm) \rst_{\uposElm} \leq \msrElm \rst_{\uposElm}$,
          for all odd-priority positions $\uposElm$ with $\prtFun(\posElm) \leq
          \prtFun(\uposElm)$.
      \end{enumerate}
    \end{definition}

    We can now define the notion of \emph{regress measure} as the dual of the
    progress measure.

    \begin{definition}[Regress Measure]
      \label{def:regmsr}
      A measure function $\mfElm \in \MFSet$ is a \emph{regress measure} if the
      following conditions hold, for all positions $\posElm \in
      \denot{\mfElm}[+] \setminus \denot{\mfElm}[\PlrSym]$:
      \begin{enumerate}
        \item\label{def:regmsr(plr)}
          $\mfElm(\posElm) \leq \mfElm(\wposElm) + \posElm$, for some adjacent
          $\wposElm \in \MovRel(\posElm)$ of $\posElm$, if $\posElm \in
          \PosSet[\PlrSym]$;
        \item\label{def:regmsr(opp)}
          $\mfElm(\posElm) \leq \mfElm(\wposElm) + \posElm$, for all adjacents
          $\wposElm \in \MovRel(\posElm)$ of $\posElm$, if $\posElm \in
          \PosSet[\OppSym]$.
      \end{enumerate}
      A $\PlrSym$-strategy $\strElm \in \StrSet[\PlrSym]$ is
      \emph{$\mfElm$-coherent} if $\mfElm(\posElm) \leq \mfElm(\strElm(\posElm))
      + \posElm$, for all $\PlrSym$-positions $\posElm \in \PosSet[\PlrSym]$.
    \end{definition}

    Regress measures are meant to ensure that all the positions whose truncation is neither
    $\bot$ nor $\top$ form a weak quasi dominion for player $\PlrSym$,
    as established by the following theorem.

    \begin{restatable}{theorem}{regmsrr}(Regress Measure)
      \label{thm:regmsr}
      Let $\mfElm \in \MFSet$ be a regress measure \wrt a regress measure space
      $\MsrName$ with a canonical truncation operator.
      Then, $\denot{\mfElm}[+] \setminus \denot{\mfElm}[\PlrSym]$ is a weak
      quasi $\PlrSym$-dominion for which all $\mfElm$-coherent
      $\PlrSym$-strategies are $\PlrSym$-witnesses, once restricted to
      $\denot{\mfElm}[+] \setminus \denot{\mfElm}[\PlrSym]$.
    \end{restatable}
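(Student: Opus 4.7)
The plan is to dualise the argument sketched for Theorem~\ref{thm:prgmsr}. Set $\QSet \defeq \denot{\mfElm}[+] \setminus \denot{\mfElm}[\PlrSym]$ and fix an arbitrary $\mfElm$-coherent $\PlrSym$-strategy $\strElm$. I will show that the restriction $\strElm \prj \QSet$ is a $\PlrSym$-witness for $\QSet$, \ie, for every $\OppSym$-strategy $\strElm[\OppSym] \in \StrSet[\OppSym](\QSet)$ and every $\posElm \in \QSet$, the induced play $\pthElm = \playFun((\strElm \prj \QSet, \strElm[\OppSym]), \posElm) \in \PthSet(\QSet)$ has $\prtFun(\pthElm)$ even whenever it is infinite. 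In the infinite case every $(\pthElm)_i$ lies in $\QSet$ by definition of $\playFun$ on $\QSet$, so this is the only case requiring attention.

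The first step is to establish the local invariant $\mfElm((\pthElm)_i) \leq \mfElm((\pthElm)_{i+1}) + (\pthElm)_i$ for all $i$, via a case split on the owner of $(\pthElm)_i \in \QSet$: player positions are handled by the $\mfElm$-coherence of $\strElm$, and opponent positions by the universal clause of Definition~\ref{def:regmsr}, which applies since $(\pthElm)_i \in \denot{\mfElm}[+] \setminus \denot{\mfElm}[\PlrSym]$.

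Next, I argue by contradiction that $p^{\star} \defeq \prtFun(\pthElm)$ cannot be odd. Pick an index $N$ past which only priorities $\leq p^{\star}$ appear and a position $\uposElm$ of priority $p^{\star}$ recurring infinitely often after $N$; then study the sequence $\phi_i \defeq \mfElm((\pthElm)_i) \rst_{\uposElm}$ for $i \geq N$. Combining the local invariant, monotonicity of the canonical truncation, and Condition~\ref{def:regmsrspc(leq)} of Definition~\ref{def:regmsrspc} (which applies since $\prtFun((\pthElm)_i) \leq \prtFun(\uposElm) = p^{\star}$ and $\uposElm$ has odd priority) yields $\phi_i \leq \phi_{i+1}$. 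At the infinitely many indices $i$ with $(\pthElm)_i = \uposElm$, the strict clause (Condition~\ref{def:regmsrspc(les)}) upgrades this to $\phi_i < \phi_{i+1}$, provided that $\bot < \phi_{i+1} < \top$.

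The main obstacle is securing these two bounds on $\phi_{i+1}$ and extracting a contradiction from the strict growth. The upper bound $\phi_{i+1} < \top$ follows from canonicity of $\rst$: $(\pthElm)_{i+1} \in \QSet$ precludes $\mfElm((\pthElm)_{i+1}) = \top$ (else its truncation at $(\pthElm)_{i+1}$ would equal $\top$, contradicting $(\pthElm)_{i+1} \notin \denot{\mfElm}[\PlrSym]$), so $\phi_{i+1} \neq \top$. For the lower bound, let $k$ be the first index with $(\pthElm)_k = \uposElm$; then for every $i + 1 \geq k$, the non-decreasingness established above gives $\phi_{i+1} \geq \phi_k = \mfElm(\uposElm) \rst_{\uposElm} > \bot$, since $\uposElm \in \QSet \subseteq \denot{\mfElm}[+]$. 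From index $k$ onwards, $\phi_i$ is therefore a non-decreasing sequence confined to the open interval $(\bot, \top)$ of the totally ordered measure space that strictly increases at infinitely many indices, contradicting the absence of infinite ascending chains in that interval, as is standard for the finite measure spaces employed in the progress-measure setting. Hence $p^{\star}$ must be even, completing the proof.
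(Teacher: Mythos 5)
Your overall strategy is sound and correctly dualises the argument for Theorem~\ref{thm:prgmsr}: the local invariant $\mfElm((\pthElm)_{i}) \leq \mfElm((\pthElm)_{i+1}) + (\pthElm)_{i}$, its truncation at a maximal-priority odd position, and the use of Conditions~\ref{def:regmsrspc(les)} and~\ref{def:regmsrspc(leq)} of Definition~\ref{def:regmsrspc} to turn the resulting chain into infinitely many strict increases are all exactly the right ingredients, and your handling of the $\bot < \phi_{i+1} < \top$ side conditions via canonicity of $\rst$ and the definitions of $\denot{\mfElm}[+]$ and $\denot{\mfElm}[\PlrSym]$ is correct.

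The genuine gap is in the very last step. You discharge the contradiction by appealing to ``the absence of infinite ascending chains in that interval, as is standard for the finite measure spaces employed in the progress-measure setting.'' No such hypothesis is available: Definition~\ref{def:msrspc} only requires $\tupled{\MsrSet}{<}{\bot}{\top}$ to be a strict total order with minimum and maximum, and the theorem is stated for arbitrary regress measure spaces with a canonical truncation operator. Indeed the paper's own concrete measure space (Definition~\ref{def:naimsrstr}) is infinite and contains unbounded strictly ascending chains strictly between $\bot$ and $\top$; finiteness is only ever established for the \emph{simple} measures of Proposition~\ref{prp:fintotord}, which play no role in this theorem. So the step, as justified, fails. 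It is repairable in two ways. The cheap fix is to observe that $\phi_{i} = \mfElm((\pthElm)_{i}) \rst_{\uposElm}$ ranges over the finite set $\set{\mfElm(\posElm) \rst_{\uposElm}}{\posElm \in \PosSet}$, since $\mfElm$ is fixed and $\PosSet$ is finite; a non-decreasing sequence over a finite totally ordered set cannot strictly increase infinitely often. The paper instead exploits memoryless determinacy of the fixed strategies to extract a simple cycle $(\pthElm)_{h} = (\pthElm)_{h+k}$ whose maximal priority is odd and attained at index $h$, and chains the inequalities around that single loop to obtain $\mfElm((\pthElm)_{h}) \rst_{(\pthElm)_{h}} < \mfElm((\pthElm)_{h}) \rst_{(\pthElm)_{h}}$, contradicting irreflexivity of $<$; this avoids any counting over an infinite tail. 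A further minor point: your index $k$ should be chosen as the first occurrence of $\uposElm$ \emph{after} $N$, since the monotonicity of $\phi$ is only established from $N$ onwards.
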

    %
    %proof moved to AppendixB

    A \emph{quasi-dominion measure} is, then, defined as a regress measure with
    the additional property that all the positions with measure $\top$ form a
    $\PlrSym$-dominion, \ie, are indeed winning for player $\PlrSym$.

    \begin{definition}[Quasi-Dominion Measure]
      \label{def:qsidommsr}
      A measure function $\mfElm \!\in\! \MFSet$ is a \emph{quasi-dominion
      measure} (\qdmf, for short) if it is a regress measure for which
      $\denot{\mfElm}[\PlrSym]$ is a $\PlrSym$-dominion.
      $\QDMFSet$ denotes the set of all {\qdmf}s.
    \end{definition}

    The following theorem establishes the main property of quasi-dominion
    measures, namely that the set of non-$\bot$ positions always forms a weak 
    quasi $\PlrSym$-dominion.

    \begin{restatable}{theorem}{qsidommsrir}(Quasi-Dominion Measure I)
      \label{thm:qsidommsri}
      Let $\mfElm \in \MFSet$ be a quasi-dominion measure \wrt a regress measure
      space $\MsrName$ with a canonical truncation operator.
      Then, $\denot{\mfElm}[+]$ is a weak quasi $\PlrSym$-dominion for which all
      $\mfElm$-coherent $\PlrSym$-strategies that are winning on
      $\denot{\mfElm}[\PlrSym]$ are $\PlrSym$-witnesses, once restricted to
      $\denot{\mfElm}[+]$.
    \end{restatable}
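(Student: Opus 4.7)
The plan is to fix an arbitrary $\mfElm$-coherent $\PlrSym$-strategy $\strElm$ that is winning on $\denot{\mfElm}[\PlrSym]$, set $\strElm' \defeq \strElm \downarrow \denot{\mfElm}[+]$, and show that $\strElm'$ is a $\PlrSym$-witness for $\denot{\mfElm}[+]$. Concretely, I would pick any $\OppSym$-strategy $\strElm[\OppSym] \in \StrSet[\OppSym](\denot{\mfElm}[+])$ and any starting position $\posElm \in \denot{\mfElm}[+]$, consider the induced maximal play $\pthElm = \playFun((\strElm', \strElm[\OppSym]), \posElm)$, and verify that, whenever $\pthElm$ is infinite, $\prtFun(\pthElm)$ is even. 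The argument naturally splits on whether $\pthElm$ ever enters $\denot{\mfElm}[\PlrSym]$.

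In the first subcase, suppose $(\pthElm)_{i} \in \denot{\mfElm}[\PlrSym]$ for some index $i$. Since $\denot{\mfElm}[\PlrSym]$ is a $\PlrSym$-dominion by the definition of \qdmf{}, and $\strElm$ is winning on it, the dominion is trap-closed for $\OppSym$: every $\OppSym$-position in $\denot{\mfElm}[\PlrSym]$ must have all outgoing moves inside $\denot{\mfElm}[\PlrSym]$ (otherwise an $\OppSym$-strategy selecting the escaping move would falsify the infiniteness clause of the dominion property), while $\strElm$ itself keeps $\PlrSym$-positions in $\denot{\mfElm}[\PlrSym]$. Hence the suffix of $\pthElm$ from index $i$ stays in $\denot{\mfElm}[\PlrSym]$ and is compatible with $\strElm$, so it is $\PlrSym$-winning and has even maximal priority; since $\prtFun$ depends only on the limit-supremum of priorities, the same holds for $\pthElm$.

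In the second subcase, $\pthElm$ lies entirely within $\denot{\mfElm}[+] \setminus \denot{\mfElm}[\PlrSym]$. Here I would invoke Theorem~\ref{thm:regmsr}: the restriction of $\strElm$ to $\denot{\mfElm}[+] \setminus \denot{\mfElm}[\PlrSym]$ is still $\mfElm$-coherent and is therefore a $\PlrSym$-witness for that set, which is a weak quasi $\PlrSym$-dominion. The play $\pthElm$ is also the play induced, from $\posElm$, by that restricted witness against $\strElm[\OppSym] \downarrow (\denot{\mfElm}[+] \setminus \denot{\mfElm}[\PlrSym])$, so Theorem~\ref{thm:regmsr} directly yields that $\prtFun(\pthElm)$ is even.

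The main obstacle I expect is the trap-closure observation for $\denot{\mfElm}[\PlrSym]$ in the first subcase. It requires carefully unpacking the definition of dominion to rule out $\OppSym$-moves exiting the dominion, by exhibiting the contradiction that such a move would produce a finite induced play against the universally-quantified $\OppSym$-strategies, violating the dominion's infiniteness requirement. Once this piece is in place, the two subcases glue together without further work, yielding that $\strElm'$ is indeed a $\PlrSym$-witness for $\denot{\mfElm}[+]$ and hence that $\denot{\mfElm}[+]$ is a weak quasi $\PlrSym$-dominion.
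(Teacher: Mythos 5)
Your proposal is correct and follows essentially the same route as the paper's proof: the same split into the case where the play stays in $\denot{\mfElm}[+] \setminus \denot{\mfElm}[\PlrSym]$ (handled by Theorem~\ref{thm:regmsr}) and the case where it reaches $\denot{\mfElm}[\PlrSym]$ (handled by the $\PlrSym$-dominion property from Definition~\ref{def:qsidommsr}, using that the play gets trapped there). Your explicit trap-closure argument for $\OppSym$-positions is just a more detailed spelling-out of the paper's remark that every play compatible with the winning strategy necessarily remains in $\denot{\mfElm}[\PlrSym]$.
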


  \end{subsection}

  \begin{subsection}{Simple-Measure Functions}

    While the solution of a parity game involves checking the parity condition
    along infinite plays, as the solution algorithm proceeds, the measure
    of a position $\posElm$ encodes a finite horizon approximation
    of that condition for $\posElm$.
    This approximation is progressively refined during the computation, by
    exploring longer and longer finite prefixes of the possible plays starting
    from $\posElm$.
    Clearly, any play that contains a cycle is either winning for the player
    $\PlrSym$ or for the opponent.
    In a sense, the shortest prefix of the play that ends with a repetition of
    some position already provides all the necessary information to assess
    the winning player of the entire infinite play.
    This observation suggests that measures need only encode information of
    finite simple paths in the game, since those are the only prefixes that need
    to be extended to obtain finer approximations.

    From now on, we shall fix a measure space $\MsrName$ and, thus, the induced
    measure-function space $\MFName$.
    Given a set of positions $\XSet \subseteq \PosSet$ and a position $\posElm
    \in \XSet$, we introduce the set
    \[
      \SMsrSet(\posElm, \XSet) \defeq \set{ \msrFun(\pthElm) \in \MsrSet }{
      \pthElm \in \SPthSet(\posElm, \XSet) } \cup \{ \top \}
    \]
    of the \emph{simple measures} of $\posElm$ \wrt $\XSet$.
    This set contains, besides the measure $\top$, only the measures of finite
    simple paths originating at $\posElm$ and composed only of positions in
    $\XSet$.
    It is immediate to observe the following property.

    \begin{proposition}
      \label{prp:fintotord}
      $\tupled {\SMsrSet(\posElm, \XSet)} {<} {\bot} {\top}$ is a finite strict
      total order with minimum and maximum, for all positions $\posElm \in
      \PosSet$ and sets of positions $\XSet \subseteq \PosSet$.
    \end{proposition}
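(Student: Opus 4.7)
The plan is to verify four properties separately, all of which follow from routine observations once the relevant finiteness argument is in place. First I would address finiteness: since the arena $\AName$ is by definition finite, so is $\PosSet$ and hence $\XSet$. A simple path in $\SPthSet(\posElm, \XSet)$ contains no repeated positions, so its length is bounded by $\card{\XSet}$, and there are only finitely many such paths. The image $\set{\msrFun(\pthElm)}{\pthElm \in \SPthSet(\posElm, \XSet)}$ is therefore finite, and adjoining the single element $\top$ preserves finiteness.

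Next I would turn to the total order. The relation $<$ used in the statement is the one inherited from the ambient measure space $\MsrName$, which is already a strict total order on $\MsrSet$ by Definition~\ref{def:msrspc}(\ref{def:msrspc(ord)}). Its restriction to any subset, in particular to $\SMsrSet(\posElm, \XSet) \subseteq \MsrSet$, remains a strict total order, so irreflexivity, transitivity, and trichotomy hold without further argument.

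Finally I would argue the existence of the extremal elements. The maximum is immediate: $\top \in \SMsrSet(\posElm, \XSet)$ by construction, and $\top$ is the maximum of $\MsrSet$ as a whole, so it remains the maximum in any subset containing it. For the minimum, the set is finite and non-empty (it contains at least $\top$), so a least element exists by a standard induction on finite linearly ordered sets.

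The only step that requires any thought at all is the very first one, namely ruling out infinitely many simple paths; this reduces to the observation that simplicity caps path length by $\card{\XSet}$, together with the finiteness of $\PosSet$ assumed at the outset in the definition of an arena. I do not foresee any genuine obstacle: the whole statement is essentially bookkeeping, establishing that the ``local'' ordered structure $\SMsrSet(\posElm, \XSet)$ enjoys the same abstract shape as the ambient measure space, so that later constructions in Section~\ref{sec:qsidommsr} can be phrased uniformly over these finite sub-orders.
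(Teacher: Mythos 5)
The paper itself offers no proof of this proposition (it is introduced with ``It is immediate to observe the following property''), and your argument is the natural one: finiteness of simple paths from finiteness of the arena, inheritance of the strict total order from $\MsrSet$, and membership of $\top$ by construction. All of that is correct.

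The one point you leave unaddressed is that the structure in the statement designates $\bot$ as its minimum, not merely \emph{some} least element. Your final step only invokes the generic fact that a finite non-empty linear order has a minimum; to match the claim you must also show $\bot \in \SMsrSet(\posElm, \XSet)$. This follows because the empty path $\varepsilon$ belongs to $\SPthSet(\posElm, \XSet)$ for every $\posElm$ and $\XSet$ and $\msrFun(\varepsilon) = \bot$ by definition, whence $\bot$ is in the set and, being the global minimum of $\MsrSet$, is the minimum of the subset. This is not a pedantic quibble: the paper relies on exactly this convention immediately afterwards (Proposition~\ref{prp:botden} argues via $\SPthSet(\posElm, \denot{\mfElm}[+]) = \{\varepsilon\}$ and $\msrFun(\varepsilon) = \bot$), so the membership of $\bot$ is the one substantive fact the proposition records beyond bookkeeping. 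With that one line added, your proof is complete.
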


   Following the observations above, \emph{simple measure functions} restrict
   the possible measures of each position to those induced by finite simple
   paths contained in the quasi $\PlrSym$-dominion $\denot{\mfElm}[+]$.
%  We can then restrict the possible measures of each position to those induced
%  by finite simple paths via the notion of \emph{simple measure}, in a way
%  quite similar to that used for Definition~\ref{def:grnmsrfunspc}.

    \begin{definition}[Simple Measure Function]
      \label{def:simmsrfun}
      A measure function $\mfElm \in \MFSet$ is a \emph{simple measure} (\smf,
      for short) if $\mfElm(\posElm) \in \SMsrSet(\posElm, \denot{\mfElm}[+])$,
      for all positions $\posElm \in \PosSet$.
      $\SMFSet$ denotes the set of all {\smf}s.
    \end{definition}

    The next proposition shows that, for simple measures $\mfElm$, if the
    truncation of the measure $\mfElm(\posElm)$ \wrt the position $\posElm$
    itself is $\bot$, then the original measure $\mfElm(\posElm)$ is $\bot$ as
    well, since $\SPthSet(\posElm, \denot{\mfElm}[+]) = \{ \varepsilon \}$ and
    $\msrFun(\varepsilon) = \bot$.

    \begin{proposition}
      \label{prp:botden}
      For every simple measure $\mfElm \in \SMFSet$ and position $\posElm \in
      \denot{\mfElm}[\bot]$, it holds that $\mfElm(\posElm) = \bot$.
    \end{proposition}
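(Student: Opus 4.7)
The plan is to combine the definition of simple measure function with the canonicity of the truncation operator, which is the standing assumption throughout Section~\ref{sec:qsidommsr}. First, I would record the two immediate consequences of the hypothesis $\posElm \in \denot{\mfElm}[\bot]$: namely $\mfElm(\posElm) \rst_{\posElm} = \bot$ by definition of the $\bot$-denotation, and $\posElm \notin \denot{\mfElm}[+]$, since the $\bot$- and $+$-denotations are complementary by construction.

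Next, I would analyse the candidate set $\SMsrSet(\posElm, \denot{\mfElm}[+])$ to which $\mfElm(\posElm)$ must belong by Definition~\ref{def:simmsrfun}. Because $\posElm \notin \denot{\mfElm}[+]$, no non-empty simple path originating at $\posElm$ can lie entirely within $\denot{\mfElm}[+]$: its first vertex is $\posElm$ itself, which already falls outside the required set. Hence the only element of $\SPthSet(\posElm, \denot{\mfElm}[+])$ is the empty path $\varepsilon$, and the base case of the recursive definition of $\msrFun$ yields $\msrFun(\varepsilon) = \bot$. Consequently $\SMsrSet(\posElm, \denot{\mfElm}[+]) = \{\bot, \top\}$, which forces $\mfElm(\posElm) \in \{\bot, \top\}$.

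Finally, I would rule out the value $\top$ by invoking canonicity of truncation, specifically the \emph{only if} direction of Condition~\ref{def:msrspc(trn:top)} of Definition~\ref{def:msrspc}, which gives $\top \rst_{\posElm} = \top$. Combined with $\mfElm(\posElm) \rst_{\posElm} = \bot$ from the first step, this contradicts $\mfElm(\posElm) = \top$, leaving only $\mfElm(\posElm) = \bot$. The argument is essentially a direct unfolding of definitions; the only delicate point worth flagging is the convention that the empty path is regarded as a simple path in $\SPthSet(\posElm, \denot{\mfElm}[+])$ even when $\posElm$ does not belong to $\denot{\mfElm}[+]$, a convention that is coherent with the base case of $\msrFun$ and with the informal remark preceding the proposition.
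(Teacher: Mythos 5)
Your argument is correct and matches the paper's own justification, which is given only as the remark immediately preceding the proposition: since $\posElm \notin \denot{\mfElm}[+]$, the only simple path from $\posElm$ inside $\denot{\mfElm}[+]$ is $\varepsilon$, and $\msrFun(\varepsilon) = \bot$. You are in fact slightly more careful than the paper, since you explicitly rule out the element $\top$ (which $\SMsrSet(\posElm, \cdot)$ always contains by definition) using the canonicity of the truncation operator, a detail the paper's remark silently skips.
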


    The measure function $\mfElm[\bot]$ is clearly the minimal element \wrt
    $\sqsubseteq$ in the set $\SMFSet$ of simple measures.

    \begin{proposition}
      \label{prp:finparord}
      $\tupleb {\SMFSet} {\sqsubseteq}$ is a finite partial order with
      $\mfElm[\bot] \in \SMFSet$ as unique minimal element.
    \end{proposition}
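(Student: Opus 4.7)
The plan is to establish the four constituent claims separately: the partial-order axioms on $\SMFSet$, membership of $\mfElm[\bot]$ in $\SMFSet$, minimality of $\mfElm[\bot]$, and finiteness of $\SMFSet$. The first three will be essentially inherited from the ambient structure; finiteness is the only substantive step.

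For the partial-order axioms, I would simply observe that $\SMFSet \subseteq \MFSet$, so reflexivity, antisymmetry, and transitivity of $\sqsubseteq$ transfer directly from Definition~\ref{def:msrfunspc}. For $\mfElm[\bot] \in \SMFSet$, I would use the fact that the empty path $\varepsilon$ belongs to $\SPthSet(\posElm, \VSet)$ for every position $\posElm$ and every $\VSet \subseteq \PosSet$, together with $\msrFun(\varepsilon) = \bot$ by definition; consequently $\bot \in \SMsrSet(\posElm, \VSet)$ for any $\VSet$, and in particular $\mfElm[\bot](\posElm) = \bot \in \SMsrSet(\posElm, \denot{\mfElm[\bot]}[+])$, fulfilling Definition~\ref{def:simmsrfun}. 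Minimality then follows because $\bot$ is the minimum of $\MsrSet$, so $\mfElm[\bot] \sqsubseteq \mfElm$ pointwise for every $\mfElm \in \MFSet \supseteq \SMFSet$, and uniqueness drops out of antisymmetry.

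The main point, though not a severe obstacle, is finiteness of $\SMFSet$. The key observation I would exploit is that every simple path in the finite arena has length at most $\card{\PosSet}$, so the collection of simple paths in $\PosSet$ is finite, and by Proposition~\ref{prp:fintotord} so is each set $\SMsrSet(\posElm, \VSet)$. Since there are only finitely many subsets $\VSet \subseteq \PosSet$, the union $\bigcup_{\VSet \subseteq \PosSet} \SMsrSet(\posElm, \VSet)$ is itself finite, and every simple measure assigns to each $\posElm$ a value in this finite set. As $\PosSet$ is finite, $\SMFSet$ embeds into a finite product of finite sets and is therefore finite. The only delicate point to keep track of is the empty-path convention, since it underpins both $\bot \in \SMsrSet(\posElm, \VSet)$ and hence $\mfElm[\bot] \in \SMFSet$, but this is already built into the recursive definition of $\msrFun$ used throughout the paper.
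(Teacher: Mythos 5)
Your proof is correct. The paper offers no explicit proof of this proposition (it is presented as immediate, right after the remark that $\mfElm[\bot]$ is ``clearly'' the minimal element of $\SMFSet$), and your argument supplies exactly the reasoning the authors rely on elsewhere: the order axioms restrict from $\MFSet$, the empty-path convention $\msrFun(\varepsilon) = \bot$ puts $\bot$ in every $\SMsrSet(\posElm, \VSet)$ and hence $\mfElm[\bot]$ in $\SMFSet$, and finiteness follows from Proposition~\ref{prp:fintotord} once you take the union over the finitely many subsets $\VSet \subseteq \PosSet$ — a step you correctly identify as necessary because the constraint set $\SMsrSet(\posElm, \denot{\mfElm}[+])$ in Definition~\ref{def:simmsrfun} depends on $\mfElm$ itself.
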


    Putting together Definitions~\ref{def:qsidommsr} and~\ref{def:simmsrfun},
    we obtain \emph{simple quasi-dominion measures}, which enjoy a stronger
    property than the one stated in Theorem~\ref{thm:qsidommsri}.

    \begin{restatable}{theorem}{qsidommsriir}(Quasi-Dominion Measure II)
      \label{thm:qsidommsrii}
      Let $\mfElm \in \MFSet$ be a simple quasi-dominion measure \wrt a regress
      measure space $\MsrName$ with a canonical truncation operator satisfying
      the equality $\bot \rst_{\posElm} = \bot$, for all odd priority positions
      $\posElm$.
      Then, $\denot{\mfElm}[+]$ is a quasi $\PlrSym$-dominion for which all
      $\mfElm$-coherent $\PlrSym$-strategies that are winning on
      $\denot{\mfElm}[\PlrSym]$ are $\PlrSym$-witnesses, once restricted to
      $\denot{\mfElm}[+]$.
    \end{restatable}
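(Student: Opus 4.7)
The plan is to bootstrap Theorem~\ref{thm:qsidommsri}, which already delivers that $\denot{\mfElm}[+]$ is a weak quasi $\PlrSym$-dominion and that any $\mfElm$-coherent $\PlrSym$-strategy winning on $\denot{\mfElm}[\PlrSym]$ witnesses this weaker property. The only missing ingredient for full quasi $\PlrSym$-dominion is that induced \emph{finite} plays inside $\denot{\mfElm}[+]$ also have even maximum priority, and this is precisely where the two additional hypotheses—the simple measure shape of $\mfElm$ and the equality $\bot \rst_{\posElm} = \bot$ for every odd $\posElm$—must be used.

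First I would fix such a witness strategy $\strElm$ and consider an arbitrary maximal finite play $\pthElm = \posElm_{0} \cdots \posElm_{k}$ in $\denot{\mfElm}[+]$ compatible with $\strElm \downarrow \denot{\mfElm}[+]$. Maximality forces $\posElm_{k}$ to be an $\OppSym$-position whose chosen move escapes to some $\wposElm \in \denot{\mfElm}[\bot]$, so Proposition~\ref{prp:botden} yields $\mfElm(\wposElm) = \bot$. Since $\posElm_{0}$ lies in $\denot{\mfElm}[+] \setminus \denot{\mfElm}[\PlrSym]$—otherwise the $\PlrSym$-dominion property built into Definition~\ref{def:qsidommsr} would trap the play inside $\denot{\mfElm}[\PlrSym]$ forever—the regress inequalities of Definition~\ref{def:regmsr} apply at every position of $\pthElm$. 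Combining coherence on $\PlrSym$-positions, the universal regress clause on $\OppSym$-positions, and the monotonicity of the stretch operator, the single-step bounds telescope into $\mfElm(\posElm_{0}) \leq \msrFun(\pthElm)$.

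Next I would reason by contradiction: suppose the maximum priority $p$ along $\pthElm$ is odd, attained at some $\posElm_{j}$. The simple measure hypothesis supplies a simple path $\pthElm^{\star} \in \SPthSet(\posElm_{0}, \denot{\mfElm}[+])$ with $\mfElm(\posElm_{0}) = \msrFun(\pthElm^{\star})$. I would then inspect $\msrFun(\pthElm) \rst_{\posElm_{j}}$ by unwinding the recursive definition of $\msrFun$: clause~\ref{def:regmsrspc(les)} of Definition~\ref{def:regmsrspc} makes every stretch step by an odd position of priority $\geq p$ strictly decrease any non-$\bot$, non-$\top$ truncation at $\posElm_{j}$, while positions of priority $< p$ and even positions contribute only non-increasingly by clause~\ref{def:regmsrspc(leq)}; the base equality $\bot \rst_{\posElm_{j}} = \bot$ anchors the induction. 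This pins $\msrFun(\pthElm) \rst_{\posElm_{j}}$ to $\bot$, hence by monotonicity of truncation $\mfElm(\posElm_{0}) \rst_{\posElm_{j}} = \bot$. Comparing against $\msrFun(\pthElm^{\star}) \rst_{\posElm_{j}}$, and exploiting that $\pthElm^{\star}$ is a simple path entirely inside $\denot{\mfElm}[+]$—so each of its positions has non-$\bot$ truncation and the last one in $\pthElm^{\star}$ with priority $\geq p$ contributes a strictly positive truncation—yields the contradiction.

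The main obstacle, as I anticipate it, is controlling $\msrFun(\pthElm) \rst_{\posElm_{j}}$ when $\pthElm$ is not simple: cycles may re-traverse odd priorities and one must ensure that their cumulative stretch cannot artificially raise the truncation back above $\bot$. The natural remedy is an induction on the length of $\pthElm$ that peels off cycles one at a time, applying the two regress clauses of Definition~\ref{def:regmsrspc} in the order dictated by the priorities actually visited. Once this combinatorial bookkeeping is in place, the rest of the argument uses only the order-theoretic properties of $\MsrName$ and the canonical behaviour of $\rst$.
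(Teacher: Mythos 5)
Your overall architecture coincides with the paper's: invoke Theorem~\ref{thm:qsidommsri} for the weak property, reduce the remaining work to showing that every \emph{finite} induced play in $\denot{\mfElm}[+]$ has even maximal priority, observe that the escape target of a maximal finite play lies in $\denot{\mfElm}[\bot]$ and hence has measure $\bot$ by Proposition~\ref{prp:botden} (the only place where simplicity of $\mfElm$ is used), and close with the regress-space axioms plus $\bot \rst_{\posElm} = \bot$. Two side remarks are inaccurate but harmless: the last position of a maximal finite play need not belong to player $\OppSym$ (the $\mfElm$-coherent $\PlrSym$-strategy may itself point outside $\denot{\mfElm}[+]$; coherence still supplies $\mfElm(\posElm_{k}) \leq \bot + \posElm_{k}$), and your ``main obstacle'' about non-simple plays is vacuous, since Condition~\ref{def:regmsrspc(leq)} of Definition~\ref{def:regmsrspc} bounds every single stretch step by a position of priority at most $p$ irrespective of repetitions (moreover, a finite play induced by positional strategies cannot revisit a position, or it would loop forever).

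The genuine gap is in the final contradiction. You anchor the telescoped chain at $\posElm_{0}$ and conclude $\mfElm(\posElm_{0}) \rst_{\posElm_{j}} = \bot$, where $\posElm_{j}$ carries the maximal (odd) priority $p$. That is not absurd: membership of $\posElm_{0}$ in $\denot{\mfElm}[+]$ only guarantees $\mfElm(\posElm_{0}) \rst_{\posElm_{0}} \neq \bot$, and when $\prtFun(\posElm_{0}) < p$ the truncation at $\posElm_{j}$ may legitimately collapse to $\bot$. The attempted rescue via the simple-path witness $\pthElm^{\star}$ of $\mfElm(\posElm_{0})$ does not help either: nothing forces $\pthElm^{\star}$ to visit any position of priority at least $p$, so $\msrFun(\pthElm^{\star}) \rst_{\posElm_{j}} = \bot$ is perfectly consistent and no contradiction materialises. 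The repair is precisely the paper's move: run the truncated chain on the \emph{suffix} of the play starting at $\posElm_{j}$ itself. Since $\posElm_{j} \in \denot{\mfElm}[+]$, one has $\bot < \mfElm(\posElm_{j}) \rst_{\posElm_{j}}$, while the regress inequalities combined with Condition~\ref{def:regmsrspc(leq)} (applied step by step from $j$ to the last position and then to the escape move into the $\bot$-measure successor) give $\mfElm(\posElm_{j}) \rst_{\posElm_{j}} \leq (\bot + \posElm_{k}) \rst_{\posElm_{j}} \leq \bot \rst_{\posElm_{j}} = \bot$, the desired absurdity; this is the ``same inductive reasoning as in Theorem~\ref{thm:regmsr}'' that the paper cites. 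With that re-anchoring your argument becomes the paper's proof; as written, the concluding step fails.
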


    Since $\mfElm[\bot]$ is the measure function induced by the empty path on
    each position, the following property is immediate.

    \begin{proposition}
      \label{prp:minqdm}
      The minimal measure function $\mfElm[\bot] \in \MFSet$ is a simple
      quasi-dominion measure.
    \end{proposition}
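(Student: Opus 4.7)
The plan is to verify the three ingredients of a simple quasi-dominion measure on $\mfElm[\bot]$ in sequence, observing that each check collapses to something trivial because $\mfElm[\bot]$ assigns the minimum measure everywhere. The main subtlety is to track what $\denot{\mfElm[\bot]}[\PlrSym]$ and $\denot{\mfElm[\bot]}[+]$ actually are: since the truncation operator is canonical, $\bot \rst_{\posElm} = \top$ would force $\bot = \top$ by the canonicity clause \ref{def:msrspc(trn:top)} of Definition~\ref{def:msrspc}, which contradicts the strict order on $\MsrSet$. Hence $\denot{\mfElm[\bot]}[\PlrSym] = \emptyset$, and likewise $\denot{\mfElm[\bot]}[+] = \emptyset$ under the mild assumption (already used in Theorem~\ref{thm:qsidommsrii}) that $\bot \rst_{\posElm} = \bot$ on the positions where it matters; in any case, for every $\posElm$ one has $\mfElm[\bot](\posElm) \rst_{\posElm} \in \{\bot\}$.

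First I would verify that $\mfElm[\bot]$ is a simple measure function. Fix any position $\posElm$; by the preceding observation, the empty path $\varepsilon \in \SPthSet(\posElm, \denot{\mfElm[\bot]}[+])$ witnesses $\bot = \msrFun(\varepsilon) \in \SMsrSet(\posElm, \denot{\mfElm[\bot]}[+])$, so $\mfElm[\bot](\posElm) = \bot$ lies in the required set. Next, I would verify that $\mfElm[\bot]$ is a regress measure: Definition~\ref{def:regmsr} only imposes constraints on positions of $\denot{\mfElm[\bot]}[+] \setminus \denot{\mfElm[\bot]}[\PlrSym]$, which is empty, so conditions \ref{def:regmsr(plr)} and \ref{def:regmsr(opp)} are satisfied vacuously.

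Finally, to promote this to a quasi-dominion measure via Definition~\ref{def:qsidommsr}, I would argue that $\denot{\mfElm[\bot]}[\PlrSym] = \emptyset$ is trivially a $\PlrSym$-dominion, since the winning-strategy condition in the definition of a dominion is vacuously satisfied on the empty set. Combining the three checks yields that $\mfElm[\bot]$ is a simple quasi-dominion measure.

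There is no real obstacle here: the entire argument is a sequence of definition unwindings, whose only non-trivial ingredient is the canonicity of the truncation operator, invoked to ensure that $\bot$ cannot be mapped to $\top$ under $\rst_{\posElm}$. If one wishes to be very careful, the only point that deserves emphasis is that the simple-measure condition is compatible with the convention $\msrFun(\varepsilon) = \bot$, which is exactly how $\SMsrSet(\posElm, \XSet)$ is defined, and thus the minimum element $\mfElm[\bot]$ automatically sits inside $\SMFSet$ regardless of the specific arena and priority function.
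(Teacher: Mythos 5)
Your argument is correct and simply fills in what the paper leaves as ``immediate'' (the remark preceding the proposition observes only that $\mfElm[\bot]$ is the measure function induced by the empty path at each position), so there is no divergence in approach. One small caution: canonicity of the truncation operator does \emph{not} guarantee $\bot \rst_{\posElm} = \bot$ --- the paper states this as a separate hypothesis in Theorem~\ref{thm:qsidommsrii} and as an extra property of the concrete structure in Proposition~\ref{prp:naimsrstr} --- so in a general measure space $\denot{\mfElm[\bot]}[+]$ need not be empty and your vacuity argument for the regress conditions is not fully justified; this is harmless, however, because for any position $\posElm$ in that set the required inequalities read $\bot \leq \mfElm[\bot](\wposElm) + \posElm$ and hold trivially by minimality of $\bot$, while the empty path already witnesses $\bot \in \SMsrSet(\posElm, \XSet)$ for \emph{any} $\XSet$, so the simple-measure check goes through unchanged.
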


  \end{subsection}

\end{section}

% End of file SectionII.tex

%%****************************************************************************%%
%%                                                                            %%
%% Article Title                                                              %%
%%                                                                            %%
%% SectionIII.tex                                                             %%
%%                                                                            %%
%% Revision 0                                                                 %%
%%                                                                            %%
%% Copyright (C) 2019, Massimo Benerecetti, Daniele Dell'Erba, Marco Faella,  %%
%%                     and Fabio Mogavero.                                    %%
%% All rights reserved.                                                       %%
%%                                                                            %%
%%****************************************************************************%%

% Begin of file SectionIII.tex

\begin{section}{A Concrete Algorithm}
  \label{sec:effalg}

  This section describes an algorithm that solves any parity game by maintaining
  and updating a simple quasi-dominion measure function, until it reaches a
  fixpoint that is both a progress and a quasi-dominion measure.
  At that point, the results in the previous sections ensure that the winning
  positions for both players are determined and easily recovered from the final
  measure by computing the $\PlrSym$- and $\OppSym$-denotations.

  \begin{subsection}{A Concrete Measure Space}

    The measures used by the concrete algorithm associate a non-negative integer
    with each priority in the game, in other words, they are sequences of
    naturals, one for each priority.
    For technical reasons, however, we introduce a more general class of
    concrete measures $\EvlSet \defeq \PrtSet \to \SetZ$, whose range also
    includes the negative integers.
    This allows us to provide algebraic operations on measures, such as addition
    and subtraction, which will prove instrumental in the implementation of the
    basic operators used by the algorithm.
    More specifically:
    \begin{inparaenum}[(i)]
      \item
        the null element $\mathbf{0} \in \EvlSet$ is the distinguished measure
        such that $(\mathbf{0})(k) \defeq 0$, for all indexes $k \in \PrtSet$;
      \item
        the opposite of a measure and the sum of two measures
        are defined point-wise: $(-\evlElm)(k) \defeq -\evlElm(k)$
        and $(\evlElm[1] + \evlElm[2])(k) \defeq \evlElm[1](k) +
        \evlElm[2](k)$, for all $k \in \PrtSet$.
    \end{inparaenum}
    Recall that, according to Definition~\ref{def:msrspc}, measures need to be
    totally ordered.
    For our concrete measures, we employ an \emph{alternate lexicographic
    order}, that is, if we interpret a measure as a sequence of integers, with
    decreasing indexes from left to right, values that are later in the sequence
    are less important than those that come earlier: as in the standard numeric
    representation, the left-most integer is the most-significant digit, while
    the right-most integer is the least-significant one.
    Moreover, values with even indexes are ordered in the natural way, namely by
    increasing magnitude, whereas those with odd indexes are ordered in the
    opposite fashion, \ie, by decreasing magnitude.
    Formally, $<\: \subseteq \EvlSet \times \EvlSet$ is the strict total order
    defined as follows: $\evlElm[1] < \evlElm[2]$ if there exists an index $k
    \in \PrtSet$ such that
    \begin{inparaenum}[(i)]
      \item
        $k$ is the greatest index for which $\evlElm[1](k) \neq
        \evlElm[2](k)$ and
      \item
        $\evlElm[1](k) < \evlElm[2](k)$, if $k$ is even, and
        $\evlElm[2](k) < \evlElm[1](k)$, otherwise.
    \end{inparaenum}
    A special family of measures is given by the Kronecker delta $\delta
    \colon \PrtSet \to (\PrtSet \to \{ 0, 1 \})$, where $\delta_{i}(i) \defeq
    1$, and $\delta_{i}(j) \defeq 0$, for all $j \neq i$.
    Obviously, $\delta_{i} \in \EvlSet$, for every index $i \in \PrtSet$.
    It is quite immediate to show the following property.

    \begin{proposition}
      \label{prp:naievlstr}
      The structure $\EvlName \defeq \tuplee {\EvlSet} {<} {\mathbf{0}} {-} {+}$
      is a totally-ordered Abelian group.
    \end{proposition}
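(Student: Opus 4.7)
The plan is to verify separately the three defining aspects of a totally-ordered Abelian group: the group structure on $\EvlSet$ with operation $+$, identity $\mathbf{0}$, and unary inverse $-$; the strict total order properties of $<$; and the compatibility of $<$ with $+$ (translation invariance).

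For the group structure, every axiom reduces to the corresponding property of $(\SetZ, +, 0, -)$ applied coordinate-wise. Since $\EvlSet = \PrtSet \to \SetZ$ and the operations $+$ and $-$ are defined pointwise, closure, associativity, commutativity, the identity law $\evlElm + \mathbf{0} = \evlElm$, and the inverse law $\evlElm + (-\evlElm) = \mathbf{0}$ each hold at every index $k \in \PrtSet$ by the corresponding integer identity. This portion is essentially bookkeeping.

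For the strict total order, irreflexivity is immediate from the definition, which requires a coordinate at which the two measures differ. For totality, given distinct $\evlElm[1], \evlElm[2] \in \EvlSet$, the set of coordinates where they disagree is finite and nonempty; taking its maximum $k$ and comparing the two integers $\evlElm[1](k) \neq \evlElm[2](k)$ yields, according to the parity of $k$, exactly one of $\evlElm[1] < \evlElm[2]$ or $\evlElm[2] < \evlElm[1]$. For transitivity, I would assume $\evlElm[1] < \evlElm[2]$ with witness index $k_{1}$ and $\evlElm[2] < \evlElm[3]$ with witness index $k_{2}$, and set $k^{\star} \defeq \max(k_{1}, k_{2})$. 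All three measures agree strictly above $k^{\star}$; if $k_{1} \neq k_{2}$ then only the dominant witness acts at $k^{\star}$, so the strict comparison is inherited from it, while if $k_{1} = k_{2} = k^{\star}$ the two parity-dependent integer inequalities at $k^{\star}$ share orientation and compose via transitivity of $<$ on $\SetZ$.

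Finally, for translation invariance, suppose $\evlElm[1] < \evlElm[2]$ with witness index $k$. For any $\evlElm[3] \in \EvlSet$, pointwise addition of the same value preserves both equality and strict integer inequality at every coordinate; hence the greatest index at which $\evlElm[1] + \evlElm[3]$ and $\evlElm[2] + \evlElm[3]$ disagree is still $k$, and the parity-dependent comparison at $k$ carries through unchanged, yielding $\evlElm[1] + \evlElm[3] < \evlElm[2] + \evlElm[3]$. The main obstacle is nothing deep, but rather the case analysis in transitivity: the alternating sign convention on even versus odd indexes has to be tracked according to whether $k_{1}$ and $k_{2}$ coincide and, when they do, whether $k^{\star}$ is even or odd. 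Once organised, the argument reduces to standard lexicographic reasoning applied coordinate-wise in $\SetZ$.
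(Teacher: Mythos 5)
Your proof is correct. The paper actually gives no proof of this proposition (it is dismissed as ``quite immediate to show''), and your verification — pointwise reduction of the group axioms to $(\SetZ,+,0,-)$, the three-way case analysis on the witness indexes $k_{1},k_{2}$ for transitivity of the alternating lexicographic order, and the observation that pointwise translation preserves both the set of disagreement coordinates and the integer comparison at its maximum — is exactly the routine argument the authors are implicitly appealing to.
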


    We can now define the measure space used by the concrete algorithm.
    It suffices to restrict the measures to only assign non-negative values to
    the existing priorities in the game, and then define the appropriate
    canonical truncation and stretch operators.
    Let $\EvlSet[][+]$ be the subset containing all the concrete measures 
    $\evlElm \in \EvlSet$ such that:
    \begin{inparaenum}[(i)]
      \item
        the highest priority $k$ for which $\evlElm(k) \geq 0$ is even;
      \item
        $\evlElm(k) \geq 0$, for all $k \in \PrtSet$.
    \end{inparaenum}

    \begin{definition}[Concrete Measure Structure]
      \label{def:naimsrstr}
      The \emph{concrete measure structure} is the tuple $\MsrName \defeq
      \tuplef {\MsrSet} {<} {\bot} {\top} {\rst} {+}$, whose components are
      defined as follows:
      \begin{enumerate}
        \item\label{def:naimsrstr(set)}
          $\MsrSet \defeq \EvlSet[][+] \cup \{ \top \}$, where $\top$ is a
          distinguished fresh element and $\bot \defeq \mathbf{0}$;
        \item\label{def:naimsrstr(ord)}
          $<\: \subseteq \MsrSet \times \MsrSet$ is the order on $\EvlSet[][+]$
          extended with $\msrElm < \top$, for every measure $\msrElm \in
          \EvlSet[][+]$;
        \item\label{def:naimsrstr(rst)}
          $\rst \colon \MsrSet \times \PosSet \to \MsrSet$ is the operator such
          that, for all positions $\posElm \in \PosSet$, the following holds:
          \begin{inparaenum}[(i)]
            \item\label{def:naimsrstr(rst:top)}
              $\top \rst_{\posElm} \defeq \top$;
            \item\label{def:naimsrstr(rst:prt)}
              $(\msrElm \rst_{\posElm})(\prtElm) \defeq \msrElm(\prtElm)$, if
              $\prtElm \geq \prtFun(\posElm)$, and $(\msrElm
              \rst_{\posElm})(\prtElm) \defeq 0$, otherwise, for all $\msrElm
              \in \EvlSet[][+]$ and $\prtElm \in \PrtSet$;
          \end{inparaenum}
        \item\label{def:naimsrstr(str)}
          $+ \colon \MsrSet \times \PosSet \to \MsrSet$ is the operator such
          that $\msrElm + \posElm \defeq \max \{ \bot, \msrElm +
          \delta_{\prtFun(\posElm)} \}$, for all positions $\posElm \in
          \PosSet$.
      \end{enumerate}
    \end{definition}

    Truncating a measure \wrt a position $\posElm$ consists in setting to zero
    the value associated with all priorities smaller than the priority of
    $\posElm$.
    Stretching a measure \wrt a position $\posElm$, instead, means incrementing
    the value associated with the priority of $\posElm$, unless the result is
    lower than $\bot$, in which case the stretch is $\bot$.
    For example, consider a game with priorities from $0$ to $4$ and the three
    measures $\msrElm[1] \defeq (0, 0, 1, 0, 1)$, $\msrElm[2] \defeq (0, 0, 1,
    1, 0)$, and $\msrElm[3] \defeq (0, 0, 1, 1, 1)$.
    Then, we have $\msrElm[2] < \msrElm[3] < \msrElm[1]$.
    Indeed, $0$ is the greatest priority in which $\msrElm[2]$ and $\msrElm[3]$
    differ, it is even, and $\msrElm[2](0) = 0 < 1 = \msrElm[3](0)$.
    Moreover, $1$ is the greatest priority in which both $\msrElm[2]$ and
    $\msrElm[3]$ differ from $\msrElm[1]$, it is odd, and $\msrElm[2](1) =
    \msrElm[3](1) = 1 > 0 = \msrElm[1](1)$.
    By truncating the three measures at a position $\posElm$ with priority $1$,
    we obtain $\msrElm[2] \rst_{\posElm} = \msrElm[3] \rst_{\posElm} = (0, 0, 1,
    1, 0) < \msrElm[1] \rst_{\posElm} = (0, 0, 1, 0, 0)$.
    We also have $\msrElm[1] + \uposElm = \msrElm[2] + \uposElm = \msrElm[3] +
    \uposElm = \bot$, if $\uposElm$ is a position with priority $3$ and
    $(\msrElm[2] + \wposElm) + \uposElm = (1, 1, 1, 1, 0) < (1, 0, 1, 2, 0) =
    (\msrElm[2] + \wposElm) + \posElm$, if $\wposElm$ is a position with
    priority $4$.

    The following straightforward result states that the concrete measure
    structure satisfies all the desired properties introduced in the previous
    sections, plus some additional ones.
    Specifically, the truncation operator preserves the $\bot$ measure, the
    truncation of a non-$\bot$ stretch cannot lead to the $\bot$ measure, and
    the stretch operator commutes with the addition on measures.

    \begin{proposition}
      \label{prp:naimsrstr}
      The concrete measure structure is a progress and regress measure space,
      whose restriction and stretch operators are canonical.
      Moreover, the following properties hold true, for all positions $\posElm
      \in \PosSet$, measures $\msrElm \in \MsrSet$, and evaluations $\evlElm \in
      \EvlSet$:
      \begin{inparaenum}[(i)]
        \item\label{prp:naimsrstr(trn)}
          %if $\posElm \in \OddSet$ then
          $\bot \rst_{\posElm} = \bot$;
        \item\label{prp:naimsrstr(str)}
          %if $\posElm \in \OddSet$ and $\msrElm + \posElm \neq \bot$
          if $\msrElm + \posElm \neq \bot$ then $(\msrElm + \posElm)
          \rst_{\posElm} \neq \bot$;
        \item\label{prp:naimsrstr(com)}
          if $\msrElm + \posElm \neq \bot$ then $(\msrElm + \posElm) + \evlElm =
          (\msrElm + \evlElm) + \posElm$.
      \end{inparaenum}
    \end{proposition}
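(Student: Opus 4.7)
The plan is to verify each of the stated properties in turn, leaning on the totally-ordered Abelian group structure of $\EvlName$ established in Proposition~\ref{prp:naievlstr} and a careful book-keeping of when the clipping $\max\{\bot, \cdot\}$ in the definition of the stretch operator is active.

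First I would check that the truncation operator is canonical. Preservation of $\top$ is built directly into the definition, and conversely, for any $\msrElm \in \EvlSet[][+]$, the truncation $\msrElm \rst_{\posElm}$ still lies in $\EvlSet[][+]$ and is therefore not $\top$. Monotonicity is a straightforward analysis of the alternate lexicographic comparison: if $k$ is the greatest index at which two measures differ, then either $k \geq \prtFun(\posElm)$, in which case the disagreement is untouched by truncation, or $k < \prtFun(\posElm)$, in which case both measures collapse to the same truncated form. Canonicity of the stretch is analogous: addition of the fixed non-negative delta $\delta_{\prtFun(\posElm)}$ preserves order in the ambient group, and $\max\{\bot, \cdot\}$ is monotone in its second argument.

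Next, I would verify the progress measure space axioms. For the strict-increase condition, when $\prtFun(\posElm)$ is even, adding $\delta_{\prtFun(\posElm)}$ increments by one the component at the even index $\prtFun(\posElm)$, which survives truncation at $\posElm$ and yields a strict increase in the alternate lexicographic order; the hypothesis that the truncation is below $\top$ rules out the top element on both sides. The non-strict condition for a higher even-priority position $\uposElm$ splits on whether $\prtFun(\posElm) < \prtFun(\uposElm)$, in which case the $\delta$-contribution is zeroed by truncation at $\uposElm$ and both sides coincide, or $\prtFun(\posElm) = \prtFun(\uposElm)$, which reduces to the even case just discussed. The regress axioms are handled symmetrically; the only subtlety is that when $\prtFun(\posElm)$ is odd and the truncated measure lies strictly between $\bot$ and $\top$, some entry of $\msrElm$ at an index above $\prtFun(\posElm)$ is positive, so the clipping in the stretch is inactive and the $+1$ increment at an odd index yields, in the alternate order, a strict decrease.

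Finally, the three additional properties are quick consequences: the first is immediate since $\bot = \mathbf{0}$ and truncating the zero function yields the zero function; the second follows because a non-$\bot$ stretch equals $\msrElm + \delta_{\prtFun(\posElm)}$, whose entry at index $\prtFun(\posElm)$ is strictly positive and hence survives the truncation at $\posElm$; the commutation property follows by unfolding both sides so that each reduces to $\msrElm + \evlElm + \delta_{\prtFun(\posElm)}$ modulo the associativity and commutativity granted by Proposition~\ref{prp:naievlstr}. The main obstacle will be the careful handling of the $\max$ clipping with $\bot$, particularly for the regress axiom and for the commutation property, where one must explicitly argue that the clipping is inactive in order to lift the identities cleanly from the ambient Abelian group.
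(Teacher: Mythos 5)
The paper offers no proof of this proposition at all (it is introduced as a ``straightforward result''), so there is no official argument to compare against; judged on its own, your plan is correct for the canonicity claims, the progress axioms, and the regress axioms. In particular, you isolate exactly the right point in the regress case: from $\bot < \msrElm \rst_{\posElm}$ with $\prtFun(\posElm)$ odd, the greatest positive index of the truncated measure must be even and hence lies strictly above $\prtFun(\posElm)$, which is precisely the condition under which the $\max\{\bot,\cdot\}$ clipping is inactive. One small imprecision: in the second progress axiom, when $\prtFun(\posElm)$ is odd and the clipping \emph{does} fire, the two sides do not coincide ``because the $\delta$-contribution is zeroed by truncation''; rather $\msrElm + \posElm = \bot$, and one must separately observe that clipping fires only when every entry of $\msrElm$ above the odd index $\prtFun(\posElm)$ vanishes, so that $\msrElm \rst_{\uposElm} = \bot$ as well for any $\uposElm$ with $\prtFun(\uposElm) > \prtFun(\posElm)$.

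The genuine gap is item~(\ref{prp:naimsrstr(com)}). Your reduction of both sides to $\msrElm + \evlElm + \delta_{\prtFun(\posElm)}$ needs the clipping in $(\msrElm + \evlElm) + \posElm$ to be inactive, and the hypothesis $\msrElm + \posElm \neq \bot$ does not yield this when $\evlElm$ ranges over all of $\EvlSet$, negative entries included. Concretely, with $\PrtSet = \{0,1,2\}$, $\prtFun(\posElm) = 0$, $\msrElm = (1,0,0)$ and $\evlElm = (-1,0,-2)$, one has $\msrElm + \posElm = (1,0,1) \neq \bot$, yet $(\msrElm + \posElm) + \evlElm = (0,0,-1)$ while $(\msrElm + \evlElm) + \posElm = \max\{\bot,(0,0,-1)\} = \bot$, so the promised ``clipping is inactive'' argument cannot be supplied under the stated hypotheses. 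The identity does hold under the additional assumption $\mathbf{0} \leq \evlElm$ (or, more generally, $\bot \leq \msrElm + \evlElm + \delta_{\prtFun(\posElm)}$), and this is exactly the regime in which the proposition is invoked inside the proof of Lemma~\ref{lmm:prgpls}, where $\evlElm$ is one of the forfeit values shown there to be non-negative. Your write-up should either add such a hypothesis, or state explicitly that the stretch applied to the (possibly non-measure) evaluation $\msrElm + \evlElm$ is to be read without the $\max\{\bot,\cdot\}$ clipping, in which case the identity degenerates to associativity and commutativity in the Abelian group of Proposition~\ref{prp:naievlstr}.
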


  \end{subsection}

  \begin{subsection}{The Solution Algorithm}

    The algorithm we now propose, which makes an implicit use of the concrete
    measure structure just introduced, is based on repeatedly applying two
    \emph{progress operators}, $\prgFun[+]$ and $\prgFun[\bot]$, to an initial
    quasi-dominion measure function $\mfElm$, until a fixpoint $\mfElm[][\star]$
    is reached.
    At that point, the positions whose measure is $\top$ in $\mfElm[][\star]$
    are winning for player~$\PlrSym$, while all other positions are winning for
    player~$\OppSym$.

    Recall that a function $f$ on an ordered set is \emph{inflationary} if for
    all elements $x$ of its domain it holds $x \leq f(x)$.
    We show in the following that both progress operators are inflationary,
    so that the fixpoint $\mfElm[][\star]$ is the limit of the ascending sequence
    of measures obtained by repeated application of those progress operators
    (\aka the \emph{inflationary fixpoint}).
    We denote by $\solFun(\mu)$ such limit, as a partial mapping from $\MFSet$ to $\MFSet$, when starting from measure $\mu$:
    \[
       \solFun \defeq \ifpFun\, \mfElm \,.\, \prgFun[+](\prgFun[\bot](\mfElm))
       \colon \MFSet \pto \MFSet.
    \]
    Intuitively, all positions with a non-$\bot$ measure in $\mfElm$, \ie,
    $\QSet \defeq \denot{\mfElm}[+]$, form a quasi $\PlrSym$-dominion and the
    $\prgFun[+]$ (\resp, $\prgFun[\bot]$) operator is responsible of enforcing
    the progress condition on the positions inside (\resp, outside) $\QSet$ that
    do not satisfy the proper inequalities between the measures along the
    moves.
    This is done in such a way to preserve the properties of quasi-dominion
    measure function and represents the main point where the classic
    progress-measure approaches and the proposed technique diverge.

    Both operators $\prgFun[+]$ and $\prgFun[\bot]$ internally employ a
    \emph{lift operator} $\liftFun \colon \MFSet \times \pow{\PosSet} \times
    \pow{\PosSet} \to \MFSet$, that adjusts a measure function so that it
    locally satisfies the conditions of both the progress and regress measures
    (Definitions~\ref{def:prgmsr} and~\ref{def:regmsr}).
    The two operators need to selectively adjust the measure of specific sets of
    positions.
    For this reason, besides the current measure function, the lift operator
    carries two additional arguments:
    \begin{inparaenum}[(i)]
      \item
        the set of positions $\SSet$ whose measure we want to update, and
      \item
        the set of successor positions $\TSet$ that the update must be based on.
    \end{inparaenum}
    Formally, we obtain the following definition:
    \vspace{-0.5em}
    \[
      \liftFun(\mfElm, \SSet, \TSet)(\posElm) \defeq
      \begin{cases}
        \max {\set{ \mfElm(\wposElm) + \posElm }{ \wposElm \in \MovRel(\posElm)
        \cap \TSet }},
        & \text{if } \posElm \in \SSet \cap \PosSet[\PlrSym]; \\
        \min {\set{ \mfElm(\wposElm) + \posElm }{ \wposElm \in \MovRel(\posElm)
        \cap \TSet }},
        & \text{if } \posElm \in \SSet \cap \PosSet[\OppSym]; \\
        \mfElm(\posElm),
        & \text{otherwise}.
      \end{cases}
      \vspace{-0.5em}
    \]

    The $\prgFun[\bot]$ operator is tasked with adjusting the measure of the
    positions that currently have the minimal measure $\bot$.
    After the update, some of them will acquire a positive measure, thus
    entering into the current quasi $\PlrSym$-dominion.
    From an operational viewpoint, $\prgFun[\bot]$ consists of a single call to
    $\liftFun$:
    \[
      \prgFun[\bot](\mfElm) \defeq \liftFun(\mfElm, \denot{\mfElm}[\bot],
      \PosSet) : \MFSet \to \MFSet.
    \]

    \begin{wrapfigure}[12]{r}[0em]{0.37\textwidth}
      \vspace{-1.25em}
      \algprg
    \end{wrapfigure}
    Applying the definitions, one can easily see that $\prgFun[\bot]$ raises the
    measure of those positions such that:
    \begin{inparaenum}[(i)]
      \item
        they have the minimal measure $\bot$,
      \item
        they either belong to player $\PlrSym$ and have an adjacent with
        positive measure or belong to player $\OppSym$ and have all adjacents
        with positive measures, and
      \item
        the stretch of the adjacent measures is greater than $\bot$.
    \end{inparaenum}
    The following lemma states the main properties of interest for the
    $\prgFun[\bot]$ operator, where we assume $\MFSet[\bot] \defeq \set{ \mfElm
    \in \MFSet }{ \forall \posElm \in \denot{\mfElm}[\bot] \,.\, \mfElm(\posElm)
    = \bot }$ and $\QDMFSet[\bot] \defeq \QDMFSet \cap \MFSet[\bot]$.
    Observe that, by Proposition~\ref{prp:botden}, it holds that $\SMFSet
    \subseteq \MFSet[\bot]$.

    \begin{restatable}{lemma}{prgbotr}
      \label{lmm:prgbot}
      The progress operator $\prgFun[\bot]$ enjoys the following properties:
      \begin{inparaenum}[(i)]
        \item\label{lmm:prgbot(mfb)}
          it is an inflationary function from $\MFSet[\bot]$ to $\MFSet[\bot]$;
        \item\label{lmm:prgbot(qdm)}
          it maps $\QDMFSet[\bot]$ into $\QDMFSet[\bot]$;
        \item\label{lmm:prgbot(smf)}
          it maps $\SMFSet$ into $\SMFSet$;
        \item\label{lmm:prgbot(fix)}
          every fixpoint $\mfElm \in \MFSet$ of $\prgFun[\bot]$ is a progress
          measure over $\denot{\mfElm}[\bot]$.
      \end{inparaenum}
    \end{restatable}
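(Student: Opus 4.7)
The plan is to treat each of the four items separately, exploiting the fact that $\prgFun[\bot](\mfElm)(\posElm) = \mfElm(\posElm)$ whenever $\posElm \notin \denot{\mfElm}[\bot]$, so the update touches only positions currently carrying the minimal measure. For item~\emph{(i)}, inflationarity is immediate: unchanged positions retain their measure, while positions in $\denot{\mfElm}[\bot]$ start at $\bot$, which is the minimum of the order, so every update assigns a value $\geq \bot$. To show preservation of $\MFSet[\bot]$, I will examine any position $\posElm$ with $\prgFun[\bot](\mfElm)(\posElm) \rst_{\posElm} = \bot$: if $\posElm \in \denot{\mfElm}[+]$ the measure is unchanged yet its truncation is non-$\bot$, a contradiction; hence $\posElm \in \denot{\mfElm}[\bot]$, and Proposition~\ref{prp:naimsrstr}(\ref{prp:naimsrstr(str)}) forces the updated measure itself to be $\bot$, since any non-$\bot$ stretch has non-$\bot$ truncation.

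For item~\emph{(iv)}, the fixpoint condition $\prgFun[\bot](\mfElm) = \mfElm$ directly unfolds via the definition of $\liftFun$: on each $\posElm \in \denot{\mfElm}[\bot] \cap \PosSet[\PlrSym]$, we get $\mfElm(\posElm) = \max_{\wposElm \in \MovRel(\posElm)} \mfElm(\wposElm) + \posElm$, and dually with $\min$ for $\posElm \in \denot{\mfElm}[\bot] \cap \PosSet[\OppSym]$; these equalities are precisely the two clauses of Definition~\ref{def:prgmsr} restricted to $\denot{\mfElm}[\bot]$.

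Item~\emph{(ii)} will be argued through monotonicity of the stretch operator (Definition~\ref{def:msrspc}, clause~\ref{def:msrspc(str:mon)}) to lift the regress inequalities. For updated positions, the $\max$/$\min$ taken by $\liftFun$ yields a new measure that already satisfies the regress inequality against the old measures, and the inequality remains valid against the new ones because $\mfElm \sqsubseteq \prgFun[\bot](\mfElm)$. For positions left untouched, the old inequalities lift along the increased adjacents by the same monotonicity. The more delicate part is showing that $\denot{\prgFun[\bot](\mfElm)}[\PlrSym]$ remains a $\PlrSym$-dominion: a newly $\top$-valued position $\posElm \in \denot{\mfElm}[\bot] \cap \PosSet[\PlrSym]$ admits an adjacent $\wposElm$ with $\mfElm(\wposElm) + \posElm = \top$, and canonicity of the stretch (clause~\ref{def:msrspc(str:top)}) back-propagates the value, giving $\mfElm(\wposElm) = \top$ and thus $\wposElm \in \denot{\mfElm}[\PlrSym]$; dually, every adjacent of a newly $\top$-valued $\OppSym$-position must already lie in $\denot{\mfElm}[\PlrSym]$. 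The $\PlrSym$-winning strategy for $\denot{\mfElm}[\PlrSym]$ then extends to the new $\top$-valued positions, by directing $\PlrSym$ toward such a $\wposElm$ in one step.

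For item~\emph{(iii)}, if $\posElm \in \denot{\mfElm}[\bot]$ is assigned the new non-$\top$ measure $\mfElm(\wposElm) + \posElm$ through an adjacent $\wposElm$, then by simplicity of $\mfElm$ there is a simple path $\pthElm \in \SPthSet(\wposElm, \denot{\mfElm}[+])$ with $\mfElm(\wposElm) = \msrFun(\pthElm)$; the extended path $\posElm \cdot \pthElm$ is still simple, because $\posElm \notin \denot{\mfElm}[+]$, and it lies in $\denot{\prgFun[\bot](\mfElm)}[+]$, since $\denot{\mfElm}[+] \subseteq \denot{\prgFun[\bot](\mfElm)}[+]$ by inflationarity and $\posElm$ itself now has non-$\bot$ measure. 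Hence $\mfElm(\wposElm) + \posElm = \msrFun(\posElm \cdot \pthElm) \in \SMsrSet(\posElm, \denot{\prgFun[\bot](\mfElm)}[+])$. Positions whose updated measure is $\top$ or $\bot$ are handled trivially (the former belongs to $\SMsrSet$ by construction, the latter equals $\msrFun(\varepsilon)$), and untouched positions remain simple via the same inclusion. I expect the main obstacle to be item~\emph{(ii)}, where the preservation of the $\PlrSym$-dominion property requires the careful player-case split together with canonicity of the stretch operator to infer that a stretch hitting $\top$ must come from an underlying measure that already equals $\top$.
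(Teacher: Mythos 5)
Your proposal is correct and follows essentially the same route as the paper's proof: the same reduction to positions in $\denot{\mfElm}[\bot]$, the same use of canonicity and monotonicity of the stretch to lift the regress inequalities and to back-propagate $\top$ along moves for the $\PlrSym$-dominion argument, the same extension $\posElm \cdot \pthElm$ of a simple path for item~(iii), and the same unfolding of $\liftFun$ at a fixpoint for item~(iv). The only deviation is in item~(i), where you apply the contrapositive of Proposition~\ref{prp:naimsrstr}(\ref{prp:naimsrstr(str)}) uniformly to conclude that a $\bot$-truncated updated measure must itself be $\bot$, whereas the paper splits on the parity of the updated position and shows updated positions leave $\denot{\cdot}[\bot]$ altogether; this is a mild streamlining, not a different argument.
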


    We now turn our attention to the progress operator $\prgFun[+]$, whose
    pseudo-code is reported in Algorithm~\ref{alg:prg}.
    Besides the lift function, this operator employs two other functions,
    $\escFun \colon \MFSet \times \pow{\PosSet} \to \pow{\PosSet}$ and $\befFun
    \colon \MFSet \times \pow{\PosSet} \times \PosSet \to \EvlSet$, called,
    respectively, \emph{escape function} and \emph{best-escape position
    function}.
    Given a set of positions $\QSet$, the escape function collects the subset of
    positions in $\QSet$ from which their owner wants or is forced to exit from
    $\QSet$, according to their objective.
    Specifically, those are
    \begin{inparaenum}[(i)]
      \item
        the $\OppSym$-positions having a successor outside $\QSet$ and
      \item
        the $\PlrSym$-positions $\posElm$ such that none of their successors
        $\wposElm$ belonging to $\QSet$ support the measure of $\posElm$ in the
        current measure function.
    \end{inparaenum}
    Formally:

    \vspace{-1em}
    \[
      \escFun(\mfElm, \QSet) \,\defeq\,
      {\set{ \posElm \in \QSet \cap \PosSet[\OppSym] }{ \MovRel(\posElm)
      \setminus \QSet \neq \emptyset }}
      \,\cup\, {\set{ \posElm \in \QSet \cap \PosSet[\PlrSym] }{ \forall
      \wposElm \in \MovRel(\posElm) \cap \QSet \,.\, \mfElm(\wposElm) + \posElm
      < \mfElm(\posElm) }}.
    \]
%   \begin{linenomath}
%   \begin{align*}
%     \escFun(\mfElm, \QSet)
%     & \defeq {\set{ \posElm \in \QSet \cap \PosSet[\OppSym] }{
%     \MovRel(\posElm) \setminus \QSet \neq \emptyset }} \\
%     & \:\cup {\set{ \posElm \in \QSet \cap \PosSet[\PlrSym] }{ \forall
%     \wposElm \in \MovRel(\posElm) \cap \QSet \,.\, \mfElm(\wposElm) + \posElm
%     < \mfElm(\posElm) }}.
%   \end{align*}
%   \end{linenomath}

    All positions belonging to $\escFun(\mfElm, \QSet)$ must be lifted during
    the execution of $\prgFun[+]$.
    However, they must be lifted in the appropriate order: namely, the first
    positions to be lifted are those whose measure will rise the least.
    This is the role of the $\bepFun$ function and its supporting \emph{best
    escape forfeit function} $\befFun$.
    The $\bepFun$ simply collects the positions $\posElm$ having minimal
    \emph{forfeit} $\befFun(\mfElm, \QSet, \posElm)$ (defined below).
    \[
      \bepFun(\mfElm, \QSet) \defeq \underset{\posElm \in \QSet}{\arg\min} \;
      \befFun(\mfElm, \QSet, \posElm) :
      \MFSet \times \pow{\PosSet} \to \pow{\PosSet}.
    \]
    Assuming that $\posElm$ is a position in the escape set $\escFun(\mfElm,
    \QSet)$, its forfeit is the difference between the measure that $\posElm$
    would acquire if lifted and its current measure.
    In the following, we use the difference operation $\evlElm[1] - \evlElm[2]
    \defeq \evlElm[1] + (-\evlElm[2])$ defined as usual.
    \[
      \befFun(\mfElm, \QSet, \posElm) \defeq
      \begin{cases}
        \max {\set{( \mfElm(\wposElm) + \posElm) - \mfElm(\posElm) }{ \wposElm
        \in \MovRel(\posElm) \setminus \QSet }},
        & \text{if } \posElm \in \escFun(\mfElm, \QSet) \cap \PosSet[\PlrSym];
        \\
        \min {\set{ (\mfElm(\wposElm) + \posElm) - \mfElm(\posElm) }{ \wposElm
        \in \MovRel(\posElm) \setminus \QSet }},
        & \text{if } \posElm \in \escFun(\mfElm, \QSet) \cap \PosSet[\OppSym];
        \\
        \top,
        & \text{otherwise}.
      \end{cases}
    \]

    Notice that the following inclusions are an immediate consequence of the
    definitions:
    $\bepFun(\mfElm, \QSet) \subseteq \escFun(\mfElm, \QSet) \subseteq \QSet$.
    The following lemma is the core result of this section, as it proves the key
    properties of the progress operator $\prgFun[+]$.

    \begin{restatable}{lemma}{prgplsr}
      \label{lmm:prgpls}
      The progress operator $\prgFun[+]$ enjoys the following properties:
      \begin{inparaenum}[(i)]
        \item\label{lmm:prgpls(mfb)}
          it maps $\MFSet[\bot]$ into $\MFSet[\bot]$;
        \item\label{lmm:prgpls(qdm)}
          it is an inflationary function from $\QDMFSet$ to $\QDMFSet$;
        \item\label{lmm:prgpls(smf)}
          it maps $\SMFSet$ into $\SMFSet$;
        \item\label{lmm:prgpls(fix)}
          every fixpoint $\mfElm \in \MFSet$ of $\prgFun[+]$ is a progress
          measure over $\denot{\mfElm}[+]$.
      \end{inparaenum}
    \end{restatable}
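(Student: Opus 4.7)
My plan is to prove the four items via a shared loop invariant for Algorithm~\ref{alg:prg}. Let $\mfElm_{0}$ denote the input and $(\mfElm_{i}, \QSet_{i})$ the value of the loop variables at the start of iteration $i$, with $\QSet_{0} = \denot{\mfElm_{0}}[+]$. I will maintain the following invariants: (a) $\mfElm_{i}$ agrees with $\mfElm_{0}$ on $\PosSet \setminus \QSet_{0}$; (b) $\mfElm_{0} \sqsubseteq \mfElm_{i}$ whenever $\mfElm_{0} \in \QDMFSet$; (c) $\denot{\mfElm_{i}}[+] \supseteq \QSet_{i}$ and moreover $\dual{\QSet_{i}}$ supports a regress-like condition for every position still in $\QSet_{i}$; (d) when $\mfElm_{0}$ is simple, the measure of each position lifted so far is witnessed by a simple path starting at that position and travelling inside $\denot{\mfElm_{i}}[+]$. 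The crucial observation enabling these invariants is that each call to $\liftFun$ reads only measures from $\dual{\QSet_{i}}$, which remain pointwise fixed during the iteration, and writes only on $\ESet \subseteq \QSet_{i}$, which is then removed from $\QSet$.

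For item (i), I note that $\denot{\mfElm_{0}}[\bot] = \PosSet \setminus \QSet_{0}$ and these positions are never touched, so their value stays $\bot$ by (a). On the lifted positions, the new measure is of the form $\mfElm(\wposElm) + \posElm$: if this is $\bot$, then the position contributes consistently to the $\bot$-denotation of the new measure; if it is non-$\bot$, Proposition~\ref{prp:naimsrstr}(ii) yields $(\mfElm(\wposElm) + \posElm) \rst_{\posElm} \neq \bot$, so the position is not in $\denot{\mfElm'}[\bot]$ at all. The final $\top$-update preserves $\MFSet[\bot]$ trivially. For item (iii), I would argue by induction that, when $\mfElm_{0} \in \SMFSet$, a lifted position $\posElm \in \ESet$ receives measure $\msrFun(\posElm \cdot \pthElm)$ where $\pthElm$ is a simple path in $\denot{\mfElm_{i}}[+]$ witnessing $\mfElm_{i}(\wposElm)$; since $\wposElm \in \dual{\QSet_{i}}$ while $\posElm \in \QSet_{i}$, the path $\pthElm$ already avoids $\QSet_{i}$-positions and in particular avoids $\posElm$, keeping $\posElm \cdot \pthElm$ simple. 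The $\top$-assignment is compatible with $\SMFSet$ by construction.

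The bulk of the work lies in item (ii), which is also where I expect the main technical difficulty. To show that lifting on $\ESet = \bepFun(\mfElm_{i}, \QSet_{i})$ is non-decreasing on every processed position, I use the regress property: for a $\PlrSym$-position $\posElm \in \ESet \cap \escFun$, the escape condition forbids every in-$\QSet$ successor from supporting $\mfElm(\posElm)$, so the successor witnessing $\mfElm(\posElm) \leq \mfElm(\wposElm) + \posElm$ guaranteed by Definition~\ref{def:regmsr}(\ref{def:regmsr(plr)}) must lie in $\dual{\QSet}$, whence $\max$ over $\dual{\QSet}$ is at least $\mfElm(\posElm)$; for an $\OppSym$-position, Definition~\ref{def:regmsr}(\ref{def:regmsr(opp)}) gives the required lower bound for \emph{every} adjacent, hence for the $\min$. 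The delicate point is propagating the regress invariant across iterations: after lifting $\ESet$ and moving it into $\dual{\QSet_{i+1}}$, the residual $\QSet_{i+1}$-positions must still satisfy a regress condition referring now to the \emph{new} measures of former escape positions. This is exactly why positions are processed by \emph{minimal forfeit} (through $\bepFun$): the best-escape order guarantees that lifting a position raises its measure by no more than what any other not-yet-processed escaping position would require, so that adjacency-based inequalities for the remaining ones cannot be broken. Once inflation is established, I also derive $\QDMFSet$-preservation by verifying at termination that $\QSet_{n}$, where $\escFun$ is empty, together with the $\PlrSym$-witness induced by the lift choices, forms a $\PlrSym$-dominion via Corollary~\ref{cor:qsidom}: $\PlrSym$-positions in $\QSet_{n}$ retain an in-$\QSet_{n}$ supporting successor, and $\OppSym$-positions in $\QSet_{n}$ have all successors inside $\QSet_{n}$, so setting them to $\top$ yields a quasi-dominion measure.

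For item (iv), if $\mfElm = \prgFun[+](\mfElm)$, then the net effect of the algorithm is the identity. Combining inflation from (ii) with the fact that the last line rewrites every still-alive $\QSet_{n}$-position to $\top$, the fixpoint forces $\denot{\mfElm}[+] = \denot{\mfElm}[\PlrSym]$, i.e.\ every position in $\denot{\mfElm}[+]$ already carries the measure $\top$. Under this condition, Definition~\ref{def:prgmsr} is satisfied trivially over $\denot{\mfElm}[+]$, since $\mfElm(\wposElm) + \posElm \leq \top = \mfElm(\posElm)$ holds for every adjacent $\wposElm$, yielding the progress measure property. The principal obstacle, as anticipated, is a careful induction for item (ii): getting the regress invariant to survive across lift steps and identifying the precise argument that makes $\bepFun$'s minimal-forfeit choice compatible with subsequent escape conditions for the still-pending positions.
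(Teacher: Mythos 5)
Your overall architecture matches the paper's: the same unrolling of Algorithm~\ref{alg:prg} into sequences $(\QSet[i], \ESet[i], \mfElm[i])$, the same treatment of item (i) (a non-$\bot$ lift has non-$\bot$ truncation by Proposition~\ref{prp:naimsrstr}) and of item (iii) (simple-path witnesses extended by one position), and the same endgame for item (ii) via closedness of the residual set and Corollary~\ref{cor:qsidom}. However, your argument for item (iv) rests on a false claim. A fixpoint of $\prgFun[+]$ does \emph{not} force $\denot{\mfElm}[+] = \denot{\mfElm}[\PlrSym]$: the final assignment only overwrites the residual set $\QSet[k]$ with $\top$, while every position extracted into some $\ESet[i]$ is merely re-lifted to the (generally non-$\top$) value it already carries, so the algorithm can act as the identity without those positions being $\top$. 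The paper's own worked example exhibits exactly this at its final fixpoint: $\bSym, \cSym, \dSym, \gSym, \hSym$ all lie in $\denot{\mfElm}[+]$ with non-$\top$ measures while $\denot{\mfElm}[\PlrSym] = \{\aSym, \eSym\}$. The correct proof of (iv) must split on whether $\posElm \in \QSet[k]$ (where $\top$ makes progress trivial) or $\posElm \in \ESet[i]$ for some $i$, and in the latter case derive the two conditions of Definition~\ref{def:prgmsr} by combining the escape condition (for $\PlrSym$-positions, every move staying in $\QSet[i]$ strictly decreases) with the $\max$/$\min$ taken by $\liftFun$ over the moves leaving $\QSet[i]$. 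Your "trivially satisfied" shortcut does not exist.

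For item (ii), you correctly identify that the minimal-forfeit schedule of $\bepFun$ is what makes the induction work, but the sentence "the best-escape order guarantees that \ldots adjacency-based inequalities for the remaining ones cannot be broken" is precisely the claim that has to be proven, not a justification of it. The paper's proof needs an interleaved induction on four properties — inflation $\mfElm \sqsubseteq \mfElm[i]$, existence of a non-$\top$ external witness for each lifted position, monotonicity of the forfeit sequence $\evlElm[i-1] \leq \evlElm[i]$, and non-negativity $\mathbf{0} \leq \evlElm[i]$ — with case splits on the owner of the lifted position and on whether its witnessing successor was itself lifted at the immediately preceding step, and it relies on the Abelian-group structure of Proposition~\ref{prp:naievlstr} and the commutation property of Proposition~\ref{prp:naimsrstr} to rearrange the forfeit differences. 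The subsequent verification that the output is still a regress measure (the $\OppSym$-position case, where one must show $\iota(\posElm) > \iota(\wposElm)$ from forfeit monotonicity) is likewise absent from your sketch. A smaller point: your invariant (d) should require the witnessing simple path to lie in $\denot{\mfElm}[+] \setminus \QSet[i]$, not merely in $\denot{\mfElm[i]}[+]$; otherwise the inference that $\posElm \cdot \pthElm$ remains simple because $\posElm \in \QSet[i]$ does not go through.
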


    \begin{figure}[htbp]
      \vspace{-1.00em}
      \begin{center}
        \tabsim
      \end{center}
      \vspace{-1.00em}
      \caption{\label{fig:exm} Simulating the first steps of the concrete
        algorithm on a simple game.
        Positions of player $\PlrSym$ are circles and positions of player
        $\OppSym$ are squares.
        The label inside each position indicates its name and priority.}
      \vspace{-0.50em}
    \end{figure}

    As an example, consider the simple game $\GamName$ in Figure~\ref{fig:exm}.
    Starting from the minimal measure function $\mfElm[\bot]$, the solution
    algorithm first computes $\mfElm[1] = \prgFun[\bot](\mfElm[\bot])$, by
    lifting the four even-priority positions $\aSym$, $\bSym$, $\cSym$, and
    $\hSym$ to their respective measures $\msrElm[\aSym] = \msrElm[\hSym] \defeq
    (0, 0, 0, 0, 1, 0, 0)$, $\msrElm[\bSym] \defeq (1, 0, 0, 0, 0, 0, 0)$, and
    $\msrElm[\cSym] \defeq (0, 0, 1, 0, 0, 0, 0)$.
    Figure~\ref{fig:exm}.1 reports the situation after this initial phase, where
    the blue (\resp, dashed red) edges indicate the moves satisfying (\resp, not
    satisfying) the progress condition.
    Since inside the quasi $0$-dominion $\denot{\mfElm[1]}[+] = \{ \aSym, \bSym,
    \cSym, \hSym \}$, identified by the grey area, all positions are in
    progress, the $\prgFun[+]$ operator does not change their measures, \ie,
    $\mfElm[1] = \prgFun[+](\mfElm[1])$.
    The three odd-priority positions $\dSym$, $\eSym$, and $\gSym$ outside
    $\denot{\mfElm[1]}[+]$ do not satisfy the progress condition, so the
    $\prgFun[\bot]$ operator lifts their measure to $\msrElm[\dSym] \defeq (0,
    0, 0, 0, 1, 1, 0)$, $\msrElm[\eSym] \defeq (1, 0, 0, 0, 0, 1, 0)$, and
    $\msrElm[\gSym] \defeq (0, 0, 1, 1, 0, 0, 0)$, as reported in
    Figure~\ref{fig:exm}.2.
    Now, positions $\aSym$ and $\hSym$ inside the quasi $0$-dominion do not
    satisfy the progress condition anymore.
    Therefore, the $\prgFun[+]$ operator tries to recover the condition as
    follows.
    It starts by identifying the escape positions $\escFun(\mfElm[2], \QSet[0])
    = \{ \bSym \}$ of $\QSet[0] \defeq \denot{\mfElm[2]}[+] = \{ \aSym, \bSym,
    \cSym, \dSym, \eSym, \gSym, \hSym \}$.
    Since $\bSym$ has a progress move exiting from $\QSet[0]$, its measure
    remains unchanged.
    Now, $\escFun(\mfElm[2], \QSet[1]) = \{ \cSym, \eSym \}$, where $\QSet[1]
    \defeq \QSet[0] \setminus \{ \bSym \}$.
    Since $\eSym$ has a progress move exiting from $\QSet[1]$, while $\cSym$ can
    escape from $\QSet[1]$ only by increasing its measure, we have
    $\bepFun(\mfElm[2][], \QSet[1]) = \{ \eSym \}$ and $\QSet[2] \defeq \QSet[1]
    \setminus \{ \eSym \}$.
    Also in this case, $\eSym$ does not change its measure.
    The process continues by extracting and lifting the measures of all the
    remaining positions in $\QSet[2]$ in the following order:
    \begin{inparaenum}[(i)]
      \item
        $\cSym$ with $\msrElm[\cSym]' \defeq (1, 0, 1, 0, 0, 0, 0)$;
      \item
        $\gSym$ with $\msrElm[\gSym]' \defeq (1, 0, 1, 1, 0, 0, 0)$;
      \item
        $\hSym$ with $\msrElm[\hSym]' \defeq (1, 0, 1, 1, 1, 0, 0)$;
      \item
        $\dSym$ with $\msrElm[\dSym]' \defeq (1, 0, 1, 1, 1, 1, 0)$;
      \item
        $\aSym$ with $\msrElm[\aSym]' \defeq (1, 0, 1, 1, 2, 1, 0)$.
    \end{inparaenum}
    Note that player-$0$ positions $\gSym$ and $\dSym$ are forced to exit the
    quasi $0$-dominion, since their internal moves $(\gSym, \dSym)$ and $(\dSym,
    \dSym)$ do not satisfy the regress condition, as $\msrElm[\dSym] + \gSym =
    \bot < \msrElm[\gSym]$ and $\msrElm[\dSym] + \dSym = (0, 0, 0, 0, 1, 2, 0) <
    \msrElm[\dSym]$; these moves would form, indeed, odd cycles.
    Figure~\ref{fig:exm}.3 reports the situation after the complete execution of
    $\prgFun[+]$, where position $\eSym$ has the non-progress move $(\eSym,
    \aSym)$.
    Another application of $\prgFun[+]$ modifies the measure of $\eSym$ to
    $\msrElm[\eSym]' \defeq (1, 0, 1, 1, 2, 2, 0)$, triggering the non-progress
    move $(\aSym, \eSym)$.
    After a final application of $\prgFun[+]$, positions $\aSym$ and $\eSym$ are
    lifted to $\top$ and the algorithm reaches its fixpoint.
    All positions except $\aSym$ and $\eSym$ satisfy the progress conditions and
    are, thus, winning for player $\OppSym$; $\aSym$ and $\eSym$ are won by
    player $\PlrSym$.

    We can prove that the solver operator is well defined and that, when it is
    applied to a simple measure, it converges in a finite number of iterations,
    at most equal to the depth of the finite partial order
    $\tupleb {\SMFSet} {\sqsubseteq}$.
    A very coarse upper bound on this depth, for a game with $n$ positions, is
    given by $(n + 1)!$, since every non-$\top$ position is associated with the
    measure of a simple path of length less than $n$ and there are at most $n!$
    such paths.
    In the following, we use $\SQDMFSet \defeq \QDMFSet \cap \SMFSet$.

    \begin{theorem}[Termination]
      \label{thm:ter}
      The solver operator $\solFun \colon \SQDMFSet \to \SQDMFSet$ is a
      well-defined function.
      Moreover, for every $\mfElm \in \SQDMFSet$, there exists an index $k \leq
      d$, such that $\solFun(\mfElm) = (\ifpFun[k]\, \nu \,.\,
      \prgFun[+](\prgFun[\bot](\nu)))(\mfElm)$, where $d \in \SetN$ is the depth
      of the finite partial order $\tupleb {\SMFSet} {\sqsubseteq}$.
    \end{theorem}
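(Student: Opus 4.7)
The plan is to reduce the theorem to two separate facts about the composition $\prgFun[+] \circ \prgFun[\bot]$: that it is a total inflationary self-map of $\SQDMFSet$, and that any strictly ascending chain in $\SMFSet$ has length at most $d$. Termination of $\solFun$ (and hence the bound on $k$) then follows by a standard finite ordinal argument on inflationary operators.

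First I would note the set inclusion $\SQDMFSet \subseteq \QDMFSet[\bot] \cap \SMFSet$. Indeed, by Proposition~\ref{prp:botden}, every simple measure function $\mfElm$ satisfies $\mfElm(\posElm) = \bot$ for all $\posElm \in \denot{\mfElm}[\bot]$, so $\SMFSet \subseteq \MFSet[\bot]$, and consequently $\SQDMFSet = \QDMFSet \cap \SMFSet \subseteq \QDMFSet \cap \MFSet[\bot] = \QDMFSet[\bot]$. This is the invariant tying together the hypotheses of Lemmas~\ref{lmm:prgbot} and~\ref{lmm:prgpls}.

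Next, I would trace an arbitrary element $\mfElm \in \SQDMFSet$ through one round of the composition. By Lemma~\ref{lmm:prgbot}(ii,iii), $\prgFun[\bot]$ sends $\QDMFSet[\bot]$ into $\QDMFSet[\bot]$ and $\SMFSet$ into $\SMFSet$, so $\prgFun[\bot](\mfElm)$ lies in $\QDMFSet[\bot] \cap \SMFSet \subseteq \SQDMFSet$. Applying now Lemma~\ref{lmm:prgpls}(ii,iii) yields $\prgFun[+](\prgFun[\bot](\mfElm)) \in \QDMFSet \cap \SMFSet = \SQDMFSet$, proving that the composition is a total function from $\SQDMFSet$ to itself. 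Moreover, since $\prgFun[\bot]$ is inflationary on $\MFSet[\bot]$ (Lemma~\ref{lmm:prgbot}(i)) and $\prgFun[+]$ is inflationary on $\QDMFSet$ (Lemma~\ref{lmm:prgpls}(ii)), and both intermediate images lie in the appropriate domains, the composition itself is inflationary on $\SQDMFSet$ with respect to $\sqsubseteq$.

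Finally, for termination I would invoke Proposition~\ref{prp:finparord}: $(\SMFSet, \sqsubseteq)$ is a finite partial order, and by definition its depth is $d$. Iterating the inflationary operator $\prgFun[+] \circ \prgFun[\bot]$ from $\mfElm$ produces an ascending chain in $\SQDMFSet \subseteq \SMFSet$, which must either remain strictly increasing or stabilise; since no chain has more than $d$ strict increases, stabilisation (i.e., a fixpoint) is reached at some stage $k \leq d$, giving $\solFun(\mfElm) = (\ifpFun[k]\, \nu \,.\, \prgFun[+](\prgFun[\bot](\nu)))(\mfElm)$ as required. There is no real obstacle here beyond carefully tracking the invariants that certify the inputs to each lemma; the heavy lifting was done in Lemmas~\ref{lmm:prgbot} and~\ref{lmm:prgpls}, and the termination bound is an immediate consequence of finiteness of $\SMFSet$.
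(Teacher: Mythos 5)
Your proposal is correct and follows essentially the same route as the paper's proof: combine the closure and inflationary properties from Lemmas~\ref{lmm:prgbot} and~\ref{lmm:prgpls} to show that $\prgFun[+] \cmp \prgFun[\bot]$ is a total inflationary self-map of $\SQDMFSet$, and then bound the ascending iteration chain by the depth $d$ of the finite partial order $\tupleb{\SMFSet}{\sqsubseteq}$. If anything, your explicit tracking of the inclusion $\SQDMFSet \subseteq \QDMFSet[\bot] \cap \SMFSet$ and of which lemma item certifies each intermediate image is slightly more careful than the paper's one-line appeal to the two lemmas.
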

    \begin{proof}
      By Items~\ref{lmm:prgbot(mfb)}-\ref{lmm:prgbot(smf)} of
      Lemma~\ref{lmm:prgbot} and
      Items~\ref{lmm:prgpls(mfb)}-\ref{lmm:prgpls(smf)} of
      Lemma~\ref{lmm:prgpls}, we have that
      $\prgFun[\bot]$ and $\prgFun[+]$ are inflationary total functions on
      $\SQDMFSet$, which implies that their composition $\prgFun[+] \cmp
      \prgFun[\bot]$ is both inflationary and total on $\SQDMFSet$ as well.
%       Thus, thanks to the inflationary fixpoint theorem~\cite{Bou49,Wit50},
%       $\prgFun[+] \cmp \prgFun[\bot]$ always admits an inflationary fixpoint,
%       starting from any possible measure function in $\QDMFSet[\bot]$.
%       Hence, the solver operator $\solFun$ is well defined.
      Consider now the infinite sequence $\mfElm[0], \mfElm[1], \ldots$ of
      measure functions recursively derived from an arbitrary input element
      $\mfElm \in \SQDMFSet$ as follows: $\mfElm[0] \defeq (\ifpFun[0]\, \nu
      \,.\, \prgFun[+] (\prgFun[\bot](\nu)))(\mfElm) = \mfElm$ and $\mfElm[i +
      1] \defeq (\ifpFun[i + 1]\, \nu \,.\,
      \prgFun[+](\prgFun[\bot](\nu)))(\mfElm) =
      \prgFun[+](\prgFun[\bot](\mfElm[i]))$, for all $i \in \SetN$.
      Obviously, $\mfElm[i] \sqsubseteq \mfElm[i + 1]$.
      Moreover, each element $\mfElm[i]$ is a \smf.
      Since every strict chain in $\tupleb {\SMFSet} {\sqsubseteq}$ can be
      composed of at most $d$ elements, there necessarily exists an index $k
      \leq d$ such that $\mfElm[k + 1] = \mfElm[k]$, as required by the theorem
      statement.
    \end{proof}

    The next theorem stating the soundness and completeness of the solution
    algorithm is a simple consequence of the properties of the $\prgFun[\bot]$
    and $\prgFun[+]$ operators, combined with the general results about the
    measure-function spaces discussed in the previous sections.

  \begin{theorem}[Solution]
    \label{thm:sol}
    $\WinSet[\PlrSym] = \denot{\solFun(\mfElm[\bot])}[\PlrSym]$ and
    $\WinSet[\OppSym] = \denot{\solFun(\mfElm[\bot])}[\OppSym]$.
  \end{theorem}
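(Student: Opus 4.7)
My plan is to derive the equalities from two soundness inclusions, together with the fact that parity games are determined. Write $\mfElm^{\star} \defeq \solFun(\mfElm[\bot])$. By Proposition~\ref{prp:minqdm}, $\mfElm[\bot] \in \SQDMFSet$, so Theorem~\ref{thm:ter} applies and guarantees that $\mfElm^{\star}$ is well defined and lies in $\SQDMFSet$. In particular $\mfElm^{\star}$ is both a simple measure and a quasi-dominion measure over the concrete measure structure, which by Proposition~\ref{prp:naimsrstr} is both a progress and a regress measure space with canonical truncation and stretch operators.

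Next, I would establish that $\mfElm^{\star}$ is simultaneously a fixpoint of $\prgFun[\bot]$ and of $\prgFun[+]$. By definition of the inflationary fixpoint, $\mfElm^{\star} = \prgFun[+](\prgFun[\bot](\mfElm^{\star}))$. Since $\prgFun[\bot]$ is inflationary on $\MFSet[\bot]$ (Lemma~\ref{lmm:prgbot}(i)) and $\prgFun[+]$ is inflationary on $\QDMFSet$ (Lemma~\ref{lmm:prgpls}(ii)), the chain
\[
\mfElm^{\star} \sqsubseteq \prgFun[\bot](\mfElm^{\star}) \sqsubseteq \prgFun[+](\prgFun[\bot](\mfElm^{\star})) = \mfElm^{\star}
\]
forces $\prgFun[\bot](\mfElm^{\star}) = \mfElm^{\star}$ and therefore $\prgFun[+](\mfElm^{\star}) = \mfElm^{\star}$ as well. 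I would then invoke Lemma~\ref{lmm:prgbot}(iv) and Lemma~\ref{lmm:prgpls}(iv) to conclude that $\mfElm^{\star}$ is a progress measure on $\denot{\mfElm^{\star}}[\bot]$ and on $\denot{\mfElm^{\star}}[+]$, respectively. Since these two sets partition $\PosSet$ by construction of the $\bot$- and $+$-denotations, $\mfElm^{\star}$ is a progress measure on the whole arena in the sense of Definition~\ref{def:prgmsr}.

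From this I would read off the two soundness inclusions. Applying Theorem~\ref{thm:prgmsr} with the canonical truncation of the concrete measure structure yields that $\denot{\mfElm^{\star}}[\OppSym]$ is an $\OppSym$-dominion, hence $\denot{\mfElm^{\star}}[\OppSym] \subseteq \WinSet[\OppSym]$. Dually, since $\mfElm^{\star} \in \QDMFSet$, Definition~\ref{def:qsidommsr} gives that $\denot{\mfElm^{\star}}[\PlrSym]$ is a $\PlrSym$-dominion, so $\denot{\mfElm^{\star}}[\PlrSym] \subseteq \WinSet[\PlrSym]$. Because the two denotations partition $\PosSet$ and, by memoryless determinacy of parity games, the winning sets also partition $\PosSet$, the two inclusions upgrade to equalities, proving the theorem.

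I do not anticipate a genuine obstacle here: the heavy lifting is already done by Lemmas~\ref{lmm:prgbot} and~\ref{lmm:prgpls} and by Theorems~\ref{thm:prgmsr} and~\ref{thm:qsidommsri}. The only subtle point to verify carefully is the collapse of the composite fixpoint to two separate fixpoints in the second paragraph, which I expect to be immediate once one observes that $\mfElm^{\star}$ remains in $\QDMFSet \cap \MFSet[\bot] \supseteq \SQDMFSet$, the common domain on which both operators are known to be inflationary.
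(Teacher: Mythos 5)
Your proposal is correct and follows essentially the same route as the paper's own proof: collapse the composite inflationary fixpoint into separate fixpoints of $\prgFun[\bot]$ and $\prgFun[+]$, use items (iv) of Lemmas~\ref{lmm:prgbot} and~\ref{lmm:prgpls} to obtain a progress measure, apply Theorem~\ref{thm:prgmsr} and Definition~\ref{def:qsidommsr} for the two soundness inclusions, and upgrade to equalities because the denotations partition the positions. The only cosmetic difference is that you invoke memoryless determinacy explicitly for the final step, where disjointness of the two winning sets already suffices.
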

  \begin{proof}
    Let $\mfElm[][\star] \defeq \solFun(\mfElm[\bot])$ be the result of the
    application of the solver operator to the minimal measure function
    $\mfElm[\bot] \in \SQDMFSet$.
    By the notion of inflationary fixpoint, $\mfElm[][\star]$ is a fixpoint of
    the composition of the two progress operators, \ie, $\mfElm[][\star] =
    \prgFun[+](\prgFun[\bot](\mfElm[][\star]))$, which are inflationary
    functions on $\SQDMFSet$, due to Items~\ref{lmm:prgbot(mfb)}
    and~\ref{lmm:prgbot(qdm)} of Lemma~\ref{lmm:prgbot} and
    Items~\ref{lmm:prgpls(mfb)} and~\ref{lmm:prgpls(qdm)} of
    Lemma~\ref{lmm:prgpls}.
    Therefore, it holds that $\mfElm[][\star] \sqsubseteq
    \prgFun[\bot](\mfElm[][\star]) \sqsubseteq
    \prgFun[+](\prgFun[\bot](\mfElm[][\star])) = \mfElm[][\star]$, which
    implies that $\prgFun[\bot](\mfElm[][\star]) = \mfElm[][\star]$ and, so,
    $\prgFun[+](\mfElm[][\star]) = \mfElm[][\star]$.
    As a consequence of Item~\ref{lmm:prgbot(fix)} of Lemma~\ref{lmm:prgbot}
    and Item~\ref{lmm:prgpls(fix)} of Lemma~\ref{lmm:prgpls}, it holds that
    $\mfElm[][\star]$ is a progress measure.
    Hence, $\denot{\mfElm[][\star]}[\OppSym] \subseteq \WinSet[\OppSym]$
    follows from Theorem~\ref{thm:prgmsr}.
    By Theorem~\ref{thm:ter}, it holds that $\mfElm[][\star] \in \SQDMFSet$,
    which implies that $\denot{\mfElm[][\star]}[\PlrSym] \subseteq
    \WinSet[\PlrSym]$, due to Definition~\ref{def:qsidommsr}.
    Hence, the thesis follows, since $\denot{\mfElm[][\star]}[\PlrSym]$ and
    $\denot{\mfElm[][\star]}[\OppSym]$ partition the set of positions.
  \end{proof}

  \end{subsection}

\end{section}
\vspace*{-.8em}
% End of file SectionIII.tex

%%****************************************************************************%%
%%                                                                            %%
%% A Delayed Promotion Policy for Parity Games                                %%
%%                                                                            %%
%% SectionIV.tex                                                             %%
%%                                                                            %%
%% Revision 0                                                                 %%
%%                                                                            %%
%% Copyright (C) 2016, Massimo Benerecetti, Daniele Dell'Erba, and            %%
%%                     Fabio Mogavero.                                        %%
%% All rights reserved.                                                       %%
%%                                                                            %%
%%****************************************************************************%%

% Begin of file SectionIV.tex

\vspace{-0.75em}
\begin{section}{Experimental Evaluation}
  \label{sec:expevl}

  The algorithm proposed in the paper has been implemented in OINK~\cite{Dij18},
  a C++ framework supporting different parity game solvers\footnote{Experiments
  were carried out on a 64-bit 1.6GHz \textsc{Intel\textregistered} quad-core
  machine, with i5-8250U processor and 8GB of RAM, running
  \textsc{Ubuntu}~18.04.5 with \textsc{Linux} kernel version~3.28.2.
  OINK was compiled with gcc version 7.4.} and providing tools to
  compare their performance on various worst-case families.
  The solvers considered in the experiments include the original priority
  promotion algorithm \emph{PP}~\cite{BDM16} and the progress measure version
  presented in this paper \emph{QDPM}, the optimised version~\cite{LDT14} of the
  Recursive algorithm \emph{Rec}~\cite{Zie98}, the optimised version of the
  Small Progress Measure algorithm \emph{SPM}~\cite{Jur00} and its
  quasi-polynomial version \emph{SSPM}~\cite{JL17}, the quasi-polynomial
  algorithm \emph{QPT}~\cite{FJSSW17}, the Tangle Learning algorithm
  \emph{TL}~\cite{Dij18a}, and the Distraction-based Fixpoint Iteration
  algorithm with justifications \emph{FPJ}~\cite{LBD20}.
  The benchmarks include worst-case games for the considered solvers and
  clustered random games generated with the PGSolver framework~\cite{FL09}.
  The latter are games that exhibit a complex structure \wrt the class of
  randomly generated games.
  Indeed, while most of the random games consist in a single \emph{strongly
  connected component (SCC)} and are easily solved by any attractor-based
  approach, clustered games rely, instead, on a tree-like structure with
  multiple SCCs.

  \begin{figure}[hbtp]
    \vspace{-0.50em}
    \begin{center}
      \footnotesize
      \scalebox{0.95}[0.95]{\tabper}
      %\scalebox{0.95}[0.95]{\tabperx}
    \end{center}
    \vspace{-1.00em}
    \caption{\label{tab:per} \small Biggest (index of the) instance of the
      worst-case families solved within 30s.
      If the 1000th instance is solved, its approximated solution time is
      reported in brackets.}
    \vspace{-1.00em}
  \end{figure}

  Table~\ref{tab:per} displays the results on the following worst-case
  families\footnote{The instances were generated by issuing the following OINK
  commands: tc+ n; counter\_qpt n; counter\_m n; counter\_dp n.
  The Robust Worst Case has been implemented according to~\cite{BDM20}.}: the
  family for TL ``\emph{Two Counters}''~\cite{Dij19}, the one for QPT algorithm
  ``\emph{QPT}''~\cite{FJSSW17}, the family for Zielonka's Optimised algorithm
  ``\emph{Gazda's wc}''~\cite{Gaz16}, the family for the Delayed Priority
  Promotion algorithm ``\emph{DP}''~\cite{BDM18a}, and the Robust Worst Case
  for Divide-et-Impera Algorithms ``\emph{Divide\&Impera}''~\cite{BDM20}.
  Each row reports the biggest instance each solver could solve within the time
  limit of 30 seconds.
  The ``Two Counters'' family proved to be very demanding for all the solvers,
  as none of them could solve the 108th instance within the time limit.
  On the contrary, the ``QPT'' family can easily be solved by all the solvers
  except QPT.
  The proposed solver QDPM performs extremely well on all the families, being
  able to solve the 1000th instance faster than the competitors, except for the
  ``Two Counters'', on which it is outperformed only by PP, and Gazda's
  family, where TL is slightly better.

  \begin{wrapfigure}[13]{r}[0em]{0.55\textwidth}
    \vspace{-2.75em}
    \begin{center}
      \figexp
    \end{center}
    \vspace{-1.50em}
    \caption{\label{fig:exp} \small Time on clustered random games with 2 moves
      per position.}
  \end{wrapfigure}
  Figure~\ref{fig:exp} compares the running times on $1300$ random clustered
  games of size ranging from $50$ to $5 \cdot 10^5$ positions and $2$ outgoing
  moves per position~\footnote{The instances were generated by issuing the
  following PGSolver command: clusteredrandomgame $n$ $n/10$ 2 2 5 3 7 3 7.}.
  We set the time-out at 120 seconds.
  Each point in the graph shows the average time over a cluster of $100$
  different games of the same size shown on a logarithmic scale.
  For a game of size $n$, we set the number of priorities to $k = n / 10$.
  The performance of the quasi-polynomial solvers (QPT and SSPM) reaches the
  timeout already for the smaller instances: for games with 250 positions they
  could solve less than the 20\% of the instances.
  Almost all of the quasi-dominion-based algorithms, namely TL, FPJ, and QDPM, instead,
  scale quite well.
  Their behaviour start to differentiate for games with at least $10^5$
  positions.
  On the biggest instances ($5 \cdot 10^5$), QDPM is the only algorithm to
  terminate within 2 minutes, with an average solution time of 76 seconds
  and only 14\% of timeouts.

\end{section}

% End of file SectionIV.tex

%%% Local Variables:
%%% ispell-local-dictionary: "british"
%%% mode: latex
%%% TeX-master: "Article"
%%% End:

  % \input{SectionV}

  % \input{SectionVI}

  % \input{SectionVII}

  % \input{SectionVIII}

%%****************************************************************************%%
%%                                                                            %%
%% A Delayed Promotion Policy for Parity Games                                %%
%%                                                                            %%
%% Discussion.tex                                                             %%
%%                                                                            %%
%% Revision 0                                                                 %%
%%                                                                            %%
%% Copyright (C) 2016, Massimo Benerecetti, Daniele Dell'Erba, and            %%
%%                     Fabio Mogavero.                                        %%
%% All rights reserved.                                                       %%
%%                                                                            %%
%%****************************************************************************%%

% Begin of file Discussion.tex

\vspace{-.75em}
\begin{section}{Discussion}

\vspace{-.4em}
We propose a revisited progress measures-based algorithm for parity games that
integrates progress measures and quasi-dominions.
This integration requires a novel notion of measure to encode the
additional information needed to identify quasi-dominions and a new update
policy that takes advantage of quasi-dominions and often allows to skip
intermediate measures and reach a progress measure much more quickly than the
classic progress measure algorithms.
This motivates the conjecture that the integration significantly accelerates the
convergence to a progress measure.
The experiments show that the proposed approach scales better than any known
algorithm on games with a complex structure, such as clustered random games and
the worst-case families.
In particular, the speed-up can be of several orders of magnitude
when compared to other algorithms based on progress measures.
We believe that this integration approach may also lead to practically efficient
quasi-polynomial algorithms based on succinct progress measures.

\end{section}

% End of file Discussion.tex

%%% Local Variables:
%%% ispell-local-dictionary: "british"
%%% mode: latex
%%% TeX-master: "Article"
%%% End:

  % \input{Conclusion}

  \newpage
  
%%****************************************************************************%%
%%                                                                            %%
%% Article Title                                                              %%
%%                                                                            %%
%% Acknowledgments.tex                                                        %%
%%                                                                            %%
%% Revision 0                                                                 %%
%%                                                                            %%
%% Copyright (C) 20xx, Fabio Mogavero.                                        %%
%% All rights reserved.                                                       %%
%%                                                                            %%
%%****************************************************************************%%

% Begin of file Acknowledgments.tex

\begin{section}*{Acknowledgments}

  F.~Mogavero acknowledges a partial support by GNCS 2020 project ``Ragionamento
  Strategico e Sintesi Automatica di Sistemi Multi-Agente''.
  This project has received funding from the European Union's Horizon 2020
  research and innovation programme under the Marie Sklodowska-Curie grant
  agreement No 101032464.
  The work was supported by EPSRC grant EP/P020909/1.

\end{section}

% End of file Acknowledgments.tex

  \bibliographystyle{eptcs}
  \bibliography{References}

  \newpage
  \appendix

%%****************************************************************************%%
%%                                                                            %%
%% Article Title                                                              %%
%%                                                                            %%
%% AppendixA.tex                                                               %%
%%                                                                            %%
%% Revision 0                                                                 %%
%%                                                                            %%
%% Copyright (C) 2020, Massimo Benerecetti, Daniele Dell'Erba, Marco Faella,  %%
%%                     and Fabio Mogavero.                                    %%
%% All rights reserved.                                                       %%
%%                                                                            %%
%%****************************************************************************%%

% Begin of file AppendixA.tex

\begin{section}{Appendix of Section~\ref{sec:prgmsr}}

  \prgmsrr*
  \begin{proof}
    Let us consider an arbitrary $\mfElm$-coherent $\OppSym$-strategy
    $\strElm[\OppSym] \in \StrSet[\OppSym]$.
    All measures $\mfElm(\posElm)$ of positions $\posElm \in
    \denot{\mfElm}[\OppSym] \cap \PosSet[\OppSym]$ are a progress for
    $\posElm$ \wrt the measures $\mfElm(\strElm[\OppSym](\posElm))$ of their
    adjacents $\strElm[\OppSym](\posElm)$, \ie, formally,
    $\mfElm(\strElm[\OppSym](\posElm)) + \posElm \leq \mfElm(\posElm)$.
    The existence of such a coherent strategy is ensured by the fact that
    $\mfElm$ is a progress measure.
    Indeed, by Condition~\ref{def:prgmsr(opp)} of Definition~\ref{def:prgmsr},
    there necessarily exists a adjacent $\wposElm[][\star] \in
    \MovRel(\posElm)$ of $\posElm$ such that $\mfElm(\wposElm[][\star]) +
    \posElm \leq \mfElm(\posElm)$.

    It can be shown that $\strElm[\OppSym]$ is a winning strategy for
    Player~$\OppSym$ from all the positions in $\denot{\mfElm}[\OppSym]$,
    which implies that $\denot{\mfElm}[\OppSym] \subseteq \WinSet[\OppSym]$.
    To do this, consider a $\PlrSym$-strategy $\strElm[\PlrSym] \in
    \StrSet[\PlrSym]$ and the associated play $\pthElm =
    \playFun((\strElm[\PlrSym], \strElm[\OppSym]), \posElm)$ starting at a
    position $\posElm \in \denot{\mfElm}[\OppSym]$.
    Assume by contradiction that $\pthElm$ is won by Player~$\PlrSym$.
    Since the game $\GamName$ is finite and the strategies are memoryless,
    $\pthElm$ must contain a finite simple cycle, and so a finite simple path,
    and the maximal priority seen infinitely often along it needs to be even.
    In other words, there exist two natural numbers $h \in \SetN$ and $k \in
    \SetN[+]$ such that $(\pthElm)_{h} = (\pthElm)_{h + k}$ and $\prtFun(\rho)
    \equiv_2 0$, where $\rho \defeq ((\pthElm)_{\geq h})_{< h + k}$ is the
    simple path named above.
    Moreover, one can choose the value of the index $h$ in such a way that
    $\prtFun((\pthElm)_{h}) \geq \prtFun((\pthElm)_{i})$, for all $i \in
    \SetN$ with $h < i < h + k$.
    Recall that $((\pthElm)_{i}, (\pthElm)_{i + 1}) \in \MovRel$, for all
    indexes $i \in \SetN$.
    Thanks to the two conditions of Definition~\ref{def:prgmsr} and the notion
    of play, it holds that
    \[
      \mfElm((\pthElm)_{i + 1}) + (\pthElm)_{i} \leq \mfElm((\pthElm)_{i}).
    \]
    By Item~\ref{def:msrspc(trn:mon)} of Definition~\ref{def:msrspc}, for
    all indexes $h \leq i < h + k$, it immediately follows that
    \[
      (\mfElm((\pthElm)_{i + 1}) + (\pthElm)_{i}) \rst_{(\pthElm)_{h}} \leq
      \mfElm((\pthElm)_{i}) \rst_{(\pthElm)_{h}}.
      \tag{$\ast$}
    \]

    At this point, it is important to observe that $\mfElm$ cannot associate
    the maximum value $\top$ with any position in the play, in particular when
    restricted to $(\pthElm)_{h}$, which means that the entire path is
    contained into $\denot{\mfElm}[\OppSym]$.
    The first element of the play trivially satisfies such a constraint, as
    $\mfElm((\pthElm)_{0}) = \mfElm(\posElm) \neq \top$, since $\posElm \in
    \denot{\mfElm}[\OppSym]$.
    Hence, $\mfElm((\pthElm)_{0}) \rst_{(\pthElm)_{h}} \neq \top$, by
    Item~\ref{def:msrspc(trn:top)} of Definition~\ref{def:msrspc}.
    Now, suppose by contradiction that $\mfElm((\pthElm)_{i + 1})
    \rst_{(\pthElm)_{h}} = \top$, for some index $i \in \SetN[+]$.
    By Inequality~($\ast$) and Item~\ref{def:prgmsrspc(geq)} of
    Definition~\ref{def:prgmsrspc}, it follows that
    \[
      \top = \mfElm((\pthElm)_{i + 1}) \rst_{(\pthElm)_{h}} \leq
      (\mfElm((\pthElm)_{i + 1}) + (\pthElm)_{i}) \rst_{(\pthElm)_{h}} \leq
      \mfElm((\pthElm)_{i}) \rst_{(\pthElm)_{h}},
    \]
    being $\prtFun((\pthElm)_{h})$ the maximal priority along the path $\rho$
    that is also even.
    Due to the maximality of $\top$ ensured by Item~\ref{def:msrspc(ord)} of
    Definition~\ref{def:msrspc}, it obviously follows that
    $\mfElm((\pthElm)_{i}) \rst_{(\pthElm)_{h}} = \top$ as well.
    Therefore, by iterating this process until index $0$ is reached, one would
    obtain $\mfElm((\pthElm)_{0}) \rst_{(\pthElm)_{h}} = \top$, which is
    impossible, as previously observed.

    To complete the proof, we can exploit the properties of the progress
    measure space.
    Recall that $\mfElm((\pthElm)_{h + 1}) \rst_{(\pthElm)_{h}} < \top$,
    thanks to the above observation.
    Thus, by Item~\ref{def:prgmsrspc(grt)} of
    Definition~\ref{def:prgmsrspc} and the fact that $\prtFun((\pthElm)_{h})$
    is an even priority, one can derive that
    \[
      \mfElm((\pthElm)_{h + 1}) \rst_{(\pthElm)_{h}} < (\mfElm((\pthElm)_{h +
      1}) + (\pthElm)_{h}) \rst_{(\pthElm)_{h}}.
      \tag{$<$}
    \]
    Moreover, by applying again Item~\ref{def:prgmsrspc(geq)} of
    Definition~\ref{def:prgmsrspc} and due to the fact that
    $\prtFun((\pthElm)_{h})$ is the maximal priority in the cycle, for all
    indexes $h < i < h + k$, it holds that
    \[
      \mfElm((\pthElm)_{i + 1}) \rst_{(\pthElm)_{h}} \leq (\mfElm((\pthElm)_{i
      + 1}) + (\pthElm)_{i}) \rst_{(\pthElm)_{h}}.
      \tag{$\leq$}
    \]
    As a consequence of the transitivity of the order relation between
    measures, by putting together Inequalities~($\ast$), ($<$), and~($\leq$),
    one would therefore obtain
    \[
      \mfElm((\pthElm)_{h + k}) \rst_{(\pthElm)_{h}} < \mfElm((\pthElm)_{h})
      \rst_{(\pthElm)_{h}}.
    \]
    However, $(\pthElm)_{h + k} = (\pthElm)_{h}$, leading to
    $\mfElm((\pthElm)_{h}) \rst_{(\pthElm)_{h}} < \mfElm((\pthElm)_{h})
    \rst_{(\pthElm)_{h}}$, which is clearly impossible, being $<$ an
    irreflexive relation.
  \end{proof}

\end{section}

% End of file AppendixA.tex

%%****************************************************************************%%
%%                                                                            %%
%% Article Title                                                              %%
%%                                                                            %%
%% AppendixB.tex                                                               %%
%%                                                                            %%
%% Revision 0                                                                 %%
%%                                                                            %%
%% Copyright (C) 2020, Massimo Benerecetti, Daniele Dell'Erba, Marco Faella,  %%
%%                     and Fabio Mogavero.                                    %%
%% All rights reserved.                                                       %%
%%                                                                            %%
%%****************************************************************************%%

% Begin of file AppendixB.tex

\begin{section}{Appendix of Section~\ref{sec:qsidommsr}}

  \regmsrr*
  \begin{proof}
    \Mutatismutandis, the proof proceeds similarly to the one previously
    presented for Theorem~\ref{thm:prgmsr}.

    First of all, let us consider an arbitrary $\mfElm$-coherent
    $\PlrSym$-strategy $\strElm[\PlrSym] \in \StrSet[\PlrSym]$.
    All measures $\mfElm(\posElm)$ of positions $\posElm \in
    \denot{\mfElm}[\PlrSym] \cap \PosSet[\PlrSym]$ are a regress for $\posElm$
    \wrt the measures $\mfElm(\strElm[\PlrSym](\posElm))$ of their adjacents
    $\strElm[\PlrSym](\posElm)$, \ie, formally, $\mfElm(\posElm) \leq
    \mfElm(\strElm[\PlrSym](\posElm)) + \posElm$.
    The existence of such a coherent strategy is ensured by the fact that
    $\mfElm$ is a regress measure.
    Indeed, by Condition~\ref{def:regmsr(plr)} of Definition~\ref{def:regmsr},
    there necessarily exists a adjacent $\wposElm[][\star] \in
    \MovRel(\posElm)$ of $\posElm$ such that $\mfElm(\posElm) \leq
    \mfElm(\wposElm[][\star]) + \posElm$.

    To prove that $\QSet \defeq \denot{\mfElm}[+] \setminus
    \denot{\mfElm}[\PlrSym]$ is a weak quasi $\PlrSym$-dominion with
    $\strElm[\PlrSym] \downarrow \QSet \in \StrSet[\PlrSym](\QSet)$ as
    $\PlrSym$-witness, consider a $\OppSym$-strategy $\strElm[\OppSym] \in
    \StrSet[\OppSym](\QSet)$ such that the associated play $\pthElm =
    \playFun((\strElm[\PlrSym] \downarrow \QSet, \strElm[\OppSym]), \posElm)$
    starting at a position $\posElm \in \QSet$ is infinite.
    Now, one need to show that $\prtFun(\pthElm)$ is even.
    Assume by contradiction that this condition on the parity of the priority
    does not hold.
    Since the game $\GamName$ is finite and the strategies are memoryless,
    $\pthElm$ must contain a finite simple cycle, and so a finite simple path,
    and the maximal priority seen infinitely often along it needs to be odd.
    In other words, there exist two natural numbers $h \in \SetN$ and $k \in
    \SetN[+]$ such that $(\pthElm)_{h} = (\pthElm)_{h + k}$ and $\prtFun(\rho)
    \equiv_2 1$, where $\rho \defeq ((\pthElm)_{\geq h})_{< h + k}$ is the
    simple path named above.
    Moreover, one can choose the value of the index $h$ in such a way that
    $\prtFun((\pthElm)_{h}) \geq \prtFun((\pthElm)_{i})$, for all $i \in
    \SetN$ with $h < i < h + k$.
    Recall that $((\pthElm)_{i}, (\pthElm)_{i + 1}) \in \MovRel$, for all
    indexes $i \in \SetN$.
    Thanks to the two conditions of Definition~\ref{def:regmsr} and the notion
    of play, it holds that
    \[
      \mfElm((\pthElm)_{i}) \leq \mfElm((\pthElm)_{i + 1}) + (\pthElm)_{i}.
    \]
    By Item~\ref{def:msrspc(trn:mon)} of Definition~\ref{def:msrspc}, for
    all indexes $h \leq i < h + k$, it immediately follows that
    \[
      \mfElm((\pthElm)_{i}) \rst_{(\pthElm)_{h}} \leq (\mfElm((\pthElm)_{i +
      1}) + (\pthElm)_{i}) \rst_{(\pthElm)_{h}}.
      \tag{$\ast$}
    \]

    At this point, we can exploit the properties of the regress measure space.
    By construction, $\pthElm \in \PthSet(\QSet)$, where we recall that $\QSet
    \defeq \denot{\mfElm}[+] \setminus \denot{\mfElm}[\PlrSym]$.
    Thus, observe that both $\mfElm((\pthElm)_{i}) \rst_{(\pthElm)_{i}} \neq
    \bot$ and $\mfElm((\pthElm)_{i}) \rst_{(\pthElm)_{i}} \neq \top$ hold, for
    all $i \in \SetN$.
    This implies that $\bot < \mfElm((\pthElm)_{i}) \rst_{(\pthElm)_{i}}$ and
    $\mfElm((\pthElm)_{i}) \rst_{(\pthElm)_{h}} < \top$, thanks to
    Item~\ref{def:msrspc(trn:top)} of Definition~\ref{def:msrspc}.
    By Item~\ref{def:regmsrspc(leq)} of Definition~\ref{def:regmsrspc} and the
    fact that $\prtFun((\pthElm)_{h})$ is an odd priority, one can derive that
    \[
      (\mfElm((\pthElm)_{h + 1}) + (\pthElm)_{h}) \rst_{(\pthElm)_{h}} \leq
      \mfElm((\pthElm)_{h + 1}) \rst_{(\pthElm)_{h}}.
    \]
    Hence, by Inequality~($\ast$) and the above observations, it follows that
    \[
      \bot < \mfElm((\pthElm)_{h}) \rst_{(\pthElm)_{h}} \leq
      (\mfElm((\pthElm)_{h + 1}) + (\pthElm)_{h}) \rst_{(\pthElm)_{h}} \leq
      \mfElm((\pthElm)_{h + 1}) \rst_{(\pthElm)_{h}} < \top,
    \]
    which in turn implies
    \[
      \bot < \mfElm((\pthElm)_{h + 1}) \rst_{(\pthElm)_{h}} < \top.
    \]
    Now, by Item~\ref{def:regmsrspc(les)} of Definition~\ref{def:regmsrspc},
    one can obtain that
    \[
      (\mfElm((\pthElm)_{h + 1}) + (\pthElm)_{h}) \rst_{(\pthElm)_{h}} <
      \mfElm((\pthElm)_{h + 1}) \rst_{(\pthElm)_{h}}.
      \tag{$<$}
    \]
    Moreover, by applying again Item~\ref{def:regmsrspc(leq)} of
    Definition~\ref{def:regmsrspc} and due to the fact that
    $\prtFun((\pthElm)_{h})$ is the maximal priority in the cycle, for all
    indexes $h < i < h + k$, it holds that
    \[
      (\mfElm((\pthElm)_{i + 1}) + (\pthElm)_{i}) \rst_{(\pthElm)_{h}} \leq
      \mfElm((\pthElm)_{i + 1}) \rst_{(\pthElm)_{h}}.
      \tag{$\leq$}
    \]
    As a consequence of the transitivity of the order relation between
    measures, by putting together Inequalities~($\ast$), ($<$), and~($\leq$),
    one would derive
    \[
      \mfElm((\pthElm)_{h}) \rst_{(\pthElm)_{h}} < \mfElm((\pthElm)_{h + k})
      \rst_{(\pthElm)_{h}}.
    \]
    However, $(\pthElm)_{h + k} = (\pthElm)_{h}$, leading to
    $\mfElm((\pthElm)_{h}) \rst_{(\pthElm)_{h}} < \mfElm((\pthElm)_{h})
    \rst_{(\pthElm)_{h}}$, which is obviously impossible, being $<$ an
    irreflexive relation.
  \end{proof}

  \qsidommsrir*
  \begin{proof}
    Let $\QSet \defeq \denot{\mfElm}[+]$ and $\strElm[\PlrSym] \in
    \StrSet[\PlrSym]$ be an arbitrary $\mfElm$-coherent $\PlrSym$-strategy
    that is winning on $\denot{\mfElm}[\PlrSym]$.
    To prove that $\QSet$ is a weak quasi $\PlrSym$-dominion with
    $\strElm[\PlrSym] \downarrow \QSet \in \StrSet[\PlrSym](\QSet)$ as
    $\PlrSym$-witness, consider any $\OppSym$-strategy $\strElm[\OppSym] \in
    \StrSet[\OppSym](\QSet)$ such that the play $\pthElm =
    \playFun((\strElm[\PlrSym] \downarrow \QSet, \strElm[\OppSym]), \posElm)$
    from position $\posElm \in \QSet$ is infinite.
    We need to show that $\prtFun(\pthElm)$ is even.
    The following two different cases may arise, where $\HSet \defeq
    \denot{\mfElm}[+] \setminus \denot{\mfElm}[\PlrSym] \subseteq \QSet$:
    \begin{itemize}
      \item
        {[$\pthElm \in \PthSet(\HSet)$].}
        By Theorem~\ref{thm:regmsr}, $\HSet$ is a weak quasi
        $\PlrSym$-dominion with $\strElm[\PlrSym] \downarrow \HSet =
        (\strElm[\PlrSym] \downarrow \QSet) \downarrow \HSet$ as
        $\PlrSym$-witness.
        Moreover, $\pthElm$ is a $(\strElm[\PlrSym] \downarrow \HSet,
        \posElm)$-play in $\HSet$.
        Hence, the thesis immediately follows from the definition of weak
        quasi $\PlrSym$-dominion.
      \item
        {[$\pthElm \not\in \PthSet(\HSet)$].}
        Since $\pthElm \not\in \PthSet(\HSet)$, there clearly exists and index
        $i \in \SetN$ such that $(\pthElm)_{\geq i} \in
        \PthSet(\denot{\mfElm}[\PlrSym])$.
        This follows from the fact that $\denot{\mfElm}[\PlrSym]$ is a
        $\PlrSym$-dominion with $\strElm[\PlrSym]$ as a $\PlrSym$-winning
        strategy, since every play compatible with $\strElm[\PlrSym]$ gets
        necessarily trapped in $\denot{\mfElm}[\PlrSym]$.
        Moreover, $(\pthElm)_{\geq i}$ is a $(\strElm[\PlrSym] \downarrow
        \denot{\mfElm}[\PlrSym], (\pthElm)_{i})$-play in
        $\denot{\mfElm}[\PlrSym]$.
        Hence, we immediately obtain the thesis %immediately follows
        from the definition of $\PlrSym$-dominion, since
        $\prtFun(\pthElm) = \prtFun((\pthElm)_{\geq i})$.
        \qed
    \end{itemize}
    \renewcommand{\qed}{}
  \end{proof}

  \qsidommsriir*
  \begin{proof}
    By Theorem~\ref{thm:qsidommsri}, $\QSet \defeq \denot{\mfElm}[+]$ is a
    weak quasi $\PlrSym$-dominion.
    Therefore, to prove that it is a quasi $\PlrSym$-dominion with
    $\strElm[\PlrSym] \downarrow \QSet \in \StrSet[\PlrSym](\QSet)$ as
    $\PlrSym$-witness, for some arbitrary $\mfElm$-coherent $\PlrSym$-strategy
    $\strElm[\PlrSym] \in \StrSet[\PlrSym]$ that is winning on
    $\denot{\mfElm}[\PlrSym]$, consider a $\OppSym$-strategy $\strElm[\OppSym]
    \in \StrSet[\OppSym](\QSet)$ for which the associated play $\pthElm =
    \playFun((\strElm[\PlrSym] \downarrow \QSet, \strElm[\OppSym]), \posElm)$
    starting at a position $\posElm \in \QSet$ is finite.
    Now, one need to show that $\prtFun(\pthElm)$ is even.

    Suppose by contradiction that $\prtFun(\pthElm)$ is odd.
    Then, there exists an index $h \in \SetN$ with $0 \leq h \leq n \defeq
    \card{\pthElm} - 1$ such  that $\prtFun((\pthElm)_{h}) \equiv_{2} 1$
    and $\prtFun((\pthElm)_{h}) \geq \prtFun((\pthElm)_{i})$, for all $i
    \in \SetN$ with $h \leq i \leq n$.
    Thanks to Definition~\ref{def:regmsr}, Items~\ref{def:msrspc(trn:top)}
    and~\ref{def:msrspc(trn:mon)} of Definition~\ref{def:msrspc},
    Item~\ref{def:regmsrspc(leq)} of Definition~\ref{def:regmsrspc}, and the
    facts that $\mfElm((\pthElm)_{h}) \rst_{(\pthElm)_{h}} \neq \bot$ and
    $\prtFun((\pthElm)_{h})$ is the maximal priority along the finite path
    $(\pthElm)_{\geq h}$ that is also odd, one can obtain the following
    inequality, by applying the same inductive reasoning employed in the
    second half of the proof of Theorem~\ref{thm:regmsr}:
    \[
      \bot < \mfElm((\pthElm)_{h}) \rst_{(\pthElm)_{h}} \leq
      \mfElm((\pthElm)_{n}) \rst_{(\pthElm)_{h}}.
      \tag{$\ast$}
    \]
    Since $\pthElm$ if finite, either one of the two strategies
    $\strElm[\PlrSym]$ and $\strElm[\OppSym]$ have to terminate in
    $(\pthElm)_{n}$, \ie, we necessarily have that $\wposElm[][\star]
    \in \denot{\mfElm}[\bot]$, where $\wposElm[][\star] \defeq
    \strElm[\PlrSym]((\pthElm)_{n})$, if $(\pthElm)_{n} \in \PosSet[\PlrSym]$,
    and  $\wposElm[][\star] \defeq \strElm[\OppSym]((\pthElm)_{n})$,
    otherwise.
    Hence, by Proposition~\ref{prp:botden}, it holds that
    $\mfElm(\wposElm[][\star]) = \bot$, due to the fact that $\mfElm$ is a
    simple measure.
    Moreover, again by Definition~\ref{def:regmsr}, it holds that
    \[
      \mfElm((\pthElm)_{n}) \leq \mfElm(\wposElm[][\star]) +
      (\pthElm)_{n} = \bot + (\pthElm)_{n},
    \]
    from which, by Item~\ref{def:msrspc(trn:mon)} of
    Definition~\ref{def:msrspc}, it follows that
    \[
      \mfElm((\pthElm)_{n}) \rst_{(\pthElm)_{h}} \leq (\bot +
      (\pthElm)_{n}) \rst_{(\pthElm)_{h}}.
      \tag{$\diamond$}
    \]
    At this point, by Item~\ref{def:regmsrspc(leq)} of
    Definition~\ref{def:regmsrspc} and the equality $\bot \rst_{\posElm} =
    \bot$, for the odd-priority position $\posElm$, one can obtain that
    \[
      (\bot + (\pthElm)_{n}) \rst_{(\pthElm)_{h}} \leq \bot
      \rst_{(\pthElm)_{h}} = \bot.
      \tag{$\bot$}
    \]
    Thus, as an immediate consequence of Inequalities~($\ast$),
    ($\diamond$), and~($\bot$), one would derive $\bot < \bot$, which is
    obviously impossible, being $<$ an irreflexive relation.
  \end{proof}

\end{section}

% End of file AppendixB.tex

%%****************************************************************************%%
%%                                                                            %%
%% Article Title                                                              %%
%%                                                                            %%
%% AppendixC.tex                                                               %%
%%                                                                            %%
%% Revision 0                                                                 %%
%%                                                                            %%
%% Copyright (C) 2020, Massimo Benerecetti, Daniele Dell'Erba, Marco Faella,  %%
%%                     and Fabio Mogavero.                                    %%
%% All rights reserved.                                                       %%
%%                                                                            %%
%%****************************************************************************%%

% Begin of file AppendixC.tex

\begin{section}{Appendix of Section~\ref{sec:effalg}}

  \prgbotr*
  \begin{proof}
    We analyze the four properties separately, where we recall that
    $\prgFun[\bot](\mfElm) = \liftFun(\mfElm, \denot{\mfElm}[\bot], \PosSet)$.
    \begin{itemize}
      \item
        \textbf{[\ref{lmm:prgbot(mfb)}].}
        Let $\posElm$ be a position such that $\mfElm[][\star](\posElm) \neq
        \mfElm(\posElm)$, where $\mfElm[][\star] \defeq
        \prgFun[\bot](\mfElm)$.
        By definition of the lift operator, it holds that $\posElm \in
        \denot{\mfElm}[\bot]$.
        Obviously, $\mfElm(\posElm) = \bot$, since $\mfElm \in \MFSet[\bot]$.
        Thus, by Item~\ref{def:msrspc(ord)} of Definition~\ref{def:msrspc}, it
        follows that $\mfElm(\posElm) = \bot < \mfElm[][\star](\posElm)$,
        being $\bot$ the minimal measure.
        Hence, $\mfElm \sqsubseteq \mfElm[][\star]$, due to the arbitrary
        choice of $\posElm$, which means that $\prgFun[\bot]$ is inflationary
        on $\MFSet[\bot]$, as required by the lemma statement.
        In addition, it holds that $\mfElm[][\star] \in \MFSet[\bot]$.
        Indeed, again by definition of the lift operator, there exists an
        adjacent $\wposElm \in \MovRel(\posElm)$ of $\posElm$ such that
        $\mfElm[][\star](\posElm) = \mfElm(\wposElm) + \posElm$.
        If $\posElm$ has even priority, then
        either $\mfElm(\wposElm) \rst_{\posElm} = \top$ and, so,
        $\mfElm[][\star](\posElm) \rst_{\posElm} = \top$, by
        Item~\ref{def:msrspc(ord)} of Definition~\ref{def:msrspc} and
        Item~\ref{def:prgmsrspc(geq)} of Definition~\ref{def:prgmsrspc}, or
        $\mfElm(\wposElm) \rst_{\posElm} < \top$ and $\mfElm[][\star](\posElm)
        \rst_{\posElm} = (\mfElm(\wposElm) + \posElm) \rst_{\posElm} >
        \mfElm(\wposElm) \rst_{\posElm}$, by Item~\ref{def:prgmsrspc(grt)} of
        Definition~\ref{def:prgmsrspc}.
        If $\posElm$ has odd priority, instead, by Item~\ref{prp:naimsrstr(str)}
        of Proposition~\ref{prp:naimsrstr}, it holds that if $\mfElm(\wposElm) +
        \posElm \neq \bot$ then $(\mfElm(\wposElm) + \posElm) \rst_{\posElm}
        \neq \bot$, from which it follows that $\mfElm[][\star](\posElm)
        \rst_{\posElm} = (\mfElm(\wposElm) + \posElm) \rst_{\posElm} \neq
        \bot$, since $\mfElm(\wposElm) + \posElm = \mfElm[][\star](\posElm)
        \neq \mfElm(\posElm) = \bot$.
        Thus, in all cases, $\mfElm[][\star](\posElm) \rst_{\posElm} \neq
        \bot$, \ie, $\posElm \not\in \denot{\mfElm[][\star]}[\bot]$, which
        vacuously satisfies the definitional requirement of $\MFSet[\bot]$.
      \item
        \textbf{[\ref{lmm:prgbot(qdm)}].}
        Let $\mfElm$ be a \qdmf in $\MFSet[\bot]$.
        Thanks to the above item, it is enough to prove that $\mfElm[][\star]
        \defeq \prgFun[\bot](\mfElm)$ is a \qdmf as well.
        To do this, one first needs to show that it is a regress measure.
        Consider an arbitrary position $\posElm \in \denot{\mfElm[][\star]}[+]
        \setminus \denot{\mfElm[][\star]}[\PlrSym]$.
        Then, two cases may arise.
        \begin{itemize}
          \item
            \textbf{[$\posElm \in \denot{\mfElm}[+]$].}
            By definition of the lift operator, it holds that
            $\mfElm[][\star](\posElm) =  \mfElm(\posElm)$, so, $\posElm \in
            \denot{\mfElm}[+] \setminus  \denot{\mfElm}[\PlrSym]$.
            Thus, $\posElm$ satisfies both conditions of
            Definition~\ref{def:regmsr} \wrt $\mfElm$.
            Therefore, it is quite immediate to prove that the same holds \wrt
            $\mfElm[][\star]$ too, by exploiting
            Item~\ref{def:msrspc(str:mon)} of Definition~\ref{def:msrspc},
            since $\mfElm \sqsubseteq \mfElm[][\star]$, as proved in previous
            item.
          \item
            \textbf{[$\posElm \not\in \denot{\mfElm}[+]$].}
            In this case, one needs to analyze the following two subcases.
            If $\posElm \in \PosSet[\PlrSym]$, it holds that
            $\mfElm[][\star](\posElm) = \max{\set{ \mfElm(\wposElm) + \posElm
            }{ \wposElm \in \MovRel(\posElm)}} \leq \max{\set{
            \mfElm[][\star](\wposElm) + \posElm }{ \wposElm \in
            \MovRel(\posElm)}} = \mfElm[][\star](\wposElm) + \posElm$, for
            some adjacent $\wposElm \in \MovRel(\posElm)$ of $\posElm$.
            Thus, Condition~\ref{def:regmsr(plr)} of
            Definition~\ref{def:regmsr} is satisfied.
            If $\posElm \in\PosSet[\OppSym]$, instead, it holds that
            $\mfElm[][\star](\posElm) = \min{\set{ \mfElm(\wposElm) + \posElm
            }{ \wposElm \in \MovRel(\posElm)}} \leq \min{\set{
            \mfElm[][\star](\wposElm) + \posElm }{ \wposElm \in
            \MovRel(\posElm)}} \leq \mfElm[][\star](\wposElm) + \posElm $, for
            all adjacents $\wposElm \in \MovRel(\posElm)$ of $\posElm$.
            Hence, Condition~\ref{def:regmsr(opp)} of
            Definition~\ref{def:regmsr} is satisfied as well.
            Observe that, to prove both conditions, we applied again
            Item~\ref{def:msrspc(str:mon)} of Definition~\ref{def:msrspc} and
            the fact that $\mfElm \sqsubseteq \mfElm[][\star]$.
        \end{itemize}
        \hspace{1em}
        At this point, it only remains to prove that
        $\denot{\mfElm[][\star]}[\PlrSym]$ is a $\PlrSym$-dominion.
        By hypothesis, it is known that $\denot{\mfElm}[\PlrSym]$ is a
        $\PlrSym$-dominion, being $\mfElm$ a \qdmf.
        Therefore, let us consider a position $\posElm \in
        \denot{\mfElm[][\star]}[\PlrSym] \setminus \denot{\mfElm}[\PlrSym]$.
        We can show that, again by definition of the lift operator, there
        exists an adjacent $\wposElm \in \MovRel(\posElm)$ of $\posElm$ such
        that $\wposElm \in \denot{\mfElm}[\PlrSym]$,
        if $\posElm \in \PosSet[\PlrSym]$, and all adjacents $\wposElm \in
        \MovRel(\posElm)$ of $\posElm$ satisfy $\wposElm \in
        \denot{\mfElm}[\PlrSym]$, otherwise.
        Indeed, if $\posElm \in \PosSet[\PlrSym]$, there exists an adjacent
        $\wposElm \in \MovRel(\posElm)$ of $\posElm$ such that
        $\mfElm(\wposElm) + \posElm = \mfElm[][\star](\posElm) = \top$.
        This implies that $\mfElm(\wposElm) = \top$, due to
        Item~\ref{def:msrspc(str:top)} of Definition~\ref{def:msrspc}.
        Hence, $\wposElm \in \denot{\mfElm}[\PlrSym]$.
        Similarly, if $\posElm \in \PosSet[\OppSym]$, all adjacents $\wposElm
        \in \MovRel(\posElm)$ of $\posElm$ satisfy the equality
        $\mfElm(\wposElm) + \posElm = \mfElm[][\star](\posElm) = \top$.
        Thus, again by Item~\ref{def:msrspc(str:top)} of
        Definition~\ref{def:msrspc}, it holds that $\mfElm(\wposElm) = \top$,
        which means that $\wposElm \in \denot{\mfElm}[\PlrSym]$.
        As an immediate consequence, every play starting at $\posElm$ and
        compatible with the $\PlrSym$-winning strategy on
        $\denot{\mfElm}[\PlrSym]$, suitably extended to
        $\denot{\mfElm[][\star]}[\PlrSym]$, is won by Player~$\PlrSym$.
        Therefore, $\denot{\mfElm[][\star]}[\PlrSym]$ is necessarily a
        $\PlrSym$-dominion, as required by the definition of \qdmf.
      \item
        \textbf{[\ref{lmm:prgbot(smf)}].}
        Let $\mfElm$ be a \smf and $\mfElm[][\star] \defeq
        \prgFun[\bot](\mfElm)$ the result of the $\prgFun[\bot]$ operator.
        One needs to prove that the latter is a \smf too.
        To do this, let us focus on a position $\posElm$ such that
        $\mfElm[][\star](\posElm) \neq \mfElm(\posElm)$.
        If $\mfElm[][\star](\posElm) = \top$, there is nothing more to show,
        as $\top \in \SMsrSet(\posElm, \denot{\mfElm[][\star]}[+])$, as
        required by Definition~\ref{def:simmsrfun}.
        Therefore, assume $\mfElm[][\star](\posElm) \neq \top$.
        By definition of the lift operator, there exists an adjacent $\wposElm
        \in \MovRel(\posElm)$ of $\posElm$ such that $\mfElm[][\star](\posElm)
        = \mfElm(\wposElm) + \posElm$.
        Now, by Item~\ref{def:msrspc(str:top)} of Definition~\ref{def:msrspc},
        it follows that $\mfElm(\wposElm) \neq \top$.
        Thus, thanks to the fact that $\mfElm$ is a \smf, it holds that
        $\mfElm(\wposElm) \in \SMsrSet(\wposElm, \denot{\mfElm}[+])$, which
        means that there exists a simple path $\pthElm \in \SPthSet(\wposElm,
        \denot{\mfElm}[+])$ such that $\mfElm(\wposElm) = \msrFun(\pthElm)$.
        Obviously, $\mfElm[][\star](\posElm) = \mfElm(\wposElm) + \posElm =
        \msrFun(\pthElm) + \posElm = \msrFun(\posElm \cdot \pthElm)$.
        Moreover, $\posElm \cdot \pthElm$ is a simple path passing through
        positions in $\{ \posElm \} \cup \denot{\mfElm}[+]$, \ie, $\posElm
        \cdot \pthElm \in \SPthSet(\posElm, \{ \posElm \} \cup
        \denot{\mfElm}[+])$, since $\posElm \not\in \denot{\mfElm}[+]$.
        As shown at the end of the proof of the first item of this lemma,
        $\mfElm[][\star](\posElm) \rst_{\posElm} \neq \bot$, so, $\posElm
        \in \denot{\mfElm[][\star]}[+]$.
        Thus, as an immediate consequence, one obtains that $\posElm \cdot
        \pthElm \in \SPthSet(\posElm, \denot{\mfElm[][\star]}[+])$, being $\{
        \posElm \} \cup \denot{\mfElm}[+] \subseteq
        \denot{\mfElm[][\star]}[+]$, which implies that
        $\mfElm[][\star](\posElm) \in \SMsrSet(\posElm,
        \denot{\mfElm[][\star]}[+])$.
        Hence, $\mfElm[][\star]$ is a \smf.
      \item
        \textbf{[\ref{lmm:prgbot(fix)}].}
        Let $\mfElm$ be a fixpoint of $\prgFun[\bot]$, \ie, $\mfElm =
        \liftFun(\mfElm, \denot{\mfElm}[\bot], \PosSet)$, and $\posElm \in
        \denot{\mfElm}[\bot]$ an arbitrary position.
        If $\posElm \in \PosSet[\PlrSym]$, it holds that $\mfElm(\wposElm) +
        \posElm \leq \max{\set{ \mfElm(\wposElm) + \posElm }{ \wposElm \in
        \MovRel(\posElm)}} = \mfElm(\posElm)$, for all adjacents $\wposElm \in
        \MovRel(\posElm)$ of $\posElm$.
        Thus, Condition~\ref{def:prgmsr(plr)} of Definition~\ref{def:prgmsr}
        is satisfied on $\denot{\mfElm}[\bot]$.
        If $\posElm \in\PosSet[\OppSym]$, instead, it holds that
        $\mfElm(\wposElm) + \posElm = \min{\set{ \mfElm(\wposElm) + \posElm }{
        \wposElm \in \MovRel(\posElm)}} = \mfElm(\posElm)$, for some adjacent
        $\wposElm \in \MovRel(\posElm)$ of $\posElm$.
        Hence, Condition~\ref{def:prgmsr(opp)} of Definition~\ref{def:prgmsr}
        is satisfied on $\denot{\mfElm}[\bot]$ as well.
        \qed
    \end{itemize}
    \renewcommand{\qed}{}
  \end{proof}

  \prgplsr*
  \begin{proof}
    Let us assume $\denot{\mfElm}[+] \neq \emptyset$, since there is nothing
    to prove, otherwise, being $\mfElm[][\star] \defeq \prgFun[+](\mfElm) =
    \mfElm$, and consider the three (potentially) infinite sequences
    $\QSet[0], \QSet[1], \ldots$, $\ESet[0], \ESet[1], \ldots$, and
    $\mfElm[0], \mfElm[1], \ldots$ generated by Algorithm~\ref{alg:prg}, which
    explicitly implements the progress operator $\prgFun[+]$.
    These sequences are defined as follows:
    \begin{inparaenum}[(i)]
      \item
        $\QSet[0] \defeq \denot{\mfElm}[+]$ and $\mfElm[0] \defeq \mfElm$;
      \item
        $\QSet[i + 1] \defeq \QSet[i] \setminus \ESet[i]$ and $\mfElm[i + 1] =
        \liftFun(\mfElm[i], \ESet[i], \dual{\QSet[i]})$, where $\ESet[i]
        \defeq \bepFun(\mfElm[i], \QSet[i]) \subseteq \escFun(\mfElm[i],
        \QSet[i])$, for all $i \in \SetN$.
    \end{inparaenum}
    Since $\card{\QSet[0]} < \infty$ and $\QSet[i + 1] \subseteq \QSet[i]$,
    there necessarily exists an index $k \in \SetN$ such that $\QSet[k + 1] =
    \QSet[k]$, $\mfElm[k + 1] = \mfElm[k]$, $\ESet[k] = \emptyset$, and
    $\ESet[j] \neq \emptyset$, for all $j < k$.
    Moreover, observe that $\mfElm[][\star] = {\mfElm[k]}[\QSet[k] \mapsto
    \top]$.
    At this point, we analyze the four properties separately.
    \begin{itemize}
      \item
        \textbf{[\ref{lmm:prgpls(mfb)}].}
        To prove that $\mfElm[][\star] \in \MFSet[\bot]$, whenever $\mfElm \in
        \MFSet[\bot]$, one can focus on those positions $\posElm$ that changed
        their measure from $\mfElm$ to $\mfElm[][\star]$, \ie, such that
        $\mfElm[][\star](\posElm) \neq \mfElm(\posElm)$.
        If $\posElm \in \denot{\mfElm[][\star]}[+]$, there is nothing to
        prove, as $\posElm$ vacuously satisfies the definitional requirement
        of $\MFSet[\bot]$.
        If $\posElm \not\in \denot{\mfElm[][\star]}[+]$, instead, it holds
        that $\mfElm[][\star](\posElm) \rst_{\posElm} = \bot$.
        Due to the fact that the position changed its measure, there is an
        index $i \in \numcc{0}{k}$ such that $\posElm \in \ESet[i]$ and, so,
        $\mfElm[][\star](\posElm) = \mfElm[i + 1](\posElm)$.
        Therefore, by definition of the lift operator, there exists an
        adjacent $\wposElm \in \MovRel(\posElm) \setminus \QSet[i]$ such that
        $\mfElm[i + 1](\posElm) = \mfElm[i](\wposElm) + \posElm$.
        At this point, we can observe that $\posElm$ has odd priority.
        Indeed, if by contradiction $\posElm$ has even priority, by
        Item~\ref{def:msrspc(ord)} of Definition~\ref{def:msrspc} and
        Item~\ref{def:prgmsrspc(geq)} of Definition~\ref{def:prgmsrspc},
        one would have $\mfElm[i](\wposElm) \rst_{\posElm} = \bot \neq \top$,
        since $\mfElm[i](\wposElm) \rst_{\posElm} \leq (\mfElm[i](\wposElm) +
        \posElm) \rst_{\posElm} = \mfElm[][\star](\posElm) \rst_{\posElm} =
        \bot$, which would in turn imply $\bot = \mfElm[i](\wposElm)
        \rst_{\posElm} < (\mfElm[i](\wposElm) + \posElm) \rst_{\posElm} =
        \mfElm[][\star](\posElm) \rst_{\posElm} = \bot$, due to
        Item~\ref{def:prgmsrspc(grt)} of the same definition, which is
        obviously impossible, being $<$ an irreflexive relation.
        Thus, as a consequence of Item~\ref{prp:naimsrstr(str)} of
        Proposition~\ref{prp:naimsrstr}, $\mfElm[i](\wposElm) + \posElm =
        \bot$, since $(\mfElm[i](\wposElm) + \posElm) \rst_{\posElm} = \bot$,
        which means that $\mfElm[][\star](\posElm) = \bot$, as required by the
        definition of the set $\MFSet[\bot]$.
        Hence, $\mfElm[][\star] \in \MFSet[\bot]$
      \item
        \textbf{[\ref{lmm:prgpls(qdm)}].}
        We first prove the inflationary property of the progress operator.
        To do this, consider the sequence of forfeit values $\evlElm[0],
        \ldots, \evlElm[k - 1] \in \EvlSet$ defined as follows: $\evlElm[i]
        \defeq \min[\posElm \in {\QSet[i]}] \befFun(\mfElm[i], \QSet[i],
        \posElm)$, for all indexes $i \in \numco{0}{k}$.
        In addition, let $\iota \colon \denot{\mfElm}[+] \setminus \QSet[k]
        \to \numco{0}{k}$ be the function associating each position $\posElm
        \in \denot{\mfElm}[+] \setminus \QSet[k]$ with the index
        $\iota(\posElm) \in \numco{0}{k}$ such that $\posElm \in
        \ESet[\iota(\posElm)]$.
        Due to the way the sequence of measure functions $\mfElm[0],
        \mfElm[1], \ldots$ is constructed, it is immediate to observe that,
        for all positions $\posElm \in \PosSet$ and indexes $i \in
        \numcc{0}{k}$, it holds that
        \[
          \text{ if } \posElm \in \denot{\mfElm}[+] \setminus \QSet[k] \text{
          and } \iota(\posElm) < i \text{ then } \mfElm[i](\posElm) =
          \mfElm[\iota(\posElm)](\posElm) + \evlElm[\iota(\posElm)] \text{
          else } \mfElm[i](\posElm) = \mfElm(\posElm).
          \tag{$\ast$}
        \]
        At this point, by induction on the index $i \in \numco{0}{k}$ and in
        that specific order, we can prove the following four auxiliary
        properties:
        \begin{inparaenum}[(a)]
          \item\label{lmm:prgpls(qdm:inf)}
            $\mfElm \sqsubseteq \mfElm[i]$;
          \item\label{lmm:prgpls(qdm:adj)}
            for each position $\posElm \in \denot{\mfElm}[+] \setminus \QSet[i
            + 1]$, there exists an adjacent $\wposElm \in \MovRel(\posElm)
            \setminus \QSet[i]$ such that $\mfElm[i + 1](\posElm) =
            \mfElm[i](\wposElm) + \posElm \neq \top$;
          \item\label{lmm:prgpls(qdm:mon)}
            if $i > 0$ then $\evlElm[i - 1] \leq \evlElm[i]$;
          \item\label{lmm:prgpls(qdm:bot)}
            $\mathbf{0} \leq \evlElm[i]$.
        \end{inparaenum}
        \begin{itemize}
          \item
            \textbf{[\ref{lmm:prgpls(qdm:inf)}].}
            If $i = 0$ then $\mfElm \sqsubseteq \mfElm[i]$, since $\mfElm[0] =
            \mfElm$.
            If $i > 0$, instead, by the Inductive
            Hypotheses~\ref{lmm:prgpls(qdm:inf)}
            and~\ref{lmm:prgpls(qdm:bot)}, it holds that $\mfElm \sqsubseteq
            \mfElm[i - 1]$ and $\mathbf{0} \leq \evlElm[i - 1]$.
            Moreover, by the previous Observation~($\ast$), it follows that
            $\mfElm[i](\posElm) \neq \mfElm[i - 1](\posElm)$ only if $\posElm
            \in \ESet[i - 1]$ and, in this case, $\mfElm[i](\posElm) =
            \mfElm[i - 1](\posElm) + \evlElm[i - 1]$, since $\iota(\posElm) =
            i - 1$.
            As a consequence, if $\mfElm[i](\posElm) \neq \mfElm[i -
            1](\posElm)$ then $\mfElm[i](\posElm) > \mfElm[i - 1](\posElm)$,
            thanks to Proposition~\ref{prp:naievlstr}, which implies, in
            general, that $\mfElm[i](\posElm) \geq \mfElm[i - 1](\posElm) \geq
            \mfElm(\posElm)$.
            Hence, $\mfElm \sqsubseteq \mfElm[i]$ holds true.
          \item
            \textbf{[\ref{lmm:prgpls(qdm:adj)}].}
            Let $\posElm \in \denot{\mfElm}[+] \setminus \QSet[i + 1]$.
            If $\posElm \in \denot{\mfElm}[+] \setminus \QSet[i]$ then $i >
            0$, since $\QSet[0] = \denot{\mfElm}[+]$.
            By the Inductive Hypothesis~\ref{lmm:prgpls(qdm:adj)}, there
            exists an adjacent $\wposElm \in \MovRel(\posElm) \setminus
            \QSet[i - 1]$ such that $\mfElm[i](\posElm) = \mfElm[i -
            1](\wposElm) + \posElm \neq \top$.
            Thanks to Observation~($\ast$), we have that $\mfElm[i +
            1](\posElm) = \mfElm[i](\posElm)$ and $\mfElm[i](\wposElm) =
            \mfElm[i - 1](\wposElm)$, since $\iota(\posElm) < i$ and
            $\iota(\wposElm) < i - 1$.
            Thus, $\mfElm[i + 1](\posElm) = \mfElm[i](\wposElm) + \posElm \neq
            \top$, as required.
            If $\posElm \not\in \denot{\mfElm}[+] \setminus \QSet[i]$,
            instead, it holds that $\posElm \in \ESet[i]$.
            As a consequence, by definition of the lift operator, there exists
            an adjacent $\wposElm \in \MovRel(\posElm) \setminus \QSet[i]$
            such that $\mfElm[i + 1](\posElm) = \mfElm[i](\wposElm) +
            \posElm$.
            Notice now that $\mfElm[i](\wposElm) \neq \top$.
            Indeed, if $\wposElm \in \denot{\mfElm}[+]$ then
            $\mfElm[i](\wposElm) \neq \top$ directly follows from the
            Inductive Hypothesis~\ref{lmm:prgpls(qdm:adj)}.
            Otherwise, $\wposElm \in \denot{\mfElm}[\bot]$, which implies that
            $\mfElm(\wposElm) \neq \top$, thanks to
            Item~\ref{def:msrspc(trn:top)} of Definition~\ref{def:msrspc}.
            Moreover, $\mfElm[i](\wposElm) = \mfElm(\wposElm)$, by
            Observation~($\ast$).
            To conclude, $\mfElm[i + 1](\posElm) = \mfElm[i](\wposElm) +
            \posElm \neq \top$, due to Item~\ref{def:msrspc(str:top)} of
            Definition~\ref{def:msrspc}.
          \item
            \textbf{[\ref{lmm:prgpls(qdm:mon)}].}
            Let $i > 0$ and $\posElm \in \ESet[i]$.
            Thanks to Observation~($\ast$), we have that $\mfElm[i +
            1](\posElm) = \mfElm[i](\posElm) + \evlElm[i]$ and, so,
            $\evlElm[i] = \mfElm[i + 1](\posElm) - \mfElm[i](\posElm)$, due to
            Proposition~\ref{prp:naievlstr}.
            To continue, we need to consider the following case analysis in
            three parts, which is partially based on ownership of the
            positions $\posElm$.
            \begin{itemize}
              \item
                \textbf{[$\posElm \in \PosSet[\PlrSym] \cap \escFun(\mfElm[i -
                1], \QSet[i - 1])$].}
                Since $\posElm \in \PosSet[\PlrSym] \cap \escFun(\mfElm[i -
                1], \QSet[i - 1])$, it holds that $\mfElm[i - 1](\uposElm) +
                \posElm < \mfElm[i - 1](\posElm)$, for all adjacents $\uposElm
                \in \MovRel(\posElm) \cap \QSet[i - 1]$.
                However, by Condition~\ref{def:regmsr(plr)} of
                Definition~\ref{def:regmsr}, there exists an adjacent
                $\wposElm \in \MovRel(\posElm)$ such that $\mfElm(\posElm)
                \leq \mfElm(\wposElm) + \posElm$.
                By Observation~($\ast$), $\mfElm[i - 1](\posElm) =
                \mfElm(\posElm)$, as $\iota(\posElm) = i$.
                Thus, $\mfElm[i - 1](\posElm) = \mfElm(\posElm) \leq
                \mfElm(\wposElm) + \posElm \leq \mfElm[i - 1](\wposElm) +
                \posElm$, thanks to the Inductive
                Hypothesis~\ref{lmm:prgpls(qdm:inf)} and
                Item~\ref{def:msrspc(str:mon)} of
                Definition~\ref{def:msrspc}.
                As a consequence, $\wposElm \not\in \QSet[i - 1]$ and, so,
                $\wposElm \not\in \QSet[i]$, since all adjacents of $\posElm$
                inside $\QSet[i - 1]$ falsify the inequality, as shown before.
                Moreover, $\posElm \not\in \ESet[i - 1]$, as $\posElm \in
                \ESet[i]$.
                Hence, $\evlElm[i - 1] < \befFun(\mfElm[i - 1], \QSet[i - 1],
                \posElm)$.
                At this point, the following inequalities hold:
                \begin{linenomath}
                \begin{align*}
                  \evlElm[i - 1]
                  & <
                    \befFun(\mfElm[i - 1], \QSet[i - 1], \posElm) \\
                  & =
                    {\min \set{ (\mfElm[i - 1](\uposElm) + \posElm) - \mfElm[i
                    - 1](\posElm) }{ \uposElm \in \MovRel(\posElm) \setminus
                    \QSet[i - 1] }} \\
                  & \leq
                    (\mfElm[i - 1](\wposElm) + \posElm) - \mfElm[i -
                    1](\posElm) \\
                  & =
                    (\mfElm[i](\wposElm) + \posElm) - \mfElm[i](\posElm) \\
                  & \leq
                    {\max \set{ \mfElm[i](\uposElm) + \posElm }{ \uposElm \in
                    \MovRel(\posElm) \setminus \QSet[i] }} -
                    \mfElm[i](\posElm) \\
                  & =
                    \liftFun(\mfElm[i], \ESet[i], \dual{\QSet[i]})(\posElm) -
                    \mfElm[i](\posElm) \\
                  & =
                    \mfElm[i + 1](\posElm) - \mfElm[i](\posElm) \\
                  & =
                    \evlElm[i].
                \end{align*}
                \end{linenomath}
                Notice that the first equality follows from the definition of
                the best escape forfeit function, while the second one is due
                to Observation~($\ast$).
                Indeed, $\mfElm[i - 1](\posElm) = \mfElm[i](\posElm)$ and
                $\mfElm[i - 1](\wposElm) = \mfElm[i](\wposElm)$, since
                $\iota(\posElm) = i$ and $\iota(\wposElm) < i - 1$.
                Finally, from the last inequality onward, we applied the
                definition of the lift operator.
              \item
                \textbf{[$\posElm \in \PosSet[\PlrSym] \setminus
                \escFun(\mfElm[i - 1], \QSet[i - 1])$].}
                Since $\posElm \in \PosSet[\PlrSym] \setminus \escFun(\mfElm[i
                - 1], \QSet[i - 1])$, there exists an adjacent $\wposElm \in
                \MovRel(\posElm) \cap \QSet[i - 1]$ such that $\mfElm[i -
                1](\posElm) \leq \mfElm[i - 1](\wposElm) + \posElm$.
                Moreover, by Observation~($\ast$), $\mfElm[i - 1](\posElm) =
                \mfElm[i](\posElm)$, as $\iota(\posElm) = i$, from which we
                derive $\mfElm[i](\posElm) \leq \mfElm[i - 1](\wposElm) +
                \posElm$ and, so, $(\mfElm[i - 1](\wposElm) + \posElm) -
                \mfElm[i](\posElm) \geq \mathbf{0}$, due to
                Proposition~\ref{prp:naievlstr}.
                Observe now that $\iota(\wposElm) = i - 1$, which implies
                $\mfElm[i](\wposElm) = \mfElm[i - 1](\wposElm) + \evlElm[i -
                1]$, again due to Observation~($\ast$).
                On the contrary, we would have had $\iota(\wposElm) > i - 1$,
                since $\wposElm \in \QSet[i - 1]$,  and, thus,
                $\mfElm[i](\posElm) \leq \mfElm[i - 1](\wposElm) + \posElm =
                \mfElm[i](\wposElm) + \posElm$, contradicting the fact that
                $\posElm \in \ESet[i] \subseteq \escFun(\mfElm[i], \QSet[i])$.
                Obviously, $\wposElm \in \ESet[i - 1]$ and, so, $\wposElm
                \not\in \QSet[i]$
                At this point, the following holds:
                \begin{linenomath}
                \begin{align*}
                  \evlElm[i - 1]
                  & \leq
                    ((\mfElm[i - 1](\wposElm) + \posElm) - \mfElm[i](\posElm))
                    + \evlElm[i - 1] \\
                  & =
                    ((\mfElm[i - 1](\wposElm) + \evlElm[i - 1]) + \posElm) -
                    \mfElm[i](\posElm) \\
                  & =
                    (\mfElm[i](\wposElm) + \posElm) - \mfElm[i](\posElm) \\
                  & \leq
                    {\max \set{ \mfElm[i](\uposElm) + \posElm }{ \uposElm \in
                    \MovRel(\posElm) \setminus \QSet[i] }} -
                    \mfElm[i](\posElm) \\
                  & =
                    \liftFun(\mfElm[i], \ESet[i], \dual{\QSet[i]})(\posElm) -
                    \mfElm[i](\posElm) \\
                  & =
                    \mfElm[i + 1](\posElm) - \mfElm[i](\posElm) \\
                  & =
                    \evlElm[i].
                \end{align*}
                \end{linenomath}
                Notice that the first two derivation steps follow from the
                Abelian group properties of the evaluation structure stated in
                Proposition~\ref{prp:naievlstr} and from
                Item~\ref{prp:naimsrstr(com)} of
                Proposition~\ref{prp:naimsrstr}.
                Moreover, from the last inequality onward, we applied the
                definitions of the lift operator and forfeit values.
              \item
                \textbf{[$\posElm \in \PosSet[\OppSym]$].}
                Since $\posElm \in \ESet[i]$, it holds that $\posElm \in
                \denot{\mfElm}[+] \setminus \QSet[i + 1]$.
                By the Inductive Hypothesis~\ref{lmm:prgpls(qdm:adj)}, there
                exists an adjacent $\wposElm \in \MovRel(\posElm) \setminus
                \QSet[i]$ such that $\mfElm[i + 1](\posElm) =
                \mfElm[i](\wposElm) + \posElm$, which implies $\evlElm[i] =
                (\mfElm[i](\wposElm) + \posElm) - \mfElm[i](\posElm)$.
                Now, consider the nested case analysis on the membership of
                the position $\wposElm$ \wrt $\ESet[i - 1]$.
                \begin{itemize}
                  \item
                    \textbf{[$\wposElm \not\in \ESet[i - 1]$].}
                    Since $\wposElm \not\in \ESet[i - 1]$ and $\wposElm
                    \not\in \QSet[i]$, it holds that $\wposElm \not\in \QSet[i
                    - 1]$.
                    This means that $\wposElm \in \MovRel(\posElm) \setminus
                    \QSet[i - 1] \neq \emptyset$, which in turn implies that
                    $\posElm \in \escFun(\mfElm[i - 1], \QSet[i - 1])$, since
                    $\posElm \in \PosSet[\OppSym]$.
                    Moreover, $\posElm \not\in \ESet[i - 1]$, as $\posElm \in
                    \ESet[i]$.
                    Thus, $\evlElm[i - 1] < \befFun(\mfElm[i - 1], \QSet[i -
                    1], \posElm)$.
                    At this point, the following inequalities hold:
                    \begin{linenomath}
                    \begin{align*}
                      \evlElm[i - 1]
                      & <
                        \befFun(\mfElm[i - 1], \QSet[i - 1], \posElm) \\
                      & =
                        {\min \set{ (\mfElm[i - 1](\uposElm) + \posElm) -
                        \mfElm[i - 1](\posElm) }{ \uposElm \in
                        \MovRel(\posElm) \setminus \QSet[i - 1] }} \\
                      & \leq
                        (\mfElm[i - 1](\wposElm) + \posElm) - \mfElm[i -
                        1](\posElm) \\
                      & =
                        (\mfElm[i](\wposElm) + \posElm) - \mfElm[i](\posElm)
                        \\
                      & =
                        \evlElm[i].
                    \end{align*}
                    \end{linenomath}
                    Notice that the first equality follows from the definition
                    of the best escape forfeit function, while the second one
                    is due to Observation~($\ast$).
                    Indeed, $\mfElm[i - 1](\posElm) = \mfElm[i](\posElm)$ and
                    $\mfElm[i - 1](\wposElm) = \mfElm[i](\wposElm)$, since
                    $\iota(\posElm) = i$ and $\iota(\wposElm) < i - 1$.
                  \item
                    \textbf{[$\wposElm \in \ESet[i - 1]$].}
                    Since $\wposElm \in \ESet[i - 1]$, by
                    Observation~($\ast$), it follows that $\mfElm[i](\wposElm)
                    = \mfElm[i - 1](\wposElm) + \evlElm[i - 1]$, as
                    $\iota(\wposElm) = i - 1$.
                    Similarly, $\mfElm[i](\posElm) = \mfElm(\posElm)$, as
                    $\iota(\posElm) = i$.
                    Now, by Condition~\ref{def:regmsr(opp)} of
                    Definition~\ref{def:regmsr}, $\mfElm[i](\posElm) =
                    \mfElm(\posElm) \leq \mfElm(\uposElm) + \posElm \leq
                    \mfElm[i - 1](\uposElm) + \posElm$, for all adjacents
                    $\uposElm \in \MovRel(\posElm)$, thanks to the Inductive
                    Hypothesis~\ref{lmm:prgpls(qdm:inf)} and
                    Item~\ref{def:msrspc(str:mon)} of
                    Definition~\ref{def:msrspc}.
                    As an immediate consequence, $\mfElm[i](\posElm) \leq
                    \mfElm[i - 1](\wposElm) + \posElm$, \ie, $(\mfElm[i -
                    1](\wposElm) + \posElm) - \mfElm[i](\posElm) \geq
                    \mathbf{0}$.
                    At this point, the following holds:
                    \begin{linenomath}
                    \begin{align*}
                      \evlElm[i - 1]
                      & \leq
                        ((\mfElm[i - 1](\wposElm) + \posElm) -
                        \mfElm[i](\posElm)) + \evlElm[i - 1] \\
                      & =
                        ((\mfElm[i - 1](\wposElm) + \evlElm[i - 1]) + \posElm)
                        - \mfElm[i](\posElm) \\
                      & =
                        (\mfElm[i](\wposElm) + \posElm) - \mfElm[i](\posElm)
                        \\
                      & =
                        \evlElm[i].
                    \end{align*}
                    \end{linenomath}
                    Notice that the first two derivation steps follow from the
                    Abelian group properties of the evaluation structure
                    stated in Proposition~\ref{prp:naievlstr} and from
                    Item~\ref{prp:naimsrstr(com)} of
                    Proposition~\ref{prp:naimsrstr}.
                \end{itemize}
            \end{itemize}
            Summing up, in all cases we have $\evlElm[i - 1] \leq \evlElm[i]$.
          \item
            \textbf{[\ref{lmm:prgpls(qdm:bot)}].}
            If $i > 0$, by the Inductive Hypotheses~\ref{lmm:prgpls(qdm:bot)}
            and~\ref{lmm:prgpls(qdm:mon)}, it holds that $\mathbf{0} \leq
            \evlElm[i - 1]$ and $\evlElm[i - 1] \leq \evlElm[i]$.
            Hence, $\mathbf{0} \leq \evlElm[i]$.
            If $i = 0$, instead, let $\posElm \in \ESet[0]$.
            Thanks to Observation~($\ast$), one has that $\mfElm[1](\posElm) =
            \mfElm[0](\posElm) + \evlElm[0]$ and, so, $\evlElm[0] =
            \mfElm[1](\posElm) - \mfElm[0](\posElm)$, due to
            Proposition~\ref{prp:naievlstr}.
            To continue, we need to consider the following case analysis on
            the ownership of the position $\posElm$.
            \begin{itemize}
              \item
                \textbf{[$\posElm \in \PosSet[\PlrSym]$].}
                By Condition~\ref{def:regmsr(plr)} of
                Definition~\ref{def:regmsr}, there exists an adjacent
                $\wposElm \in \MovRel(\posElm)$ such that $\mfElm[0](\posElm)
                \leq \mfElm[0](\wposElm) + \posElm$, as $\mfElm[0] = \mfElm$.
                Since $\posElm \in \ESet[0]$, thanks to the definitions of
                both the best escape forfeit and escape functions, it holds
                that $\mfElm[0](\uposElm) + \posElm < \mfElm[0](\posElm)$, for
                all adjacents $\uposElm \in \MovRel(\posElm) \cap \QSet[0]$.
                Hence, as an immediate consequence, $\wposElm \not\in
                \QSet[0]$.
                Now, by definition of the lift operator, $\mfElm[1](\posElm) =
                \max \set{ \mfElm[0](\uposElm) + \posElm }{ \uposElm \in
                \MovRel(\posElm) \setminus \QSet[0] } \geq \mfElm[0](\wposElm)
                + \posElm \geq \mfElm[0](\posElm)$, from which it follows that
                $\evlElm[0] = \mfElm[1](\posElm) - \mfElm[0](\posElm) \geq
                \mathbf{0}$, due to Proposition~\ref{prp:naievlstr}.
              \item
                \textbf{[$\posElm \in \PosSet[\OppSym]$].}
                By the Inductive Hypothesis~\ref{lmm:prgpls(qdm:adj)}, there
                exists an adjacent $\wposElm \in \MovRel(\posElm) \setminus
                \QSet[0]$ such that $\mfElm[1](\posElm) = \mfElm[0](\wposElm)
                + \posElm$, which implies $\evlElm[0] = (\mfElm[0](\wposElm) +
                \posElm) - \mfElm[0](\posElm)$.
                Moreover, by Condition~\ref{def:regmsr(opp)} of
                Definition~\ref{def:regmsr}, $\mfElm[0](\posElm) \leq
                \mfElm[0](\uposElm) + \posElm$, for all adjacents $\uposElm
                \in \MovRel(\posElm)$, since $\mfElm[0] = \mfElm$.
                Thus, as an obvious consequence, $\mfElm[0](\wposElm) +
                \posElm \geq \mfElm[0](\posElm)$, which immediately implies
                $\evlElm[0] = (\mfElm[0](\wposElm) + \posElm) -
                \mfElm[0](\posElm) \geq \mathbf{0}$, as required, again due
                to Proposition~\ref{prp:naievlstr}.
            \end{itemize}
            Summing up, in both cases we have $\mathbf{0} \leq \evlElm[0]$.
        \end{itemize}
        We now have the necessary tool to show that the progress operator is
        inflationary.
        Indeed, by Property~\ref{lmm:prgpls(qdm:inf)}, $\mfElm \sqsubseteq
        \mfElm[k]$.
        Moreover, $\mfElm[][\star]$ and $\mfElm[k]$ differ only on positions
        $\posElm \in \denot{\mfElm[][\star]}[+]$ such that
        $\mfElm[][\star](\posElm) = \top$.
        Hence, it easily follows that $\mfElm \sqsubseteq \mfElm[][\star]$,
        thanks to Item~\ref{def:msrspc(ord)} of Definition~\ref{def:msrspc}.
        Observe also that $\denot{\mfElm}[+] \subseteq \denot{\mfElm[k]}[+]
        \subseteq \denot{\mfElm[][\star]}[+]$ and $\denot{\mfElm}[\PlrSym]
        \subseteq \denot{\mfElm[k]}[\PlrSym] \subseteq
        \denot{\mfElm[][\star]}[\PlrSym]$, due to
        Item~\ref{def:msrspc(trn:mon)} of the same definition.

        \hspace{1em}
        It remains to prove that $\mfElm[][\star]$ is a \qdmf.
        First notice that, for any position $\posElm \in \PosSet$, if
        $\mfElm[k](\posElm) \neq \mfElm(\posElm)$ then $\posElm \in
        \denot{\mfElm}[+] \setminus \QSet[k]$, due to Observation~($\ast$).
        Thus, by Property~\ref{lmm:prgpls(qdm:adj)}, for such a position
        $\posElm$, there exists an adjacent $\wposElm \in \MovRel(\posElm)
        \setminus \QSet[k - 1] \subseteq \MovRel(\posElm) \setminus \QSet[k]$
        such that $\mfElm[k](\posElm) = \mfElm[k - 1](\wposElm) + \posElm \neq
        \top$.
        Since $\wposElm \not\in \QSet[k - 1]$, it holds that $\iota(\wposElm)
        < k - 1$, so, $\mfElm[k](\wposElm) = \mfElm[k - 1](\wposElm)$, again
        due to Observation~($\ast$).
        Hence, for all positions $\posElm \in \PosSet$, either one of the
        following two possibilities holds:
        \[
          \begin{rcases}
            \posElm \!\not\in\! \denot{\mfElm}[+] \!\!\setminus\! \QSet[k]
            \text{ and } \mfElm[k](\posElm) \!=\! \mfElm(\posElm); \\
            \posElm \!\in\! \denot{\mfElm}[+] \!\!\setminus\! \QSet[k] \text{
            and there exists an adjacent } \wposElm \!\in\! \MovRel(\posElm)
            \!\setminus\! \QSet[k] \text{ such that } \mfElm[k](\posElm) \!=\!
            \mfElm[k](\wposElm) \!+\! \posElm \!\neq\! \top.
          \end{rcases}
          \tag{$\sharp$}
        \]
        From this, we easily derive that $\denot{\mfElm}[+] =
        \denot{\mfElm[k]}[+] = \denot{\mfElm[][\star]}[+]$ and
        $\denot{\mfElm}[\PlrSym] = \denot{\mfElm[k]}[\PlrSym] \subseteq
        \denot{\mfElm[][\star]}[\PlrSym] = \QSet[k]$.
        Indeed, the only positions that can change their measure are those in
        $\denot{\mfElm}[+]$ and they are not set to $\top$ in $\mfElm[k]$.
        Moreover, $\denot{\mfElm}[\PlrSym] \subseteq \denot{\mfElm}[+]$, so,
        $\denot{\mfElm[k]}[\PlrSym] \subseteq \denot{\mfElm}[+]$.
        Now, $(\denot{\mfElm}[+] \setminus \QSet[k]) \cap
        \denot{\mfElm[k]}[\PlrSym] = \emptyset$, hence,
        $\denot{\mfElm[k]}[\PlrSym] \subseteq \QSet[k]$.
        Finally, $\denot{\mfElm[][\star]}[\PlrSym] = \QSet[k]$, since, by
        construction, $\denot{\mfElm[][\star]}[\PlrSym] =
        \denot{\mfElm[k]}[\PlrSym] \cup \QSet[k]$.

        \hspace{1em}
        We can now show that $\mfElm[][\star]$ is a regress measure, \ie, that
        every position $\posElm \in \denot{\mfElm[][\star]}[+] \setminus
        \denot{\mfElm[][\star]}[\PlrSym]$ satisfies the suitable condition of
        Definition~\ref{def:regmsr}.
        We do this, via a case analysis on the ownership of $\posElm$.
        \begin{itemize}
          \item
            \textbf{[$\posElm \in \PosSet[\PlrSym]$].}
            Since $\denot{\mfElm[][\star]}[+] \setminus
            \denot{\mfElm[][\star]}[\PlrSym] = \denot{\mfElm}[+] \setminus
            \QSet[k]$, by Observation~($\sharp$), there exists an adjacent
            $\wposElm \in \MovRel(\posElm) \setminus \QSet[k]$ such
            that $\mfElm[k](\posElm) = \mfElm[k](\wposElm) + \posElm$.
            Obviously, $\mfElm[][\star](\posElm) = \mfElm[k](\posElm)$ and
            $\mfElm[][\star](\wposElm) = \mfElm[k](\wposElm)$, since $\posElm,
            \wposElm \not\in \QSet[k]$.
            Thus, $\mfElm[][\star](\posElm) = \mfElm[][\star](\wposElm) +
            \posElm$, as required by Condition~\ref{def:regmsr(plr)}.
          \item
            \textbf{[$\posElm \in \PosSet[\OppSym]$].}
            Suppose by contradiction that $\mfElm[][\star]$ does not satisfy
            Condition~\ref{def:regmsr(opp)} on $\posElm$.
            Then, there exists one of its adjacents $\wposElm \in
            \MovRel(\posElm)$ such that $\mfElm[][\star](\wposElm) + \posElm <
            \mfElm[][\star](\posElm)$.
            Obviously, $\mfElm[][\star](\wposElm) + \posElm \neq \top$, so,
            $\mfElm[][\star](\wposElm) \neq \top$, due to
            Item~\ref{def:msrspc(str:top)} of Definition~\ref{def:msrspc},
            which in turn implies that $\wposElm \not\in \QSet[k]$.
            As a consequence, $\posElm \not\in \QSet[k]$ too, since we would
            have had, otherwise, $\MovRel(\posElm) \subseteq \QSet[k]$, as
            $\escFun(\mfElm[k], \QSet[k]) = \emptyset$.
            Thus, $\posElm \in \denot{\mfElm}[+] \setminus \QSet[k]$, from
            which it follows that $\mfElm[][\star](\posElm) =
            \mfElm[k][](\posElm) = \mfElm[\iota(\posElm) + 1](\posElm) =
            \mfElm[\iota(\posElm)](\posElm) + \evlElm[\iota(\posElm)] =
            \mfElm(\posElm) + \evlElm[\iota(\posElm)]$, due to
            Observation~($\ast$).
            To proceed, we now need the following nested case analysis, which
            allows to prove that $\wposElm \not\in \QSet[\iota(\posElm)]$ and
            $\mfElm[][\star](\wposElm) = \mfElm[\iota(\posElm)](\wposElm)$.
            \begin{itemize}
              \item
                \textbf{[$\wposElm \in \denot{\mfElm[][\star]}[\bot]$].}
                Notice that $\wposElm \not\in \QSet[0] = \denot{\mfElm}[+] =
                \denot{\mfElm[][\star]}[+]$ and, so, $\wposElm \not\in
                \QSet[\iota(\posElm)]$, as $\QSet[\iota(\posElm)] \subseteq
                \QSet[0]$.
                Moreover, $\mfElm[][\star](\wposElm) = \mfElm[k][](\wposElm) =
                \mfElm[\iota(\posElm)](\wposElm)$, due to
                Observations~($\ast$).
              \item
                \textbf{[$\wposElm \in \denot{\mfElm[][\star]}[+]$].}
                By Observations~($\ast$), it holds that
                $\mfElm[][\star](\wposElm) = \mfElm[k][](\wposElm) =
                \mfElm[\iota(\wposElm)](\wposElm) + \evlElm[\iota(\wposElm)] =
                \mfElm(\wposElm) + \evlElm[\iota(\wposElm)]$.
                Now, by substituting in $\mfElm[][\star](\wposElm) + \posElm <
                \mfElm[][\star](\posElm)$ both $\mfElm[][\star](\wposElm)$
                and $\mfElm[][\star](\posElm)$ with $\mfElm(\wposElm) +
                \evlElm[\iota(\wposElm)]$ and $\mfElm(\posElm) +
                \evlElm[\iota(\posElm)]$, respectively, and exploiting
                Proposition~\ref{prp:naievlstr} and
                Item~\ref{prp:naimsrstr(com)} of
                Proposition~\ref{prp:naimsrstr}, one can obtain
                $(\mfElm(\wposElm) + \posElm) - \mfElm(\posElm) <
                \evlElm[\iota(\posElm)] - \evlElm[\iota(\wposElm)]$.
                Since $\mfElm$ is a \qdmf and, so, a regress measure, we have
                that it satisfies Condition~\ref{def:regmsr(opp)} on
                $\posElm$, \ie, $\mfElm(\posElm) \leq \mfElm(\wposElm) +
                \posElm$, which implies $(\mfElm(\wposElm) + \posElm) -
                \mfElm(\posElm) \geq \mathbf{0}$.
                Hence, $\evlElm[\iota(\posElm)] - \evlElm[\iota(\wposElm)] >
                \mathbf{0}$ and, so, $\iota(\posElm) > \iota(\wposElm)$,
                thanks to Property~\ref{lmm:prgpls(qdm:mon)}.
                From this we can derive that $\wposElm \not\in
                \QSet[\iota(\posElm)]$ and $\mfElm[][\star](\wposElm) =
                \mfElm[k][](\wposElm) = \mfElm[\iota(\posElm)](\wposElm)$, due
                to Observations~($\ast$).
            \end{itemize}
            At this point, the following impossible inequality chain
            should hold:
            \begin{linenomath}
            \begin{align*}
              \mfElm[][\star](\posElm)
              & =
                \mfElm[\iota(\posElm) + 1](\posElm) \\
              & =
                \liftFun(\mfElm[\iota(\posElm)], \ESet[\iota(\posElm)],
                \dual{\QSet[\iota(\posElm)]})(\posElm) \\
              & =
                \min {\set{ \mfElm[\iota(\posElm)](\uposElm) + \posElm }{
                \uposElm \in \MovRel(\posElm) \setminus
                \QSet[\iota(\posElm)] }} \\
              & \leq
                \mfElm[\iota(\posElm)](\wposElm) + \posElm \\
              & =
                \mfElm[][\star](\wposElm) + \posElm \\
              & <
                \mfElm[][\star](\posElm).
            \end{align*}
            \end{linenomath}
        \end{itemize}

        \hspace{1em}
        Finally, we can conclude the proof of this item by showing that
        $\denot{\mfElm[][\star]}[\PlrSym]$, which we now know to be equal to
        $\QSet[k]$, is a $\PlrSym$-dominion.
        Let $\strElm[\PlrSym] \in \StrSet[\PlrSym]$ be a $\mfElm$-coherent
        $\PlrSym$-strategy winning on $\denot{\mfElm}[\PlrSym]$ such that
        $\strElm[\PlrSym](\posElm) \in \QSet[k]$, for all $\PlrSym$-positions
        $\posElm \in \QSet[k] \cap \PosSet[\PlrSym]$.
        Such a strategy surely exists, since $\mfElm$ is, by hypothesis, a
        \qdmf and $\QSet[k]$ is closed, \ie, $\escFun(\mfElm, \QSet[k]) =
        \escFun(\mfElm[k], \QSet[k]) = \emptyset$.
        To state the first equality we exploited the fact that
        $\mfElm[k](\posElm) = \mfElm(\posElm)$, for all positions $\posElm \in
        \QSet[k]$, due to Observation~($\sharp$).
        Now, by Theorem~\ref{thm:qsidommsri}, $\denot{\mfElm}[+]$ is a weak
        quasi $\PlrSym$-dominion, for which $\strElm[\PlrSym] \downarrow
        \denot{\mfElm}[+] \in \StrSet[\PlrSym](\denot{\mfElm}[+])$ is a
        $\PlrSym$-witness.
        Hence, $\QSet[k]$ and, so, $\denot{\mfElm[][\star]}[\PlrSym]$, is a
        $\PlrSym$-dominion, thanks to Corollary~\ref{cor:qsidom}.
      \item
        \textbf{[\ref{lmm:prgpls(smf)}].}
        To show that $\mfElm[][\star]$ is a \smf, whenever $\mfElm$ is a \smf,
        we first prove the following statement: $\mfElm[i](\posElm) \in
        \SMsrSet(\posElm, \denot{\mfElm}[+] \setminus \QSet[i]) \setminus \{
        \top \}$, for all indexes $i \in \numcc{0}{k}$ and positions $\posElm
        \in \PosSet \setminus \QSet[i]$.

        \hspace{1em}
        The proof proceeds by induction on $i$.
        The base case $i = 0$ trivially follows from the hypothesis, since
        $\mfElm[0] = \mfElm$.
        Indeed, it holds that $\PosSet \setminus \QSet[0] =
        \denot{\mfElm}[\bot]$ and $\mfElm[i](\posElm) = \bot$, for all
        positions $\posElm \in \denot{\mfElm}[\bot]$, due to
        Proposition~\ref{prp:botden}.
        For the inductive case $i > 0$, let $\posElm \in \PosSet \setminus
        \QSet[i]$ and assume that $\mfElm[i - 1](\wposElm) \in
        \SMsrSet(\wposElm, \denot{\mfElm}[+] \setminus \QSet[i - 1]) \setminus
        \{ \top \}$, for all positions $\wposElm \in \PosSet \setminus \QSet[i
        - 1]$.
        If $\posElm \in \PosSet \setminus \QSet[i - 1]$, there is nothing more
        to prove, since $\mfElm[i](\posElm) = \mfElm[i - 1](\posElm)$ and,
        by inductive hypothesis, $\mfElm[i - 1](\posElm) \in \SMsrSet(\posElm,
        \denot{\mfElm}[+] \setminus \QSet[i - 1]) \setminus \{ \top \}
        \subseteq \SMsrSet(\posElm, \denot{\mfElm}[+] \setminus \QSet[i])
        \setminus \{ \top \}$, being $\denot{\mfElm}[+] \setminus \QSet[i - 1]
        \subseteq \denot{\mfElm}[+] \setminus \QSet[i]$.
        Therefore, consider the case $\posElm \not\in \PosSet \setminus
        \QSet[i - 1]$, which implies that $\posElm \in \ESet[i - 1] \subseteq
        \QSet[i - 1]$.
        By definition of the lift operator, there exists an adjacent $\wposElm
        \in \MovRel(\posElm) \setminus \QSet[i - 1] \subseteq \PosSet
        \setminus \QSet[i - 1]$ such that  $\mfElm[i](\posElm) = \mfElm[i -
        1](\wposElm) + \posElm$.
        By the inductive hypothesis, $\mfElm[i - 1](\wposElm) \neq \top$.
        Thus, by Item~\ref{def:msrspc(str:top)} of
        Definition~\ref{def:msrspc}, it holds that $\mfElm[i](\posElm) \neq
        \top$.
        Moreover, there exists a simple path $\pthElm \in \SPthSet(\wposElm,
        \denot{\mfElm}[+] \setminus \QSet[i - 1])$ such that $\mfElm[i -
        1](\wposElm) = \msrFun(\pthElm)$.
        Obviously, $\mfElm[i](\posElm) = \mfElm[i - 1](\wposElm) + \posElm =
        \msrFun(\pthElm) + \posElm = \msrFun(\posElm \cdot \pthElm) \neq
        \top$.
        Now, it is quite clear that $\posElm \cdot \pthElm$ is a simple path
        passing through positions in $\{ \posElm \} \cup \denot{\mfElm}[+]
        \setminus \QSet[i - 1]$, \ie, $\posElm \cdot \pthElm \in
        \SPthSet(\posElm, \denot{\mfElm}[+] \setminus \QSet[i])$, since
        $\posElm \in \QSet[i - 1]$.
        Hence, $\mfElm[i](\posElm) \in \SMsrSet(\posElm, \denot{\mfElm}[+]
        \setminus \QSet[i]) \setminus \{ \top \}$.
        This conclude the inductive proof.

        \hspace{1em}
        At this point, it immediately follows, from what we have just proved,
        that $\mfElm[k]$ is \smf.
        In addition, $\mfElm[][\star]$ potentially differs from $\mfElm[k]$
        only on positions $\posElm \in \denot{\mfElm[][\star]}[+]$ such that
        $\mfElm[][\star](\posElm) = \top$.
        Consequently, $\mfElm[][\star]$ is a \smf as well.
      \item
        \textbf{[\ref{lmm:prgpls(fix)}].}
        By hypothesis, $\mfElm[][\star] = \mfElm$, which implies that
        $\mfElm[i] = \mfElm$, for all $i \in \SetN$, due to the way the
        sequence $\mfElm[0], \mfElm[1], \ldots$ is constructed.
        Now, let us consider an arbitrary position $\posElm \in
        \denot{\mfElm}[+]$.
        Due to the definition of the sequence $\QSet[0], \QSet[1], \ldots$, it
        obviously holds that either $\posElm \in \QSet[k]$ or there is a
        unique index $i \in \numco{0}{k}$ such that $\posElm \in \QSet[i]
        \setminus \QSet[i + 1]$, \ie, $\posElm \in \ESet[i]$.
        In the first case, we have $\mfElm[](\posElm) = \top$, due to the
        assignment $\mfElm[][\star] = {\mfElm[k]}[\QSet[k] \mapsto \top]$.
        Therefore, $\posElm$ is a progress position, \ie, it satisfies both
        conditions of Definition\ref{def:prgmsr}.
        In the other case, the proof proceeds by a case analysis on the
        ownership of the position $\posElm$ itself.
        \begin{itemize}
          \item
            \textbf{[$\posElm \in \PosSet[\PlrSym]$].}
            First recall that $\ESet[i] = \bepFun(\mfElm, \QSet[i]) \subseteq
            \escFun(\mfElm, \QSet[i])$.
            Thus, due to the definition of the escape function, we have that
            $\mfElm[](\wposElm) + \posElm < \mfElm[](\posElm)$, for all
            positions $\wposElm \in \MovRel(\posElm) \cap \QSet[i]$.
            Now, by definition of the lift operator, we have that
            $\mfElm[](\wposElm) + \posElm \leq \max \set{ \mfElm[](\wposElm) +
            \posElm }{ \wposElm \in \MovRel(\posElm) \cap \dual{\QSet[i]} } =
            \mfElm[](\posElm)$, for all adjacents $\wposElm \in
            \MovRel(\posElm)\cap \dual{\QSet[i]}$ of $\posElm$.
            Thus, $\mfElm[](\wposElm) + \posElm \leq \mfElm[](\posElm)$, for
            all positions $\wposElm \in \MovRel(\posElm)$, as required by
            Condition~\ref{def:prgmsr(plr)} of Definition~\ref{def:prgmsr} on
            $\denot{\mfElm}[+]$.
          \item
            \textbf{[$\posElm \in \PosSet[\OppSym]$].}
            Again by definition of the lift operator, we have that
            $\mfElm[](\wposElm) + \posElm \leq \min \set{
            \mfElm[](\wposElm) + \posElm }{ \wposElm \in \MovRel(\posElm)
            \cap \dual{\QSet[i]} } = \mfElm[](\posElm)$, for some adjacent
            $\wposElm \in \MovRel(\posElm) \cap \dual{\QSet[i]} \subseteq
            \MovRel(\posElm)$ of $\posElm$.
            Hence, Condition~\ref{def:prgmsr(opp)} of
            Definition~\ref{def:prgmsr} is satisfied on $\denot{\mfElm}[+]$ as
            well.
            \qed
        \end{itemize}
    \end{itemize}
    \renewcommand{\qed}{}
  \end{proof}

\end{section}

% End of file AppendixC.tex

  % \input{AppendixD}

\end{document}

% End of file Article.tex